\documentclass[11pt]{article}
\usepackage[T1]{fontenc}
\usepackage[utf8]{inputenc}
\usepackage{lmodern}
\usepackage[margin=1in]{geometry}
\usepackage{amsmath,amssymb,amsthm,mathtools,bm}
\usepackage{microtype,booktabs,array,longtable,enumitem,float,multicol}
\usepackage[numbers,sort&compress]{natbib}
\usepackage[dvipsnames]{xcolor}
\usepackage[colorlinks=true,linkcolor=MidnightBlue,citecolor=MidnightBlue,urlcolor=MidnightBlue]{hyperref}
\usepackage[capitalise,noabbrev]{cleveref}
\usepackage[noend]{algorithm}
\usepackage[noend]{algpseudocode}

\hypersetup{pdftitle={The Koml\'os Problem in Matrix-Multiplication Time},pdfsubject={Algorithmic Koml\'os discrepancy, batched spectral signing, and bit complexity}}
\setlist{topsep=4pt,itemsep=3pt,parsep=1pt}
\allowdisplaybreaks[2]
\numberwithin{equation}{section}
\newtheorem{theorem}{Theorem}[section]
\newtheorem{lemma}[theorem]{Lemma}
\newtheorem{proposition}[theorem]{Proposition}

\theoremstyle{definition}

\newtheorem{remark}[theorem]{Remark}

\newfloat{algorithm}{tbp}{loa}
\floatname{algorithm}{Algorithm}
\newcommand{\R}{\mathbb R}
\newcommand{\Q}{\mathbb Q}

\newcommand{\E}{\mathbb E}
\newcommand{\eps}{\varepsilon}
\newcommand{\ones}{\mathbf 1}
\newcommand{\norm}[1]{\left\lVert #1\right\rVert}

\newcommand{\tO}{\widetilde O}
\newcommand{\op}{\mathrm{op}}

\DeclareMathOperator{\diag}{diag}
\DeclareMathOperator{\tr}{tr}
\DeclareMathOperator{\rank}{rank}

\DeclareMathOperator{\disc}{disc}

\title{The Koml\'os Problem in Matrix-Multiplication Time}
\author{Emile Anand, Peter Chen\\\\ Georgia Institute of Technology}
\date{}
\begin{document}
\maketitle
\begin{abstract}
\noindent We give a randomized algorithm that, for an $m\times n$ matrix whose
columns have Euclidean norm at most one, finds a signing with discrepancy
below $99$ in expected $\widetilde O(L_{\mathrm{in}}+n^\omega)$ bit
operations, where $L_{\mathrm{in}}$ is the input encoding length and $\omega$
is the exponent of matrix multiplication. The previous fastest algorithm with a
constant discrepancy bound, due to Li \cite{Li2026}, is deterministic and uses $O(mn+n^{\omega+2}\log^3 n)$ arithmetic
operations. We build on the spectral signing framework of Guo, Fang, and Lu \cite{GFL2026algorithm}
and of Li \cite{Li2026}, which rounds a single fractional coloring while one spectral
potential controls every row. Li's potential tracks a top eigenvalue, and the
gap below that eigenvalue, which can be of order $1/n$, limits the length of
every step. We replace it with a resolvent potential that is smooth without
any gap condition, and we draw step signs at random, so that the potential
only needs to be concave while progress is measured separately. Our key insight is to reuse a single subspace of directions throughout a batch of steps, thereby using only one spectral computation, so that the more
expensive linear algebraic operations are paid for only polylogarithmically often at each
scale. Within a batch, we sample directions from a Gaussian completion matrix
whose inverse we maintain under rank-one row replacements \cite{Brand2021}. Finally, we use a finite-precision
analysis to prove stability of the algorithm and attain the bit-complexity bound.\looseness=-1
\end{abstract}
 
\newcommand{\peter}[1]{\textcolor{red}{\textbf{Peter says:} #1}}
 
\section{Introduction}
\label{sec:introduction}
 
The Koml\'os problem asks how well we can balance vectors in the Euclidean unit ball by assigning signs of $\pm 1$ \cite{spencer1987ten}. For a matrix $\mathbf A=[a_1\ \cdots\ a_n]\in\R^{m\times n}$, its signing discrepancy is
\begin{equation}
 \disc(\mathbf A)=\min_{\eps\in\{-1,1\}^n}\norm{\mathbf A\eps}_\infty.
 \label{eq:problem}
\end{equation}
Given the promise that $\sum_{i=1}^m a_{ij}^2\le1$ for all $j\in[n]$, the Koml\'os problem asks whether an absolute constant bounds $\disc(\mathbf A)$ independently of both dimensions. In the Beck--Fiala setting \cite{BeckFiala1981}, each element of a universe belongs to at most $t$ sets, so dividing the incidence matrix by $\sqrt t$ normalizes its columns to Euclidean norm at most $1$, and a constant Koml\'os bound gives an imbalance of $O(\sqrt t)$ in every set. Vector balancing also has natural applications to rounding: the theorem of Lov\'asz, Spencer, and Vesztergombi \cite{LovaszSpencerVesztergombi1986} rounds any $x\in[0,1]^n$ to some $y\in\{0,1\}^n$ with $\|\mathbf A(y-x)\|_\infty=O(1)$, so a bound on each column's Euclidean norm controls the error in every linear measurement, independently of the number of variables and measurements. Furthermore vector balancing has seen a number of applications in machine learning and statistics such as coresets for large datasets \citep{liberty2026nearlyoptimalattentioncoresets}, attention approximation \citep{kochetkova2026streamingattentionapproximationdiscrepancy}, covariate balance in experimental design \citep{harshaw2023balancingcovariatesrandomizedexperiments}, and distribution thinning \citep{carrell2026lowrankthinning} where balancing a set of vectors in two parts allows one to approximate the whole by some carefully chosen subset. The Koml\'os problem was a longstanding conjecture in discrepancy theory until it was recently resolved by Guo, Fang, and Lu \cite{GFL2026exist}. In this work, we study the time needed to find a signing.
 
Algorithmic discrepancy theory offers several starting points. The earliest algorithms, due to Bansal \cite{Bansal2010}, used semidefinite programs to correlate the increments of a random walk and obtained algorithmic counterparts of entropy-method bounds. Lovett and Meka \cite{LovettMeka2015} replaced this optimization step by a walk in the subspace orthogonal to nearly tight constraints; their partial-coloring algorithm illustrates a principle that also guides our construction, namely to protect a small collection of constraints while the remaining dimensions make progress. For vector balancing, Banaszczyk's Gaussian-measure theorem \cite{Banaszczyk1998} gives discrepancy $O(\sqrt{\log 2n})$ under normalized column norms. Dadush, Garg, Lovett, and Nikolov \cite{DGLN2019} connected that theorem to subgaussian distributions over signed sums, which the Gram--Schmidt walk \cite{BDGL2019} samples efficiently. Guo, Fang, and Lu \cite{GFL2026exist} proved that the constant $3\sqrt{2\pi}$ suffices (see also the shorter proof of Karingula and Lovett \cite{KL2026}, with constant $36$), and introduced a polynomial-time spectral signing algorithm \cite{GFL2026algorithm}. Li \cite{Li2026} developed this framework into a deterministic algorithm with discrepancy below $99$ and arithmetic cost $O(mn+n^{\omega+2}\log^3 n)$.

 \subsection{Main Result}
For a rational-valued input, let $L_{\mathrm{in}}$ be its total binary encoding length. We count bit operations with fast integer arithmetic, and allow $\tO$ to suppress factors polynomial in $\log(2n)$. Now, we are ready to state our main result.
 
\begin{theorem}
\label{thm:main}
There exists a randomized algorithm that takes any $\mathbf A\in\Q^{m\times n}$ with column norms at most $1$, and returns $\eps\in\{-1,1\}^n$ with $\norm{\mathbf A\eps}_\infty<99$ in $\tO(L_{\mathrm{in}}+n^\omega)$ expected bit operations.
\end{theorem}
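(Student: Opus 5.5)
The plan is to follow the spectral signing framework of Guo--Fang--Lu and Li, and to rebuild it around three ingredients: a resolvent potential, random step signs, and batched spectral computations. The proof has four parts.

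\textbf{Preprocessing.} First, reduce to the case $m\le n$, or more precisely to a row space of dimension at most $n$. Rows with large norm are few: since $\sum_{i}\norm{a^{(i)}}^2\le n$, at most $n/c$ rows have squared norm above $c$. The remaining rows can be handled in aggregate through the Gram matrix $\mathbf A^\top \mathbf A$ restricted to a potential. The key observation is that the potential only needs $\sum_i f(\langle a^{(i)},x\rangle)$-type quantities, and these can be compressed by one pass over the input in $\tO(L_{\mathrm{in}})$ time. A sketch that keeps $O(n)$ effective rows suffices. Rounding all entries to $O(\log n)$ bits of relative precision changes discrepancy by $o(1)$, because column norms are at most one. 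This justifies working with words of polylogarithmic length.

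\textbf{Rounding invariant.} Start from $x=0$ and evolve $x\in[-1,1]^n$, freezing coordinates when they reach $\pm1$. Track a resolvent potential $\Phi(x)=\tr\bigl((u\mathbf I-\mathbf M(x))^{-1}\bigr)$, where $\mathbf M(x)$ encodes the row-wise quadratic progress as in Li's framework, and $u$ is a barrier raised in proportion to progress. Each step moves along $\pm d$ with a uniformly random sign, where $d$ is restricted to a subspace $V$. This subspace has codimension at most half the number of alive coordinates and is orthogonal to the directions that most increase $\Phi$. Because the signs are random, the first-order term in the expected change of $\Phi$ vanishes. It then suffices to show that $\Phi$ has bounded second derivative along $V$, which a resolvent gives without any eigengap assumption. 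This supermartingale argument should bound every row's final value by the constant $99$, matching Li's bookkeeping with the same constants.

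\textbf{Batching.} This is the main obstacle. Computing $V$ costs $O(n^\omega)$. I would show that $V$ stays valid, with its constants degrading by a factor $1+o(1)$, for a whole batch of steps during which the total movement in $\mathbf M$ stays below a constant fraction of $u-\lambda_{\max}$. The argument would compare resolvents via the identity $R'-R=R'(\mathbf M'-\mathbf M)R$. At each scale, where the alive count is between $k$ and $2k$, only polylogarithmically many batches would be needed, giving a total cost of $\sum_k \tO(k^\omega)=\tO(n^\omega)$. Within a batch, a direction in $V$ is sampled by solving with a Gaussian completion matrix. Freezing a coordinate is a rank-one row replacement, so the inverse is updated by Brand's method in $\tO(k^2)$ per update. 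Each step freezes a coordinate or makes fixed potential progress, so there are $O(k)$ steps per scale, and the per-scale cost of $\tO(k^3)$ must be reduced. I would group updates in blocks of size $k^{\omega-2}$ and use fast rectangular multiplication, aiming for $\tO(k^\omega)$ per scale. Proving that the random Gaussian completion stays well-conditioned with high probability is the key quantitative lemma. If conditioning fails, the algorithm restarts the batch; bounding the number of restarts gives the expected running time.

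\textbf{Finite precision.} Show that every quantity is computed stably with $O(\log n)$-bit words in floating or fixed point. Resolvent errors are bounded by $\norm{R}^2$ times the perturbation, and $\norm{R}\le\mathrm{poly}(n)$ under the barrier. Condition numbers are $\mathrm{poly}(n)$ with high probability by Gaussian smallest-singular-value bounds. The potential slack absorbs errors of size $1/\mathrm{poly}(n)$. The final rounding of the remaining fractional coordinates follows standard Beck--Fiala-style steps on $O(1)$ coordinates. Together these give the claimed $\tO(L_{\mathrm{in}}+n^\omega)$ expected bit operations.
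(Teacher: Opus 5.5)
\textbf{Gap 1: the batching step fails as you propose it.} You correctly flag batching as the main obstacle. Comparing resolvents through $\mathbf R'-\mathbf R=\mathbf R'(\mathbf M'-\mathbf M)\mathbf R$, and requiring a batch to move $\mathbf M$ by less than a constant fraction of $u-\lambda_{\max}$, is a gap-dependent criterion.
\begin{itemize}
\item The potential cap only guarantees $\norm{\mathbf R}_{\op}\lesssim\mathcal P s$, so $u-\lambda_{\max}$ may be of order $1/(s\,\mathrm{polylog})$.
\item Half the coordinates must reach $\pm1$ within polylogarithmically many batches. A typical batch must therefore move $x$, the barrier values, and hence the weights by inverse-polylogarithmic relative amounts. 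In general this changes $\mathbf M$ by far more than $1/s$ in operator norm.
\item Your criterion would then force $\widetilde\Omega(n)$ anchors and $n^{\omega+1}$ work. This is the same $1/n$-scale obstruction that limits Li's steps.
\end{itemize}
The missing idea is a \emph{relative} comparison. With $v=\mathbf R\ones$ (this is why the paper's potential is $\ones^\top(b\mathbf I-\mathbf M)^{-1}\ones-s/b$ rather than a trace), the identity $bv_j=1+\sum_\ell\mathbf M_{j\ell}v_\ell$ gives $b\mathbf I-\mathbf M\succeq\diag(1/v)$. The paper freezes the nearly tight rows and the heavy rows (load $z_i^\top v\ge\Theta/2$) at their anchor weights. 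These weights stay majorants because every direction lies in $\ker\mathbf H$. The remaining light rows then satisfy $\mathbf M_{\mathcal L,0}\preceq\kappa\,(b\mathbf I-\mathbf M_0)$ with $\kappa$ polylogarithmic. Relative weight changes of size $O(a)$ therefore perturb the slack matrix by an $O(a\kappa)$ fraction of itself, whatever the gap.

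Two further pieces are needed. First, an anchor certificate \emph{with margin}: $\mathbf P_0[h]\le(1-\rho)\mathbf D_0[h]$ on a subspace $E\subseteq\ker\mathbf H$ of dimension $\lceil s/32\rceil$, found by counting against the barriers' own concavity $\mathbf D$. ``Bounded second derivative'' is not enough. $\norm{\nabla^2F_b}$ can be polynomial in $n$, and any positive curvature accumulates over $\tO(n)$ steps. Second, a way to keep the walk inside the radius-$a$ window for a whole batch. The paper adds each exposed direction as a constraint (self-constraints). This gives a batch Gram bound of $2$ and hence simultaneous concentration for all coordinates and barrier gradients.

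\textbf{Gap 2: the preprocessing fails.} The guarantee is a per-row $\ell_\infty$ bound, and the barrier weights are nonlinear in each row. No Gram matrix or sketch with $O(n)$ effective rows can replace the small-norm rows. A constant threshold $c$ is also too coarse: such rows fluctuate by about $\sqrt{c\,\mathrm{polylog}}$, which is not small enough for a union bound. The paper proceeds as follows.
\begin{itemize}
\item Discard rows with $\norm{a_i}_1\le1$; fewer than $n^2$ rows remain.
\item Retire rows whose mass drops below the inverse-polylogarithmic threshold $q_*=c_*L^{-4}$, removing them from all spectral computations.
\item Bound a retired row's later discrepancy by a martingale tail bound. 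This uses the fair signs, the batch Gram bound, and the polylogarithmic batch count, giving total variance $O(q_*L^3)$. A union bound then covers fewer than $n^2$ rows.
\end{itemize}
This leaves only $\tO(s)$ tracked rows, which is what makes each anchor cost $\tO(s^\omega)$.

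\textbf{Smaller issues.}
\begin{itemize}
\item With fair signs there is no ``fixed potential progress''. Progress must be measured separately; the paper uses $\E\norm{x}^2$ (Lemma~\ref{lem:walk-query-count}).
\item A single blocking level with rectangular multiplication does not reach $\tO(k^\omega)$ per scale. The paper's $\tO(s^{\omega-1})$ amortized cost relies on the predetermined slot order (weak index lookahead) and the multilevel hierarchy of van den Brand, Nanongkai, and Saranurak.
\item In finite precision, forward error bounds do not account for approximate directions being fed back as constraints into an exact-Gaussian sampler analysis. The paper couples the finite run to an exact reference process by dithered rounding. It also computes the anchor subspace by gap-free diagonalization of a penalty matrix.
\item The supermartingale argument gives only constant success probability. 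The Las Vegas statement therefore needs exact verification of the final signing, and restarts of whole trials rather than of batches.
\end{itemize}
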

 
\begin{remark}
\label{rem:real-ram}
For real-valued inputs, the idealized algorithm uses $\tO(mn+n^\omega)$ operations in a model that assigns unit cost to field operations, comparisons, square roots, and scalar Gaussian samples, and the finite-bit algorithm approximates these normalizations and gives Gaussian digits from random bits \cite{Karney2016}. For rational entries of at most $B$ bits, the input term becomes $\tO(mnB)$.
\end{remark}

\subsection{Proof Overview}We begin by reviewing the spectral signing approach of  \cite{GFL2026algorithm,Li2026}.  Spectral signing moves a fractional coloring $x\in[-1,1]^n$ from the origin to a vertex of the cube. Every medium row carries two concave barriers, one for each sign of its discrepancy, and rows near their barriers are \emph{protected} so that, by concavity, their barriers cannot increase. The barrier weights enter an entrywise nonnegative Gram matrix $\mathbf M$, and bounding its top eigenvalue bounds $\ones^\top\mathbf M\ones$ and hence the number of protected rows. Li's algorithm \cite{Li2026} tracks this eigenvalue, slightly regularized, with a potential that is both the safety device and the progress clock: each full step chooses the better of $x\pm th$ and lowers the potential by a definite amount. That requires the top eigenvalue to be smooth along the step, hence a gap below it, and Li's regularizer guarantees a gap of only order $1/n$ at the start. Steps must therefore be tiny, and the algorithm makes $O(n^2\log n)$ updates, each paying for a fresh round of dense matrix computations. Li observes that a faster algorithm needs a potential that, like the top eigenvalue, controls the averages of the Gram matrix over every set of active coordinates, while being smooth at a coarser scale \cite[Section~12]{Li2026}. We keep Li's barriers, affine row representations, and final constant, and change three things: the potential, the way steps are signed, and how often the spectral computation is paid for. Two smaller departures are the retirement of rows with tiny mass and a rational weight function, which we describe in further detail in \Cref{sec:foundations}.

In an ambient space of $s$ active coordinates, we use the resolvent potential
\begin{equation}
 F_b(\mathbf M)=\ones^\top(b \mathbf I-\mathbf M)^{-1}\ones-\frac{s}{b},
 \label{eq:intro-potential}
\end{equation}
where $\mathbf{M}$ is PSD and entrywise nonnegative, with $\mathbf{M}\prec b\mathbf{I}$. It is a barrier for $\lambda_{\max}(\mathbf M)<b$: finite and nonnegative on that region, it blows up as $\lambda_{\max}(\mathbf M)$ increases to $b$, so along a path that starts inside the region a bounded potential keeps the top eigenvalue below $b$, and Li's count of nearly tight rows still applies. Its Neumann series has nonnegative terms, so $F_b$ decreases when entries of $\mathbf M$ decrease or a coordinate is fixed, which is the property Li asks for. Unlike the top eigenvalue, it is a rational function of $\mathbf M$ on the whole region, so repeated or crossing eigenvalues cost nothing.

Each step first exposes a \emph{proposal}, a direction $h$ and a length $t$ clipped so that $x\pm th$ both stay in the cube, and only then draws a fair sign $\sigma$. If $F_b$ is concave on the segment, then $\E_\sigma F_b(\mathbf M(x+\sigma th))\le F_b(\mathbf M(x))$, so the potential is a supermartingale and Doob's inequality keeps it below a cap. Here, we measure progress separately by the \emph{coordinate energy} $\norm{x}^2$, which grows by $t^2\norm h^2$ in expectation (\Cref{lem:walk-query-count}). The step length is therefore set, up to a logarithmic factor needed for concentration, by the inverse-polylogarithmic radius of the region where concavity is certified, and the walk takes $\tO(n)$ steps instead of Li's $O(n^2\log n)$.

However, recomputing a curvature subspace at every step would still cost $n^{\omega+1}$, and dynamic inverse maintenance does not help directly, since a step changes every row weight and hence changes $\mathbf M$ by a matrix of high rank. Therefore, we instead compute a subspace at an \emph{anchor} and reuse it for a \emph{batch} of steps. \Cref{prop:batch-geometry} finds a subspace of dimension at least $\lceil s/32\rceil$ on which the Hessian $\mathbf P-\mathbf D$ of $F_b$ satisfies $\mathbf P_0[h]\le(1-\rho)\mathbf D_0[h]$, a margin measured against the barriers' own concavity, and shows that this certificate survives a trust window of inverse-polylogarithmic radius once nearly tight rows and rows of large load are frozen (\Cref{sec:curvature}). Each direction's normal is added as a constraint, so directions within a batch are nearly orthogonal and, with the fair signs, the walk stays in its window. A batch ends after about $s/\log n$ constraints, so an \emph{epoch}, which ends when the number of active coordinates halves, contains only polylogarithmically many anchors.
 Within a batch, directions come from a square \emph{Gaussian completion} matrix $\mathbf Q$ whose rows are the constraints so far, hidden Gaussian rows, and one reference row. The vector $\mathbf Q^{-1}e_k$ is annihilated by every row except the reference row, and since the common kernel of $d-1$ Gaussian rows in $d$ dimensions is a uniformly random line, it yields a nearly isotropic feasible direction (\Cref{thm:sampler-conditioning}). Adding a constraint overwrites one hidden row, in a predetermined order, which gives the setting of van den Brand, Nanongkai, and Saranurak \cite{BNS2019}, and amortized $\tO(s^{\omega-1})$ work per operation (\Cref{thm:dynamic-maintenance}).\looseness=-1

 Finally, we note that retiring rows of tiny mass leaves $\tO(s)$ rows per anchor, so each epoch costs $\tO(s^\omega)$, and the geometrically shrinking epochs sum to $\tO(n^\omega)$. \Cref{sec:precision} implements the algorithm with polylogarithmic-bit numbers, using gap-free approximate diagonalization \cite{Shah2024} at anchors and dithered rounding that couples the sampler to an exact reference process. From here, independently restarting the process yields the expected-time bound.

 \subsection{Related Work}
Bansal's semidefinite-programming algorithm \cite{Bansal2010} and the
Lovett-Meka walk \cite{LovettMeka2015} made classical discrepancy bounds
constructive. Levy, Ramadas, and Rothvoss \cite{LRR2017} obtained deterministic
algorithms through multiplicative weights. Bansal and Garg \cite{BG2017}
controlled the full rounding trajectory to avoid the accumulation of
partial-coloring errors, while Bansal \cite{Bansal2024} combined iterated
rounding with concentration. The stochastic framework of Bansal, Laddha, and
Vempala \cite{BLV2022} and the regularization framework of Pesenti and Vladu
\cite{PV2023} give broader ways to design walks and discrepancy potentials.
Banaszczyk's Gaussian-measure theorem \cite{Banaszczyk1998} gives an
$O(\sqrt{\log(2n)})$ Koml\'os bound. Bansal, Dadush, and Garg
\cite{BDG2019} made this bound algorithmic. Dadush, Garg, Lovett, and Nikolov
\cite{DGLN2019} related vector balancing to subgaussian signed sums, and the
Gram-Schmidt walk \cite{BDGL2019} efficiently constructs such distributions.
More recently, Bansal and Jiang \cite{BJ2026} improved the Koml\'os bound to
$\widetilde O((\log n)^{1/4})$ through affine spectral independence. Guillen
and Kobzar \cite{GK2026} proved a constant bound when signs may be arbitrary
complex numbers of unit modulus. Their concave energy barriers are among
the precursors of the spectral signing framework used here.
On the other hand, fast implementations pose a separate problem. The self-balancing walk of
Alweiss, Liu, and Sawhney \cite{ALS2021} gives logarithmic Koml\'os bounds in
linear time, and Larsen \cite{Larsen2023} gives faster algorithms with
hereditary guarantees. Our analysis instead combines an absolute
discrepancy bound with nearly matrix-multiplication time. Its computational
tools come from dynamic inverse maintenance \cite{BNS2019} and the
matrix-formula framework \cite{Brand2021}. Stable fast linear algebra
\cite{DDH2007} and the bit-complexity analysis of dynamic algebraic formulas
\cite{ABGZ2024} address numerical implementation. These tools become useful
here after the geometric analysis limits how often the walk rebuilds its
curvature subspace.\looseness=-1
 
\subsection{Notation}
\label{sec:notation}
 
We use Euclidean vector norms and operator matrix norms unless a subscript indicates otherwise. Matrices are bold. For quadratic forms, $\mathbf{D}[h]$ means $h^\top \mathbf{D}h$. Matrix inequalities use the Loewner order unless we explicitly specify entrywise comparison, and $\ones$ denotes the all-ones vector. We write $L=\lceil\log_2(2n)\rceil$. Several objects come from Li \cite{Li2026}, sometimes under different names; for example, Li's $R_0$, $b_0$, and $A_0$ are our $R_{\max}$, $b_{\mathrm{dep}}$, and $\nu(0)$, and Li's $L$ is a top eigenvalue. Symbols introduced inside a proof in the appendices are local to that proof.

  The potential is $F_b$ (which we shorten to $F$ in Appendices~\ref{sec:appendix-geometry}, \ref{app:walk}, and~\ref{app:precision} where $b$ is fixed). The calligraphic $\mathcal F$ is the set of frozen rows when it is an index set (\Cref{sec:curvature}) and the exposed history in conditional expectations, and the italic $F$ is the feasible subspace in \Cref{sec:sampler} and Appendix~\ref{app:sampler}. Besides the spectral level $b$ (or $b_j$ in epoch $j$) and the coefficient vectors $b_i$, $b$ counts constraint slots in \Cref{sec:sampler}. The index $j$ ranges over coordinates, except in $b_j$, $N_j$, $T_j$, where it indexes epochs. $P$ and $\mathcal P$ are caps, and $\mathbf P$ is a curvature matrix. We let $\bm{\mathcal A}=b\mathbf I-\mathbf M$, where $\bm{\mathcal A}(\mathbf Q)$ is the augmented completion matrix. We use $t$ to index steps, $t_t$ to be a step length, and $t_i=z_i^\top v$ to be a load. Then, $K$ bounds $\|\mathbf Q\|_F,\|\mathbf Q^{-1}\|_F$ in \Cref{thm:dynamic-maintenance} and Appendix~\ref{app:dynamic} and abbreviates $K_{\mathrm{rat}}$ in Appendix~\ref{app:rows}. Inside proofs in Appendices~\ref{app:sampler} and~\ref{app:dynamic}, $L$ may name a local quantity; elsewhere $L=\lceil\log_2(2n)\rceil$. Finally, $G_i$ is the \emph{energy} of row $i$ and $\norm{x}^2$ the \emph{coordinate energy}; unqualified ``energy'' in \Cref{sec:walk} and Appendix~\ref{app:precision} means the latter. (Li uses that name for $\sum_j(1-x_j^2)=n-\norm{x}^2$.)

\section{The batched signing algorithm}
\label{sec:algorithm}
  We begin by recalling the framework we use from Li (\Cref{sec:foundations}), which gives the new batch certificate (\Cref{sec:curvature}), and assembles the algorithm.
 
\subsection{The framework from Li}
 
\label{sec:foundations}
 
The walk maintains a fractional coloring $x\in[-1,1]^n$, starting at $x=0$.
A coordinate is \emph{fixed} once it reaches $\pm1$, and $V$ denotes the set of
$s$ active coordinates. This subsection follows Li \cite[Section~3]{Li2026},
except for the retirement of rows with very small mass and a rational weight
function. Rows with $\|a_i\|_1\le1$, whose discrepancy never exceeds one, are
discarded first, and fewer than $n^2$ rows remain (Appendix~\ref{app:rows}), at a
cost of $O(mn)$ arithmetic operations.

Row $i$ is represented by an affine function $\varphi_i(y)=c_i+b_i^\top y$ with
\emph{tracked value} $d_i=\varphi_i(x)$. Initially, we have $b_i=a_i$ and $c_i=0$.
Moving the term $b_{ij}x_j$ into the offset ($c_i\gets c_i+b_{ij}x_j$,
$b_{ij}\gets0$) leaves $d_i$ unchanged and removes coordinate $j$ from the
row's future motion. This costs nothing if $j$ is fixed. If $j$ is active, the
true and tracked rows later differ by $b_{ij}(y_j-x_j)$, at most $2|b_{ij}|$,
and Appendix~\ref{app:rows} adds up these defects at the final signing.
 Now, define the squared coefficients $p_{ij}=b_{ij}^2$, the \emph{mass}
$q_i=\sum_{j\in V}p_{ij}$, and the \emph{energy}
$G_i(x)=\sum_{j\in V}p_{ij}(1-x_j^2)$, which measures how much the row can
still move. Because the columns of $A$ have norm at most one, we have
$\sum_iq_i\le s$, $\sum_ip_{ij}\le1$, and $|b_{ij}|\le1$, while
$0\le G_i\le q_i$. Set $R_{\max}=291/200$. A nonretired row is \emph{large} if
$q_i>R_{\max}^2$, \emph{medium} if $0<q_i\le R_{\max}^2$, and
\emph{completed} if $q_i=0$. Here, large rows are held in place: their tracked values
stay zero until they become medium. Li achieves this by repairing offsets. In our algorithm, all directions satisfy $b_i^\top h=0$ for every large row. Since $\sum_iq_i\le s$, there are fewer
than $s/R_{\max}^2$ large rows.

At each anchor, fixed-column entries are transferred to the offsets, and
medium-row entries with $p_{ij}>q_i/64$ are deleted, that is, transferred even
though their coordinate is active, in a fixed order. Large rows undergo no
deletions, mass never increases, and every surviving medium row satisfies the
\emph{spread condition} $\max_jp_{ij}\le q_i/64$. Each medium row also has a
nonincreasing rational \emph{scale} $r_i$ with $q_i\le r_i^2<\gamma^2q_i$ and
$r_i\le R_{\max}$, where $\gamma=65/64$; Appendix~\ref{app:rows} gives a choice
using $O(\log(2s))$ rational comparisons.
Li avoids evaluating every row at every step by means of a weak charge on the
tracked values and a motion clock. Instead of this approach, we stop tracking rows whose mass is
tiny. Choose $q_*=c_*\log^{-4}(2n)$ for a sufficiently small absolute
$0<c_*<R_{\max}^2$. After transfers and deletions, rows with $0<q_i\le q_*$
are \emph{retired} from spectral and derivative calculations: record
$b_i^{\rm ret}$ and $x^{\rm ret}$, transfer $b_i^\top x^{\rm ret}$ to $c_i$,
and set $b_i=0$. The later discrepancy of a retired row is a small random
error (\Cref{prop:walk-success}), and since $\sum_iq_i\le s$, at most $s/q_*$
nonempty rows remain.\looseness=-1
 
\Cref{alg:row-update} collects these updates. It runs at every anchor, after
the frozen weights of the preceding batch are released (\Cref{sec:curvature}),
and the spectral matrices are built afterward.
 
\begin{algorithm}[htbp]
\caption{Updating affine rows at an anchor}
\label{alg:row-update}
\small
\begin{algorithmic}[1]
\Require Coloring $x$, active set $V$, remaining affine rows, {cached masses and coefficient trees,} existing scales, and cutoff $q_*$.
\For{each remaining row $i$}
  \State Transfer fixed-column entries to the offset, subtract their original squared coefficients from $q_i$, update \\
  \qquad their tree entries, and classify row $i$.
  \While{$0<q_i\le R_{\max}^2$ and some $b_{ij}^2>q_i/64$}
    \State Transfer the first eligible entry to the offset, subtract its original squared coefficient from $q_i$, and \\ \qquad\quad update its tree entry.
  \EndWhile
  \State Complete row $i$ if $q_i=0$, or update $r_i$ by the scale rule if $0<q_i\le R_{\max}^2$.
  \State If $0<q_i\le q_*$, record $(b_i,x)$, set $c_i\gets c_i+b_i^\top x$ and $b_i\gets0$, and mark the row retired.
\EndFor
\end{algorithmic}
\end{algorithm}
  
Each medium row carries two signed barriers, one for each direction of its
discrepancy. Set the discrepancy threshold $H_0=263/5$, and normalize each
medium row's squared coefficients as $z_i=p_i/r_i^2$. For sign
$\sigma\in\{-1,1\}$, the barrier is
\begin{equation}
u_{i\sigma}(x)
 =\frac{\sigma d_i-H_0}{r_i}
  +\frac{4G_i(x)}{r_i^2}+12.
\label{eq:barrier-definition}
\end{equation}
The energy term reserves room for fluctuations the row has not yet seen, and
it makes the barrier concave in $x$. Nonpositive barriers give
$\sigma d_i\le H_0-12r_i-4G_i/r_i<H_0$ for both signs, hence $|d_i|<H_0$. At
fixed representation and scale,
$g_{i\sigma}=\nabla u_{i\sigma}=\sigma b_i/r_i-8z_i\circ x$ and
$\nabla^2u_{i\sigma}=-8\operatorname{diag}(z_i)$, so\looseness=-1
\begin{equation}
u_{i\sigma}(x+y)-u_{i\sigma}(x)
 =g_{i\sigma}(x)^\top y
       -4\sum_jz_{ij}y_j^2.
\label{eq:barrier-quadratic}
\end{equation}
In particular, a step orthogonal to $g_{i\sigma}$ can only lower the barrier,
which is how protected rows are kept safe. Spread gives $z_{ij}\le1/64$ and
$\sum_jz_{ij}\le1$. Hence
$|g_{i\sigma,j}|\le\sqrt{z_{ij}}+8z_{ij}\le2\sqrt{z_{ij}}$, so
$g_{i\sigma,j}^2\le4z_{ij}$ and $\norm{g_{i\sigma}}_2\le2$. Also,
$u_{i,+}+u_{i,-}\le32-2H_0/R_{\max}<-2$, so at most one signed barrier per row
lies in $[-1,0]$. With $\Lambda=2/q_*$, retirement bounds the load across both
signs by
$\sum_{i,\sigma}z_{ij}=2\sum_{i\text{ medium}}p_{ij}/r_i^2\le(2/q_*)\sum_ip_{ij}\le\Lambda$.

Li combines the barriers through a weighted Gram matrix whose weights grow as the barriers approach zero, and its weights are exponential. As our second departure, we use the rational weight with fixed degree $K_{\mathrm{rat}}=10000$,
defined for all $u\leq 0$ by
\begin{equation}
\nu(u)=119\left(1-\frac{u}{2K_{\mathrm{rat}}}\right)^{-K_{\mathrm{rat}}}.
\label{eq:rational-weight}
\end{equation}
It approaches $119e^{u/2}$ as $K_{\mathrm{rat}}$ grows, and it uses only field
operations and integer powers. Medium rows
contribute weighted rank-one terms, and large rows contribute diagonal
deposits for their later transition to medium. Set $b_{\mathrm{dep}}=3/625$ and
$\mathbf{B}=b_{\mathrm{dep}}\sum_{i\text{ large}}\operatorname{diag}(p_i)$.
Define the spectral matrix $\mathbf{M}=\mathbf{W}+\mathbf{B}$ with
\begin{equation}
\mathbf{W}=\sum_{i\text{ medium},\,\sigma}\nu(u_{i\sigma})z_iz_i^\top.
\label{eq:spectral-matrix}
\end{equation}
These matrices are PSD and entrywise nonnegative, with
$\mathbf{B}\preceq b_{\mathrm{dep}}\mathbf{I}$. For row updates, express each
signed contribution as $\nu(u_{i\sigma})p_ip_i^\top/r_i^4$.
Appendix~\ref{app:rows} shows that initialization gives
$\mathbf M\preceq b_{\rm dep}\mathbf I$ with all medium barriers below $-20$.
Later row updates preserve tracked sums. Deletion and scale reduction decrease
contributions entrywise, deposit replacement and retirement decrease them in
Loewner order, and removing fixed coordinates takes a principal submatrix.
An entrywise decrease need not be a Loewner decrease, so both comparisons are
used below.
 
As in Li, the final constant is the barrier threshold $H_0$ plus the total
deletion defect $2R_{\max}(8+\sqrt{63})$, which stay below $31673/320<99$
(\Cref{prop:final-defect-bound}). Li controls $\mathbf M$ through its top
eigenvalue; we use the resolvent potential $F_b$ of \eqref{eq:intro-potential}
instead, which the monotonicity facts above make nonincreasing under row
updates (\Cref{lem:curv-ground-state}).
 
\subsection{A subspace for an entire batch}
 
\label{sec:curvature}
 
This subsection contains the main new ingredient. At an anchor, we find many directions along which the resolvent potential is concave, and then show that the same directions remain valid throughout a batch of steps. Fix the row representations and scales at an anchor $x_0$. For a signed medium row $i=(\ell,\sigma)$, write $u_i=u_{\ell\sigma}$, $g_i=\nabla u_i$, and $z_i=z_\ell$. An anchor is \emph{admissible} when $1\le b<3/2$, $\mathbf{M}_0\prec b\mathbf{I}$, and every $u_i(x_0)\le0$. Write
\[
 \bm{\mathcal A}=b\mathbf{I}-\mathbf{M},\qquad
 \mathbf{R}=\bm{\mathcal A}^{-1},\qquad
 v=\mathbf{R}\ones,\qquad
 F_b(\mathbf{M})=\ones^\top v-s/b.
\] We use the potential in \eqref{eq:intro-potential}. The resolvent is entrywise nonnegative, and $F_b\ge0$ at an admissible anchor.
 
Suppose that $\sum_i z_{ij}\le\Lambda$ and $\ones^\top v_0\le\mathcal P s$. Set $\rho=1/64$, $\Theta=4096\Lambda\mathcal P$, and $\kappa=\max\{1,119\Lambda\Theta\}$. The load of row $i$ is $t_i=z_i^\top v$. Protect every row with $u_i(x_0)\ge-1$, and select heavy rows, each with $t_i(x_0)\ge\Theta/2$, so that every remaining row has load at most $\Theta$. Let $\mathcal F$ be the protected or heavy signed rows and $\mathcal L$ the others. Form $\mathbf H$ from the large-row vectors $b_\ell^\top$ and the gradients $g_i(x_0)^\top$ for $i\in\mathcal F$. Changes in $\ker\mathbf H$ preserve large-row sums and keep the fixed rows' barriers at most their anchor values.
Fix the weights in $\mathcal F$ at their anchor values.
 
Each kind of frozen row is frozen for its own reason. Protected rows are frozen so that every light row starts at least one unit below its threshold, and therefore stays negative throughout the window defined below. Heavy rows are frozen because the stability argument compares the resolvent vector at $x$ with the one at $x_0$, and a light row's influence on that vector is proportional to its load. Bounding every light load by $\Theta$ is what keeps the constant $\kappa$ polylogarithmic. Freezing is cheap: the counting in Appendix~\ref{sec:appendix-geometry} shows that protected and heavy rows remove only a small fraction of the dimensions. With the frozen weights fixed, the batch matrix is\looseness=-1
\begin{equation}
 \mathbf{M}(x)=\mathbf{M}_0+
   \sum_{i\in\mathcal L}[\nu(u_i(x))-\nu(u_i(x_0))]z_iz_i^\top.
 \label{eq:batch-fixed-matrix}
\end{equation}
Its curvature splits into a positive term $\mathbf P$ and a diagonal negative term $\mathbf D$:
\begin{equation}
\begin{aligned}
 \mathbf U&=\sum_{i\in\mathcal L}\nu''(u_i)t_i^2g_ig_i^\top,
 &\mathbf D&=8\sum_{i\in\mathcal L}\nu'(u_i)t_i^2\diag(z_i),\\
 \mathbf C&=\sum_{i\in\mathcal L}\nu'(u_i)t_i z_ig_i^\top,
 &\mathbf P&=\mathbf U+2\mathbf C^\top\mathbf R\mathbf C.
\end{aligned}
 \label{eq:batch-curvature}
\end{equation}
Where $\mathbf M(x)\prec b\mathbf I$, the Hessian is $\nabla^2F_b(\mathbf M(x))=\mathbf P-\mathbf D$; Appendix~\ref{sec:appendix-geometry} gives the calculation. The subscript $0$ denotes an evaluation at $x_0$. To bound coordinate changes in the light barriers, we use the convex window
\begin{equation}
 \mathcal N_a(x_0)=\{x_0+y\in[-1,1]^s:
   \norm y_\infty\le a,\ |g_i(x_0)^\top y|\le a, \forall i\in\mathcal L\}.
 \label{eq:batch-window}
\end{equation}
The certificate below asks for $\mathbf P_0[h]\le(1-\rho)\mathbf D_0[h]$ on a subspace, rather than only $\mathbf P_0[h]\le\mathbf D_0[h]$, which would already give concavity at $x_0$. The margin $\rho$ is the slack that absorbs the drift of $\mathbf P$ and $\mathbf D$ as the walk moves through the window, and it is what lets one computation at $x_0$ serve a whole batch.
 
\begin{proposition}[A valid subspace throughout a batch]
\label{prop:batch-geometry}
At an admissible anchor with the row invariants above and the stated load and resolvent bounds, there exists a subspace $E\subseteq\ker\mathbf H$ with $\dim E\ge\lceil s/32\rceil$ and $\mathbf P_0[h]\le(1-\rho)\mathbf D_0[h]$ for all $h\in E$. Here, one may choose any $\lceil s/32\rceil$-dimensional nonpositive subspace of $\mathbf P_0-(1-\rho)\mathbf D_0$ restricted to $\ker\mathbf H$.
For a universal $c>0$, choose $0<a\le c\rho/(1+\kappa)^2$. Then, throughout $(x_0+E)\cap\mathcal N_a(x_0)$, the batch matrix satisfies $\mathbf M(x)\prec b\mathbf I$, fixed weights majorize their true weights, light barriers are negative, and for all $h\in E$, we have that
\[ \mathbf P_x[h]\le(1-\rho/2)\mathbf D_x[h].\]
So, $F_b(\mathbf M(x))$ is concave on this set. If $x,x+th,x-th$ belong to it, where $h\in E$ and $t\ge0$, then an independent fair sign $\sigma$ satisfies $\E_\sigma F_b(\mathbf M(x+\sigma th))\le F_b(\mathbf M(x))$, and the step size may depend on $x$ and $h$ before choosing $\sigma$.
\end{proposition}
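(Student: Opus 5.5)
The proof has three parts: a dimension count at $x_0$, a perturbation argument that keeps the margin inside the window, and the supermartingale step.

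\textbf{Existence of $E$.} I first bound the codimension of $\ker\mathbf H$. There are fewer than $s/R_{\max}^2$ large rows, and $R_{\max}^2>2.1$. For protected rows, $u_i(x_0)\ge-1$ gives $\nu(u_i)\ge 119(1+1/(2K_{\mathrm{rat}}))^{-K_{\mathrm{rat}}}\approx 119e^{-1/2}$. Since $\mathbf M_0\succeq\nu(u_i)z_iz_i^\top$ and $\mathbf M_0\prec b\mathbf I$ with $b<3/2$, summing gives
\[
\sum_{\text{prot}}\nu(u_i)\,\ones^\top z_iz_i^\top\ones\le\ones^\top\mathbf M_0\ones\le bs .
\]
Combined with Li's counting argument, this bounds the number of protected rows by a small constant times $s$; at most one signed barrier per row is protected. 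For heavy rows, $\sum_i t_i=\sum_i z_i^\top v_0\le\Lambda\ones^\top v_0\le\Lambda\mathcal P s$. Each heavy row has $t_i\ge\Theta/2=2048\Lambda\mathcal P$, so there are at most $s/2048$ of them.

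Next I count positive directions of $\mathbf P_0-(1-\rho)\mathbf D_0$. The matrix $\mathbf U$ is a sum of rank-one terms over light rows, but $\nu''$ is comparable to $\nu'/2$, so I plan to bound the trace of $\mathbf P_0$ relative to $\mathbf D_0$ rather than its rank. Concretely, I restrict to the subspace where $\mathbf D_0$ is comparable to its average. Using $g_{i,j}^2\le4z_{ij}$ and $\mathbf C^\top\mathbf R\mathbf C\preceq\sum_i\nu'(u_i)^2t_i^2(z_i^\top\mathbf R z_i)g_ig_i^\top$-type Cauchy--Schwarz bounds, I show $\tr(\mathbf D_0^{-1/2}\mathbf P_0\mathbf D_0^{-1/2})\le c's$ with $c'<1/2$. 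A Markov/Courant--Fischer argument then yields a nonpositive subspace of dimension at least $s-(\text{codim})-(\text{count of large eigenvalues})\ge\lceil s/32\rceil$. Any such subspace works, because the rest of the argument uses only the inequality $\mathbf P_0[h]\le(1-\rho)\mathbf D_0[h]$.

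\textbf{Stability in the window.} For $x=x_0+y$ in the window with $y\in E\subseteq\ker\mathbf H$, light barriers move by at most $|g_i^\top y|+4\sum_j z_{ij}y_j^2\le a+4a^2$. Since they start below $-1$, they stay negative. Fixed rows satisfy $u_i(x)\le u_i(x_0)$ by \eqref{eq:barrier-quadratic} and $g_i^\top y=0$, so frozen weights majorize the true weights. Each light weight changes by a factor $e^{O(a)}$, so $\mathbf M(x)-\mathbf M_0$ is entrywise bounded by $O(a)$ times the light part of $\mathbf M_0$. I then compare resolvents through the Neumann series. Since $\mathbf M(x)\le(1+O(a))\mathbf M_0$ entrywise and $\mathbf M_0\prec b\mathbf I$ with spectral room controlled by $\ones^\top v_0\le\mathcal P s$, a bootstrap shows $v_x\le(1+O(\kappa a))v_0$ entrywise, and in particular that $\mathbf M(x)\prec b\mathbf I$ persists. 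Here the heavy-row freezing enters: the light loads are at most $\Theta$, so the change in $v$ is controlled by $119\Lambda\Theta$ times $a$. From this, each $t_i$, $\nu'(u_i)$, $\nu''(u_i)$ and $g_i$ changes by a relative factor of $1+O((1+\kappa)a)$ or by an additive $O(a)$. This gives $\mathbf D_x\succeq(1-O(\kappa a))\mathbf D_0$ and $\mathbf P_x[h]\le\mathbf P_0[h]+O((1+\kappa)^2a)\mathbf D_0[h]$. The gradient drift $g_i(x)-g_i(x_0)=-8z_i\circ y$ is bounded using $z_{ij}\le 1/64$ and $\|y\|_\infty\le a$, and the $\mathbf C^\top\mathbf R\mathbf C$ term picks up the squared factor. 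With $a\le c\rho/(1+\kappa)^2$, these estimates give $\mathbf P_x[h]\le(1-\rho/2)\mathbf D_x[h]$.

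\textbf{The fair-sign step.} The Hessian formula $\nabla^2F_b=\mathbf P-\mathbf D$ shows that $F_b\circ\mathbf M$ is concave along $E$ on the convex set $(x_0+E)\cap\mathcal N_a$. Since $x\pm th$ lie in this set, concavity on the segment gives $\tfrac12[F(x+th)+F(x-th)]\le F(x)$. Because $\sigma$ is drawn independently after $t$ and $h$ are chosen, this is exactly the stated expectation bound.

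\textbf{Main obstacle.} The hardest step is the resolvent comparison, namely controlling $v_x$ against $v_0$ entrywise with only polylogarithmic loss. I would handle it by a continuity argument along the segment from $x_0$ to $x$, using the entrywise-nonnegative Neumann expansion and the load bound $t_i\le\Theta$.
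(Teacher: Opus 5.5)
Your outline matches the paper's in the large, protected and heavy row counts, the reasons for freezing, the barrier estimates in the window, and the midpoint-concavity step. The curvature part of the dimension count, however, has a genuine gap.

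You propose to bound $\tr(\mathbf D_0^{-1/2}\mathbf P_0\mathbf D_0^{-1/2})\le c's$ for all of $\mathbf P_0=\mathbf U_0+2\mathbf C_0^\top\mathbf R_0\mathbf C_0$. This works for $\mathbf U_0$, since $(\mathbf U_0)_{jj}\le\frac{1+1/K_{\rm rat}}{4}(\mathbf D_0)_{jj}$. It does not work for the resolvent term.

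\begin{itemize}
\item The Cauchy--Schwarz bound you cite, $\mathbf C^\top\mathbf R\mathbf C\preceq\sum_i\nu'(u_i)^2t_i^2(z_i^\top\mathbf R z_i)g_ig_i^\top$, is false. The cross terms $z_i^\top\mathbf R z_{i'}$ are nonnegative and do not cancel.
\item The correct factorization is $\mathbf C_0h=\bm\Gamma a(h)$, with $\bm\Gamma=[\sqrt{\nu(u_i)}z_i]_{i\in\mathcal L}$ and $a_i(h)=\nu'(u_i)t_ig_i^\top h/\sqrt{\nu(u_i)}$. It gives $(\mathbf C_0h)^\top\mathbf R_0\mathbf C_0h\le\|\mathbf R_0^{1/2}\bm\Gamma\bm\Gamma^\top\mathbf R_0^{1/2}\|\,\mathbf U_0[h]$.
\item From $\bm\Gamma\bm\Gamma^\top\preceq\mathbf M_0$ alone, this operator norm is only bounded by $\lambda_{\max}/(b-\lambda_{\max})$. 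The hypothesis $\ones^\top v_0\le\mathcal P s$ allows $b-\lambda_{\max}(\mathbf M_0)$ to be of order $1/\mathcal P$. Even the load cap only improves this to the polylogarithmic $\kappa$.
\end{itemize}

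The underlying problem is that each near-top eigenvector $\psi_k$ of $\mathbf M_0$ contributes the rank-one piece $\mathbf C_0^\top\psi_k\psi_k^\top\mathbf C_0/(b-\lambda_k)$. A trace count charges this piece with weight $1/(b-\lambda_k)$, although it costs only one dimension. So no constant $c'$ comes out of this route.

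The missing idea is to pay for those eigendirections by rank rather than by trace.

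\begin{itemize}
\item Let $\mathbf J$ project onto the eigenspaces of $\mathbf M_0$ with eigenvalue at least $\alpha b$, where $\alpha=4/25$.
\item Since $\mathbf B\preceq b_{\rm dep}\mathbf I$, and spread gives $\tr\mathbf W_0\le\ones^\top\mathbf W_0\ones/64<bs/64$, one gets $\rank\mathbf J\le s/(64(\alpha-b_{\rm dep}))$. Imposing $\mathbf J\mathbf C_0h=0$ costs only that many dimensions.
\item On the remaining subspace, conjugating $\bm\Gamma\bm\Gamma^\top\preceq\mathbf M_0$ by $(\mathbf I-\mathbf J)\mathbf R_0^{1/2}$ gives $(\mathbf C_0h)^\top\mathbf R_0\mathbf C_0h\le\frac{\alpha}{1-\alpha}\|a(h)\|^2$.
\item The identity $\nu'^2/\nu=\frac{K_{\rm rat}}{K_{\rm rat}+1}\nu''$ gives $\|a(h)\|^2\le\mathbf U_0[h]$. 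Hence $\mathbf P_0[h]\le\frac{1+\alpha}{1-\alpha}\mathbf U_0[h]$ there.
\item Only after this is the diagonal trace count applied.
\end{itemize}

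The constants also matter. Large rows alone may use $s/R_{\max}^2\approx0.47s$, so a curvature loss of nearly $s/2$, as your $c'<1/2$ permits, leaves essentially nothing. The paper's two curvature losses total about $0.45s$, which is what gives $\mu>1/32$.

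The same issue recurs in your stability step. Your entrywise bootstrap $(b\mathbf I-\mathbf M(x))v_0\ge(1-Ca\kappa)\ones$ correctly handles $v_x$ and $\mathbf M(x)\prec b\mathbf I$. Transporting $2\mathbf C^\top\mathbf R\mathbf C$, however, needs two further bounds: $\mathbf R_x\preceq(1+Ca\kappa)\mathbf R_0$ in Loewner order, and $\|\mathbf R_0^{1/2}\bm\Gamma_0\|^2\le\kappa$. Without them, the factor $\lambda/(b-\lambda)$ reappears. Both come from the chain $\mathbf M_{\mathcal L,0}\preceq\diag(\mathbf M_{\mathcal L,0}v_0/v_0)\preceq\kappa\diag(1/v_0)\preceq\kappa\bm{\mathcal A}_0$, whose last step is the ground-state identity. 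This chain is where the load cap $t_i\le\Theta$ actually enters.
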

\noindent We include the proof in Appendix~\ref{sec:appendix-geometry}.

\subsection{Batched Spectral Signing}
 
The algorithm chooses a subspace at each anchor. During the batch that
follows, only the coloring and the completion sampler change; row
representations, scales, and the anchor spectral matrices stay fixed. Set $L=\lceil\log_2(2n)\rceil$ and
$\rho=1/64$. Choose the potential caps $P=C_PL^3$ and
$\mathcal P=2(P+1)$, and the retirement cutoff $q_*=c_*L^{-4}$.
These give the load parameters $\Lambda=2/q_*$,
$\Theta=4096\mathcal P\Lambda$, and
$\kappa=\max\{1,119\Lambda\Theta\}$. The radius and step length satisfy $a=c_a\rho(1+\kappa)^{-2}$ and $\frac{a}{2^{21}L}<\tau\le\frac{a}{2^{20}L}$. Choose a dyadic $a$ within a factor two of this value and a power-of-two
$\tau$ in this interval. Choose $C_P,C_B$ sufficiently large and $c_a,c_*$
sufficiently small for the bounds below. Replacing the logarithm in the
cutoff from Section~\ref{sec:foundations} by $L$ makes it rational and changes only $c_*$.
Epoch $j$ starts with $N_j$ active coordinates and uses the spectral
barrier $b_j=1+\frac14\sum_{\ell=1}^j\ell^{-2} \in [1,3/2)$, ending when at most $N_j/2$ coordinates are active. Its exposure cap
is $T_j=C_TN_j\tau^{-2}L$, and its anchor potential cap is $N_jP$. A trial also restarts on the finite-precision failures described in Section~\ref{sec:precision}.\looseness=-1
 
\begin{algorithm}[t]
\caption{Batched spectral signing with exact verification}
\label{alg:main}
\small
\begin{algorithmic}[1]
\Require Rational $\mathbf{A}\in\Q^{m\times n}$ with column norms at most $1$.
\Ensure A signing $\eps\in\{-1,1\}^n$ with $\|\mathbf{A}\eps\|_\infty<99$.
\State Round $\mathbf{A}$ toward zero with $\|\mathbf{A}-\bar {\mathbf{A}}\|_{\infty\to\infty}\le10^{-4}$. 
\State Discard rows of $\bar {\mathbf{A}}$ with $\ell_1$ norm at most one.
\State \textbf{if} {$n<64$ or no rows remain} then return any signing in $\{-1,1\}^n$.
\Loop \Comment{Each iteration is an independent trial.}
    \State Initialize $x=0$, $V=[n]$, and $(c_i,b_i)=(0,\bar a_i)$
    \State Initialize cached masses and coefficient trees, $j=0$, $N_0=n$, and exposure count $0$.
    \State Apply Algorithm~\ref{alg:row-update} and form $\mathbf{M}$, including the large-row deposits.
    \While{$|V|\ge64$}
        \If{$|V|\le N_j/2$}
            \State Set $j\gets j+1$ and $N_j\gets|V|$, use barrier $b_j$, and reset the exposure count.
        \EndIf
        \State \textbf{if} the exposure count reaches its cap \textbf{then restart trial}.
        \State Construct the anchor and sampler below, then run Algorithm~\ref{alg:batch}.
        \State Reset the fixed majorants, apply Alg~\ref{alg:row-update}, and remove fixed coordinates from the ambient matrices.
    \EndWhile
    \State Form $\eps$ by nearest-sign rounding, with a fixed tie rule at zero, and evaluate $\bar {\mathbf{A}}\eps$ exactly.
    \State \textbf{if} {$\|\bar {\mathbf{A}}\eps\|_\infty\le98.99$} \textbf{then return} $\eps$.
\EndLoop
\end{algorithmic}
\end{algorithm}
Every signing that \Cref{alg:main} returns therefore satisfies
\begin{equation}
 \norm{\mathbf{A}\eps}_\infty
 \le\norm{\bar {\mathbf{A}}\eps}_\infty+10^{-4}
 \le98.9901<99.
 \label{eq:algorithm-verification}
\end{equation}
Discarded rows obey the same bound with $1$ in place of $98.99$.
Section~\ref{sec:walk} proves that each capped trial succeeds with constant
probability.

 We now explain how to build an anchor. At an anchor $x_0$, compute the row statistics, $\mathbf{M}$, and its
resolvent. Protect each signed row with $u_i(x_0)\ge-1$, and mark a row heavy
when its load $t_i=z_i^\top v$ is at least $\Theta/2$; any rule under which
heavy rows have load at least $\Theta/2$ and the others at most $\Theta$
would do. Form $\mathbf{H}$ from
the large-row vectors and the gradients of protected or heavy rows, and fix
these rows' weights at their anchor values. Compute a subspace $E$ of
dimension $k=\lceil s/32\rceil$ as in \Cref{prop:batch-geometry}, that is, a
nonpositive subspace of $\mathbf{P}_0-(1-\rho)\mathbf{D}_0$ restricted to
$\ker\mathbf H$, and obtain its sampling basis $\mathbf{Z}$ by the small
Gaussian perturbation of \Cref{sec:sampler}. Give the batch
$\max\{1,\lfloor k/(C_BL)\rfloor\}$ constraint slots, with insertion count
initially zero. For $k\le C_0L$, choose an orthonormal basis $\mathbf U_E$ of
the batch span, sample $h=\mathbf U_Eg/\|g\|_2$ with
$g\sim N(0,\mathbf I_k)$, and end the batch after one proposal. Otherwise
form the hidden rows $\mathbf{G}\mathbf{Z}$ and initialize the augmented
inverse from \eqref{eq:sampler-augmentation}. In \Cref{sec:precision}, we explain how these steps can be carried out in finite precision.

\begin{algorithm}[t]
\caption{One batch of coloring moves}
\label{alg:batch}
\small
\begin{algorithmic}[1]
\Require The current coloring, epoch counters, and the anchor and sampler.
\Ensure An updated coloring and epoch counters, or trial failure.
\Loop
    \State {Obtain unit $h$ by isotropic sampling in the batch span or \eqref{eq:sampler-output-query}, together with $c=\mathbf Z^\top h$.}
    \State Quantize $(h,c)$ independently to $(\bar h,\bar c)$; set $\widehat h=3\bar h/4$ and mask fixed coordinates.
    \State Set $t\gets\min\left\{\tau,\min_{j\text{ active},\,\widehat h_j\ne0}
             (1-|x_j|)/|\widehat h_j|\right\}$, draw a fresh fair sign $\sigma$, and move to $x+\sigma t\widehat h$.
    \State Round the new point to the nearest dyadic point, preserving attained values $\pm1$.
    \State Update $x$, record newly fixed coordinates $J$, set $V\gets V\setminus J$, and increment the exposure count.
    \State Abort if the exposure cap is reached before the epoch ends.
    \State \Return if the epoch ends, $|V|<64$, sampling is explicit, or fewer than $1+|J|$ insertion slots remain.
    \For{the self-normal $\bar c^\top$, then each $\mathbf{Z}_{j,:}$ with $j\in J$ in increasing order}
        \State Refresh the reference row, replace the next hidden row, and increment the insertion count.
    \EndFor
    \State \Return if no insertion slots remain.
\EndLoop
\end{algorithmic}
\end{algorithm}
 
Fixed coordinates are removed from the ambient matrices at the next anchor.

\section{Analysis of the walk}
\label{sec:walk}
 
The analysis must show three things: coordinates reach $\pm1$ within few
steps, each batch stays in the region where its subspace is valid, and the
potential stays bounded. This section treats the walk in exact arithmetic,
assuming the anchor and sampler guarantees; numerical errors are handled in
\Cref{sec:precision}. We analyze a capped trial, stopping at the first
invalid proposal and setting all later increments to zero.
 
Throughout, the endpoint sign of a step is drawn independently, after its
direction and clipping length have been exposed. This order is essential.
Midpoint concavity and every martingale bound below require the sign to be
independent of the step it signs. The clipping length treats both endpoints
symmetrically, so it can be computed before the sign is known.
 
\Cref{prop:batch-geometry} gives a subspace whose concavity persists
throughout the batch window. We now bound the number of steps and show that
the walk stays in this window.
 
\subsection{Progress and batch counts}
\label{subsec:walk-progress}
 
A symmetric move increases the expected squared distance from the origin, and
when clipping shortens a move, at least one endpoint fixes a coordinate.
Together, these two facts bound the number of queries.
 
\begin{lemma}[Progress with clipped symmetric steps]
\label{lem:walk-query-count}
Start with at most $N$ active coordinates. Use directions in $E$ with
norm in $[1/3,1]$ that vanish on fixed coordinates, and independent fair
endpoint signs. Stop at epoch completion or failure. Then, we have that
\begin{align}
 \mathbb E\sum_t t_t^2\|h_t\|^2&\le N,\label{eq:walk-energy}\\
 \mathbb E\#\{t:t_t=\tau\}&\le9N\tau^{-2},\label{eq:walk-full}\\
 \mathbb E\#\{t:t_t<\tau\}&\le2N.\label{eq:walk-short}
\end{align}
For an absolute $C_T$, the probability that the epoch needs more than $C_TN\tau^{-2}L$ queries is at most $n^{-10}$.\looseness=-1
\end{lemma}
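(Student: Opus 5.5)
The plan is to use the coordinate energy $\norm{x}^2$ as a submartingale. Write each step as $x_{t+1}=x_t+\sigma_t t_t h_t$, where $h_t$ and $t_t$ are exposed before the independent fair sign $\sigma_t$. Then
\[
\E\bigl[\norm{x_{t+1}}^2-\norm{x_t}^2\mid\mathcal F_t\bigr]=t_t^2\norm{h_t}^2,
\]
because the cross term $2\sigma_t t_t x_t^\top h_t$ has conditional mean zero. Since $h_t$ vanishes on fixed coordinates and both endpoints $x\pm t_th_t$ stay in the cube, only the at most $N$ active coordinates move, and each stays in $[-1,1]$. Hence $\norm{x}^2$ restricted to the epoch's active set is at most $N$. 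Summing the increments up to the stopping time and applying optional stopping (the capped trial has bounded length) gives \eqref{eq:walk-energy}. To make this rigorous I would subtract the energy already present at the epoch start, or simply note that the total is bounded by $N$. The rounding to dyadic points is deferred to \Cref{sec:precision}, as the section states.

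For \eqref{eq:walk-full}, every step with $t_t=\tau$ contributes $\tau^2\norm{h_t}^2\ge\tau^2/9$ to the energy sum. Therefore $\tau^2\,\E\#\{t:t_t=\tau\}/9\le N$.

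For \eqref{eq:walk-short}, a clipped step with $t_t<\tau$ has some active $j$ with $t_t|h_j|=1-|x_j|$. One of the two endpoints then sets $x_j=\pm1$, so with conditional probability at least $1/2$ the step fixes a coordinate. Fixed coordinates never revert, and at most $N$ can be fixed in the epoch. Write $S=\#\{t:t_t<\tau\}$ and let $X$ be the number of coordinates fixed. Then $\sum_t(\mathbf 1[\text{step $t$ fixes}]-\tfrac12\mathbf 1[t_t<\tau])$ is a submartingale, and optional stopping gives $\E S\le 2\,\E X\le 2N$.

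The high-probability statement is the main obstacle, since expectations alone give only a Markov bound. I would split the epoch into $\Theta(L)$ consecutive blocks of $C N\tau^{-2}$ queries each and show that, conditioned on the past, each block ends the epoch with probability at least $1/2$. This follows by Markov's inequality applied to the same bounds restarted at the block's start, with $N$ replaced by the current active count, because the bounds hold from any starting state. Independence across blocks then gives failure probability $2^{-\Theta(L)}\le n^{-10}$ once $C_T$ is large.

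A direct Freedman-type alternative would be to note that the energy martingale has increments bounded by $O(\tau)$ and conditional variance $O(\tau^2)\cdot$energy. The restart/geometric-trials argument avoids this, and it is the route I would take. The only care needed is that the bounds in expectation are uniform over starting states, which holds because the argument above used only the number of active coordinates.
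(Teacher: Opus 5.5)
Your proposal is correct and follows the paper's proof essentially step by step. The paper uses the same energy identity $\E[\|x_{t+1}\|^2\mid\mathcal F_t,h_t]=\|x_t\|^2+t_t^2\|h_t\|^2$ with total bounded by $N$, the $\tau^2/9$ charge per full step, the at-least-$1/2$ fixing probability for clipped steps, and a restarted conditional Markov bound over $\Theta(L)$ blocks of length about $22N\tau^{-2}$. One wording point to tighten: the blocks are not independent. What you actually use, as the paper does, is the chained conditional bound $\Pr[T>(r+1)D\mid T>rD]\le 1/2$.
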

\begin{proof}
Count all initial coordinates, including those fixed later. Conditional
on the history and exposed direction, the expected energy is
$\frac12\|x+th\|^2+\frac12\|x-th\|^2=\|x\|^2+t^2\|h\|^2$, which lies in $[0,N]$. We sum the stopped identity over a bounded number of
steps and apply monotone convergence for \eqref{eq:walk-energy}.
Full steps contribute at least $\tau^2/9$, giving \eqref{eq:walk-full}.
A short step has a clipping minimizer that at least one endpoint fixes.
Let $J_t$ indicate a short step and $Z_t$ count newly fixed coordinates.
Then $\mathbb E[Z_t|\mathcal F_t,h_t]\ge J_t/2$ and
$\sum_tZ_t\le N$, so
$\frac12\mathbb E\sum_tJ_t\le\mathbb E\sum_tZ_t\le N$, proving \eqref{eq:walk-short} for any clipped step length.
From any later history, this also bounds expected remaining
queries by $11N\tau^{-2}$. For $D=\lceil22N\tau^{-2}\rceil$ and query
count $T$, conditional Markov bounds give
$\Pr[T>(r+1)D| T>rD]\le1/2$ and
$\Pr[T>rD]\le2^{-r}$.
Take $r=\lceil20L\rceil$ and increase $C_T$ for the progress lemma.\looseness=-1
\end{proof}
 
Starting dimensions decrease geometrically, so the capped trial uses
$O(n\tau^{-2}L)=n\log^{O(1)}n$ direction queries.
In an epoch of initial size $N$, $T$ exposures use at most $T+N$ slots:
one self-normal per exposure and at most $N$ coordinate normals.
Ordinary batches consume $\Omega(N/L)$ slots. If a coordinate group
would exceed the remaining slots, charge that batch's termination to the
group. Therefore\looseness=-1
\begin{align}
 \#\{\text{batches in the epoch}\}
 &\le C\left(1+\frac{L(T+N)}{N}\right)\le C\left(1+L+C_T\tau^{-2}L^2\right)
 =\log^{O(1)}n.
 \label{eq:walk-batches}
\end{align}
For explicit one-direction batches, $k=O(L)$ and $s_t\ge N/2$ imply
$N=O(L)$, so their count is at most
$T\le C_TN\tau^{-2}L=O(\tau^{-2}L^2)$.
Thus each capped epoch has $O(\tau^{-2}L^2)$ batches. There are at most
$1+\lceil\log_2n\rceil$ epochs, giving
$B_{\rm tot}\le C\tau^{-2}L^3$ batches per capped trial.
 
\subsection{Concentration and final discrepancy}
 
Self-constraints make successive directions nearly orthogonal within a
batch. If $\mathbf H_b=[h_1\ \cdots\ h_b]$, their Gram bound
$\|\mathbf H_b^{\mathsf T}\mathbf H_b\|_{\mathrm{op}}\le2$ controls every
coordinate and barrier gradient at once (see \eqref{eq:walk-gram}). Adding each
direction as a constraint is what yields this bound, and it is a single
operator-norm bound covering all functionals at once, so the same estimate
handles coordinates, barrier gradients, and retired rows, which the algorithm
never examines.
Concavity then makes the spectral potential a supermartingale, and the small
norm of a retired row limits its later fluctuation even across many batches.
The inner window is
$(x_0+E)\cap\mathcal N_{a/4}(x_0)$; the step length keeps both proposed
endpoints in the outer window.
 
\begin{proposition}[Trial guarantees]
\label{prop:walk-success} The following hold
jointly for the stopped capped walk with probability at least
$1-1/100-n^{-8}-n^{-6}$: every batch stays in its inner trust window.
In epochs of initial size $N$, every query state satisfies
$F_{b_j}(\mathbf M_t)\le NP$ and
$\mathbf1^{\mathsf T}(b_j\mathbf I-\mathbf M_t)^{-1}\mathbf1
\le\mathcal P s_t$, where $s_t$ is the active dimension before the query.
The first bound also holds at the terminal endpoint before row updates.
Every row's error after retirement is at most $1/100$, including
nearest-sign rounding of the last fewer than $64$ coordinates.\looseness=-1
\end{proposition}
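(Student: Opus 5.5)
We prove the three guarantees jointly by induction over batches, using a stopping-time argument. Let $\zeta$ be the first query at which either (i) the walk leaves the inner window $(x_0+E)\cap\mathcal N_{a/4}(x_0)$ of its current batch, (ii) $F_{b_j}(\mathbf M_t)>NP$, or (iii) some retired row has accumulated error above $1/100$. We show each event has small probability for the walk stopped at $\zeta$. Before $\zeta$ every hypothesis of \Cref{prop:batch-geometry} holds at the current anchor, and the failure probabilities can then be summed.

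\textbf{Trust window.} Fix a batch with directions $h_1,\dots,h_b$. Because each $h_k$ is taken orthogonal to the self-normals of earlier directions (up to the sampler's quantization and the Gaussian-completion error controlled by \Cref{thm:sampler-conditioning}), I would first establish the Gram bound $\|\mathbf H_b^{\mathsf T}\mathbf H_b\|_{\op}\le2$. For any unit functional $w$ (a coordinate vector $e_j$, or $g_i(x_0)/2$ using $\|g_i\|\le2$), the displacement is $w^\top(x-x_0)=\sum_k\sigma_kt_kw^\top h_k$. Here $\sigma_k$ is a fair sign independent of the predictable coefficient $t_kw^\top h_k$, so the displacement is a martingale. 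Its quadratic variation satisfies
\[
\sum_k t_k^2(w^\top h_k)^2\le\tau^2\|\mathbf H_b^{\mathsf T}w\|^2\le2\tau^2 .
\]
Freedman's (or Azuma's) inequality then gives deviation at most $O(\tau\sqrt{L})\ll a/4$ with probability $1-n^{-20}$, using $\tau\le a/(2^{20}L)$. A union bound covers the $O(n^2)$ functionals and the $B_{\rm tot}=\log^{O(1)}n$ batches. Each step adds at most $\tau$ per functional, so both proposed endpoints lie in the outer window $\mathcal N_a$. This yields the $n^{-8}$ term.

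\textbf{Potential bound.} While the walk stays in the window, \Cref{prop:batch-geometry} shows that $F_{b_j}(\mathbf M_t)$ is a supermartingale within each batch. \Cref{lem:curv-ground-state} shows it is nonincreasing under the row updates at anchors, and raising $b_j$ across epochs only decreases it. We then need a starting bound. At the start of the epoch, $F_{b_j}\le F_{b_{j-1}}$ evaluated at level $b_j$. The gap $b_j-b_{j-1}=\frac14 j^{-2}$ and $\mathbf M\prec b_{j-1}\mathbf I$ bound the resolvent by $4j^2$, so $F_{b_j}\le 4j^2N=O(L^2N)$, which is at most $NP/200$ once $C_P$ is large. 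For $j=0$ we use $\mathbf M\preceq b_{\rm dep}\mathbf I$ instead. Doob's maximal inequality for the nonnegative stopped supermartingale gives $\Pr[\sup F>NP]\le1/200$ per epoch. To keep the total below $1/100$ over $L$ epochs, I would sharpen the starting bound to $\le NP/(200 L)$ by taking $P=C_PL^3$ with $C_P$ large. The second bound follows because $F\le NP$ and $s_t\ge N/2$ give $\ones^\top\mathbf R\ones\le NP+s_t/b\le(2P+1)s_t\le\mathcal Ps_t$. At the terminal endpoint of a batch the same inequality holds, since the endpoint lies in the outer window where concavity and $\mathbf M\prec b\mathbf I$ still hold.

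\textbf{Retired rows.} This is the main obstacle, because a retired row is never examined again and its error accumulates across all later batches. For a retired row with vector $b^{\rm ret}$, $\|b^{\rm ret}\|^2\le q_*=c_*L^{-4}$. Its error is $\sum\sigma_tt_t\,b^{{\rm ret}\top}h_t$, a martingale over all steps. Within one batch the Gram bound gives quadratic variation at most $2\tau^2q_*$. Summing over all $B_{\rm tot}\le C\tau^{-2}L^3$ batches, the total variation is at most $2C q_*L^3=O(c_*/L)$. Freedman's inequality with deviation $1/200$ then fails with probability $\exp(-\Omega(L/c_*))\le n^{-10}$, and a union bound over fewer than $n^2$ rows gives $n^{-6}$. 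One point needs care here: $B_{\rm tot}$ is only a high-probability cap, but the trial is capped by construction, so this is fine. The final nearest-sign rounding of fewer than $64$ coordinates moves a retired row by at most $\sum_{j}|b^{\rm ret}_j|\le8\sqrt{q_*}<1/200$ once $c_*$ is small. Together these bound the total retired-row error by $1/100$, and combining the three failure probabilities gives the stated bound $1-1/100-n^{-8}-n^{-6}$.
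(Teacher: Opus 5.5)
Your proposal is correct and follows the paper's proof essentially step for step: the batch Gram bound with a fair-sign maximal inequality for the trust windows, Doob's inequality on the stopped supermartingale from starting values $F_{b_j}\le 4j^2N$ (summing to $CL^3/P$ over epochs) for the potential, and the per-batch bound $2\tau^2q_*$ multiplied by $B_{\mathrm{tot}}\le C\tau^{-2}L^3$ for retired rows. Two small corrections. Within a batch $\ones^{\mathsf T}\mathbf R\ones=F_{b_j}+s_{\mathrm{amb}}/b_j$, where the ambient dimension satisfies $s_{\mathrm{amb}}\ge s_t$; this still gives $\mathcal P s_t$ because $s_{\mathrm{amb}}\le N\le 2s_t$. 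At the terminal endpoint only the potential bound is claimed (there $s_t$ may fall below $N/2$), and it follows from Doob's inequality covering that last state.
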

 
Appendix~\ref{app:walk} proves the proposition. The window bound uses the
batch Gram inequality and independent endpoint signs. In a valid window the
potential is concave, and row updates and epoch barrier increases cannot
increase it, so Doob's inequality controls every epoch. For a retired row,
the squared projections sum to $O(q_*L^3)$ over the capped trial, so
sufficiently small $c_*$ proves the error bound.
Proposition~\ref{prop:final-defect-bound} combines these guarantees with the
barrier invariant to bound the final discrepancy. Section~\ref{sec:precision}
adds numerical error and assembles the constant probability of passing
exact verification.\looseness=-1
 
\section{Adaptive completion sampling}
\label{sec:sampler}
 
The walk needs directions that vanish on newly fixed coordinates and are
almost orthogonal to earlier directions, so that cumulative motion remains
controlled. Recomputing a feasible basis after every step would be costly.
Instead, we impose each new constraint by replacing one row of a Gaussian
completion matrix and obtain the next direction from an inverse column.
Within each batch, fix $\mathbf{Z}\in\R^{s\times k}$ satisfying $k\ge c_0s$ and
\begin{equation}\label{eq:sampler-basis-conditioning}
 \frac12\le \sigma_{\min}(\mathbf{Z})\le\sigma_{\max}(\mathbf{Z})\le2.
\end{equation}
We sample in $E=\operatorname{range}(\mathbf{Z})$ through coefficient
coordinates. Each exposed direction is added back as a constraint, which we
call a \emph{self-constraint}, alongside the constraints for newly fixed
coordinates. Form the first $k-1$ rows of $\mathbf{Q}\in\R^{k\times k}$ as
$g_i^\top\mathbf{Z}$ for independent $g_i\sim N(0,\mathbf{I}_s)$ in one
matrix product. Its last, \emph{reference} row is an independent
$r\sim N(0,\mathbf{I}_k)$, refreshed with $O(k)$ scalar samples.
Replace Gaussian rows in fixed order, calling the survivors \emph{hidden}.
For invertible $\mathbf{Q}$, $y=\mathbf{Q}^{-1}e_k$ annihilates its first
$k-1$ rows. The reference row orients and scales $z=\mathbf{Z}y$ on their
null line. With $w=\mathbf{Z}^\top z$, we normalize by
\begin{equation}\label{eq:sampler-output-query}
 h=\frac{z}{\lVert z\rVert_2},
 \qquad c=\frac{w}{\lVert z\rVert_2}=\mathbf{Z}^\top h.
\end{equation}
 
In exact arithmetic one may take $\bar h=h$ and $\bar c=c$. In finite
precision, the dyadic rounder of \Cref{sec:precision} uses independent dithers
to produce $\bar h=\mathcal Q(h)$ and $\bar c=\mathcal Q(c)$ with
$\lVert\bar h-h\rVert_2\le\epsilon_s$ and
$\lVert\bar c-\mathbf{Z}^\top h\rVert_2\le\epsilon_s$.
The self-constraint $\bar c^\top y'=0$ has physical normal
$n=\mathbf{Z}(\mathbf{Z}^\top\mathbf{Z})^{-1}\bar c\in E$, satisfying
$\lVert n-h\rVert_2\le2\epsilon_s$.
Following Algorithm~\ref{alg:batch}'s insertion order and capacity check,
each real insertion independently refreshes the reference row and replaces
the next hidden row.
Let $F\subseteq E$ be the common kernel of the real constraint normals,
and write $\mathcal F$ for the exposed history. This history contains directions, quantized outputs, walk decisions,
and real constraints. It excludes hidden rows, raw reference rows,
unnormalized outputs $(y,z,w)$, and other inverse entries used internally.
Stop before the first dependent real insertion, as determined by this history.
Each earlier insertion removes one hidden row and one feasible dimension,
leaving $\dim F-1$ hidden rows.
If $2\epsilon_s<1$, a self-normal cannot cause the first dependence:
$h$ lies in the current feasible space $F$, previous real normals are
orthogonal to $h$, and $\langle n,h\rangle\ge1-2\epsilon_s>0$.
So only a coordinate insertion can cause the stop.\looseness=-1
 
Perturb the anchor basis $\mathbf{Z}_0$, whose singular values lie in
$[3/4,5/4]$, using $\eta=n^{-C_\eta}$:
\begin{equation}\label{eq:sampler-smoothed-basis}
 \mathbf{Z}=\mathbf{Z}_0+\eta \mathbf{G}+\mathbf{E}_{\mathrm{round}},
 \qquad \mathbf{G}_{ij}\overset{\mathrm{iid}}\sim N(0,1),
 \qquad \lVert \mathbf{E}_{\mathrm{round}}\rVert_{\mathrm{op}}
       \le\eta\sqrt{k}/32.
\end{equation}
Set $C_\eta$ within the anchor's allowed error, condition on $\mathbf{Z}$,
and draw an independent hidden completion.
For large constants $C_b,C_k$, set
$b=\lfloor k/(C_b\log(2n))\rfloor$. Choose $C_B,C_0$ so that
$C_BL\ge C_b\log(2n)$ and $C_0L\ge C_k\log(2n)$, with $C_0\ge C_B$.
Every nonexplicit batch then inserts at most $b$ constraints.
This cap comes from the sampler, not from the trust window: the window
argument uses only the Gram bound, which depends on the batch length only
through $b\epsilon_s$, a negligible quantity because $\epsilon_s$ is inverse
polynomial.
The sampler needs the feasible dimension to stay at least $k/2$, so that
directions remain nearly isotropic, and its union bounds over sets of
coordinates need $b$ to be smaller than $k$ by a logarithmic factor.\looseness=-1
 
\begin{theorem}[Completion-sampler guarantees]
\label{thm:sampler-conditioning}
Fix $A>0$ and a polynomial upper bound on sampler operations in a trial.
Choose the preceding constants sufficiently large and set
$\epsilon_s=n^{-C_s}$ with sufficiently large $C_s$.
For a basis satisfying \eqref{eq:sampler-basis-conditioning}, every query
of the stopped process, with $d=\dim F$, satisfies for predictable $v$
\begin{equation}\label{eq:sampler-subgaussian-square}
 \E\left[
 \exp\left(\frac{d\langle v,h\rangle^2}
                  {16\lVert v\rVert_2^2}\right)
 \middle|\mathcal F\right]\le2,
 \qquad
 \E[hh^\top\mid\mathcal F]\preceq\frac2d \mathbf{P}_F,
\end{equation}
where $\mathbf{P}_F$ is the orthogonal projection onto $F$ and the first
bound is trivial for $v=0$.
With probability at least $1-n^{-A}$ over the smoothed basis and sampler
randomness, the basis satisfies \eqref{eq:sampler-basis-conditioning}, no
real insertion is dependent,
and every query matrix and every intermediate matrix after a reference
refresh or real insertion satisfies
$\lVert\mathbf{Q}\rVert_F+\lVert\mathbf{Q}^{-1}\rVert_F\le n^{C_Q}$
for a fixed constant $C_Q$.
The conditional estimates use only the exposed history, without
conditioning on this high-probability event.
\end{theorem}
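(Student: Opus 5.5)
The proof splits into a distributional part and a high-probability conditioning part.

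\textbf{Distributional part.} The key claim is that, conditional on the exposed history $\mathcal F$, the direction $h$ is (up to sign) uniform on the unit sphere of $F$. Let the current feasible space be $F\subseteq E$ with $d=\dim F$. Write the hidden rows as $g_i^\top\mathbf Z$ and express each through the physical vector $\mathbf Z^{+\top}$-image in $E$. The first $k-1$ rows of $\mathbf Q$ are the real constraint rows together with $d-1$ surviving hidden rows. The output $z=\mathbf Zy$ lies in $E$, is annihilated by every real normal, and so lies in $F$; it is also annihilated by the $d-1$ hidden rows.

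The hidden rows enter only through their projections. Write $g_i=\mathbf P_Fg_i+(\mathbf I-\mathbf P_F)g_i$, where the physical vector of row $g_i^\top\mathbf Z$ is the projection of $g_i$ onto $E$. Its action on $F$ is given by the projections $\mathbf P_Fg_i$, which are standard Gaussians in $F$. The step to check is that these projections remain independent of $\mathcal F$. They do: every earlier output depended on the hidden rows only through their kernel line intersected with earlier feasible spaces, and the history records only normalized directions. I would prove this by induction on insertions using rotational invariance. Conditional on $\mathcal F$, the joint law of $(\mathbf P_Fg_i)_i$ is invariant under orthogonal maps of $F$, because earlier directions are themselves in the kernel of the current $F$-constraints and a rotation of $F$ fixing the past normals preserves everything exposed.

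It follows that the common kernel in $F$ of $d-1$ i.i.d.\ rotation-invariant Gaussians is a uniform random line. The independent reference row then fixes only the sign and scale, so $h$ is uniform on $S(F)$. From this, $\E[hh^\top\mid\mathcal F]=\mathbf P_F/d\preceq\frac2d\mathbf P_F$. For the subgaussian bound, $\langle v,h\rangle^2$ for uniform $h$ on $S^{d-1}$ is $\mathrm{Beta}(1/2,(d-1)/2)$ times $\|\mathbf P_Fv\|^2$. The standard MGF bound $\E e^{\lambda d\langle u,h\rangle^2}\le(1-2\lambda)^{-1/2}$ with $\lambda=1/16$ gives a value below $2$. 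Quantization is excluded here, since the statement concerns $h$ itself.

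\textbf{High-probability part.} Conditioning of $\mathbf Z$ holds because $\|\eta\mathbf G\|\le\eta(\sqrt s+\sqrt k+O(\sqrt{\log n}))\ll 1/4$, so the singular values move from $[3/4,5/4]$ into $[1/2,2]$. No real insertion is dependent: self-normals are handled in the text, and for a coordinate insertion dependence means $e_j$ restricted to $E$ lies in the span of earlier normals. The Gaussian perturbation of $\mathbf Z$ puts $\mathbf Z_{j,:}$ in general position relative to a predictable span. The probability that a smoothed Gaussian row lands in a fixed subspace of dimension less than $k$ is zero, and a quantitative anti-concentration bound (distance at least $\eta n^{-C}$ with probability $1-n^{-C'}$) yields invertibility with polynomial margins. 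We need $b\le k/(C_b\log n)$ so that after at most $b$ insertions $d\ge k/2$. The union bound over coordinate groups, at most $\binom{s}{b}$ sets, is absorbed by the Gaussian small-ball bound with exponent $\Omega(k)$.

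For $\|\mathbf Q\|_F$, Gaussian norms give a polynomial bound directly. For $\|\mathbf Q^{-1}\|_F$, I would use a smallest-singular-value bound for a matrix whose rows consist of fixed well-separated real normals plus independent Gaussian rows. The route is the distance-to-span argument (Rudelson--Vershynin style): $\sigma_{\min}\ge\min_i\mathrm{dist}(\text{row}_i,\text{span of others})/\sqrt k$. For Gaussian rows this distance is polynomially small with probability $1-n^{-C}$. For real normal rows I bound it using the independence of insertions, Gram control of near-orthogonal self-normals, and the $\eta$-smoothing for coordinate rows. A union bound over the polynomially many operations then gives $1-n^{-A}$.

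\textbf{Main obstacle.} The hardest step is the conditional-independence induction: showing that the exposed history leaks nothing about the hidden rows' $F$-components beyond rotation-invariant information. The dependence of the stopping rule and insertion order on the history must not break this, and it is why unnormalized outputs are excluded from $\mathcal F$. I would make it rigorous by constructing, for each history-preserving rotation of $F$, a measure-preserving map on hidden randomness that fixes $\mathcal F$.
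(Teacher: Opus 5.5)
\paragraph{Gap 1: the direction law when self-normals are quantized.}
Your rotation-invariance argument for the law of $h$ is sound when every inserted self-normal equals the exposed direction. The theorem, however, concerns the process in which the inserted self-constraint is the quantized $\bar c$. Its physical normal $n$ only satisfies $\|n-h\|_2\le2\epsilon_s$, which is why $\epsilon_s=n^{-C_s}$ appears in the hypotheses. You set quantization aside ``since the statement concerns $h$ itself'', but $\bar c$ determines every later feasible space $F$, and hence the conditional law of every later direction.

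Once $n\neq h_{\mathrm{prev}}$, an earlier direction can have a component of order $\epsilon_s$ inside the current $F$. A rotation of $F$ fixing $F^\perp$ then moves the exposed history, so $h$ is no longer uniform on the sphere of $F$. Concretely, given the null line $h_{\mathrm{prev}}$, the hidden rows lie in $h_{\mathrm{prev}}^\perp$. Passing to $n^\perp$ is a linear map with singular values $1,\dots,1,|\cos\theta|$, which makes the law anisotropic.

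The missing step is the paper's Lemma~\ref{lem:sampler-elliptical}:
\begin{itemize}
\item carry a conditional density proportional to $\exp(-\frac12\tr(\mathbf X\bm{\Sigma}^{-1}\mathbf X^\top))\,\phi(\mathbf X\bm{\Sigma}^{-1}\mathbf X^\top)$ with $\phi$ spectral increasing;
\item show that each self-normal insertion gives $\lambda_{\min}(\bm{\Sigma}_{\mathrm{new}})\ge\cos^2\theta\,\lambda_{\min}(\bm{\Sigma})$ without raising $\lambda_{\max}$;
\item use $Cb\epsilon_s^2\le1/4$ to get $\kappa(\bm{\Sigma})\le2$, so that $h=\bm{\Sigma}^{-1/2}u/\|\bm{\Sigma}^{-1/2}u\|_2$ with $u$ uniform.
\end{itemize}
The slack in the constants $16$ and $2/d$ exists to absorb this factor $2$.

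Your claim that the projections $\mathbf P_Fg_i$ remain i.i.d.\ standard Gaussians independent of $\mathcal F$ is false even with exact self-normals. Conditioning on the null line multiplies the law of the complementary block by $\det(\mathbf A\mathbf A^\top)^{1/2}$, and the next row deletion leaves a spectral-increasing tilt. Right orthogonal invariance survives, which suffices for the direction in the exact case. Your small-ball estimates for hidden rows, however, cannot assume Gaussianity. The paper instead derives $\Pr[\sigma_{\min}(\mathbf X)\le t\mid\mathcal F]\le mt^2/(2a_0)$ from the monotonicity of $\phi$, via $\Pr[\mathbf S\succeq q\mathbf I\mid\mathcal F]\ge e^{-mq/2}$.

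\paragraph{Gap 2: dependence of coordinate insertions.}
Your argument that no coordinate insertion is dependent does not go through. The span of earlier normals is not predictable with respect to the smoothing $\mathbf G$. The self-normals and the identity of the fixed coordinates both come from a walk run on $\mathbf Z$, and clipping tends to fix exactly the coordinates where the current direction is large. So $\mathbf Z_{j,:}$ and the self-normal span are correlated, and a general-position bound against a fixed subspace does not apply; there is also no ``independence of insertions'' to invoke. Smoothing controls only the coordinate rows among themselves.

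What you need is a separation between the self-normal span $S_H$ and $V_I=\operatorname{span}\{\mathbf P_Ee_j:j\in I\}$, uniformly over all $I$ with $|I|\le b$. The paper obtains it as follows:
\begin{itemize}
\item condition on $\mathbf Z$ and apply the conditional subgaussian bound to each fixed unit $v$ in a $1/4$-net of $V_I$;
\item use the tower property to get $\Pr[\sum_{t\le b}\langle v,h_t\rangle^2>1/64]\le2^be^{-k/4096}$;
\item take a union bound over all $(9es)^b$ choices of $I$.
\end{itemize}
This gives $\|\mathbf P_{S_H}\mathbf P_{V_I}\|_{\op}\le1/2$. That bound combines $\sigma_{\min}(\mathbf N)=\Omega(1)$ and $\sigma_{\min}(\mathbf D_I)\ge c'\eta\sqrt k$ into the polynomial lower bound $\beta$ for every real-constraint matrix. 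This step is where $b\le k/(C_b\log(2n))$ is genuinely used.

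With $\beta$ and the hidden-row tail in hand, the $n^{C_Q}$ bound follows directly, without a distance-to-span argument. Use the block lower-triangular decomposition into the real rows and the $(d-1)\times d$ hidden block, together with a small-ball bound for the refreshed reference row.
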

 
\noindent\emph{Proof idea.}
Exposing directions destroys Gaussian independence, but preserves a
rotationally invariant law after a well-conditioned change of coordinates.
Small self-normal errors change this conditioning only quadratically.
Smoothing separates coordinate constraints from self-normals, and the
remaining hidden rows have a quadratic singular-value lower tail.
These bounds give a polynomially bounded right inverse for the first
$k-1$ rows. The independent reference row resolves their null line by a
Gaussian small-ball bound. Appendix~\ref{app:sampler} gives the proof.
 
{In the explicit branch, use the isotropic direction specified
in Section~\ref{sec:algorithm}. Orthonormalization and one query cost
$\tO(s^\omega)$, which is polylogarithmic because
$k\le C_0L$ and $k\ge c_0s$.}
Return $z$ and $w$ without extra dense matrix-vector products by stacking
$\mathbf{B}=\binom{\mathbf{Z}}{\mathbf{Z}^\top\mathbf{Z}}$.
Let $\bm{\mathcal A}(\mathbf{Q})$ be the block lower-triangular matrix
with diagonal blocks $\mathbf{Q}$ and $\mathbf{I}_{s+k}$ and lower-left
block $-\mathbf{B}$.
Its inverse gives all three outputs in one column:
\begin{equation}
\label{eq:sampler-augmentation}
 \begin{aligned}
 \bm{\mathcal A}(\mathbf{Q})^{-1}
 =\begin{pmatrix}\mathbf{Q}^{-1}&0\\       \mathbf{B}\mathbf{Q}^{-1}&\mathbf{I}_{s+k}\end{pmatrix},\quad
 \bm{\mathcal A}(\mathbf{Q})^{-1}\binom{e_k}{0}
 =\begin{pmatrix}y & z &w\end{pmatrix}^\top.
 \end{aligned}
\end{equation}
Its dimension is $s+2k=O(s)$ and its norms remain polynomial because
$\lVert\mathbf{B}\rVert_{\mathrm{op}}=O(1)$.
Section~\ref{sec:dynamic} maintains this augmentation and includes its
construction and all sampler operations in the run time.\looseness=-1

\section{Dynamic matrix formula maintenance and running time}
\label{sec:dynamic}

Transpose \eqref{eq:sampler-augmentation} to turn row replacements into
column updates and inverse-column requests into full-row queries
\cite{Brand2021}, returning $y,z,w$ together. Because constraint slots are
filled in a predetermined order, the data structure has weak index lookahead
\cite{BNS2019}. The finite-precision theorem of \cite{ABGZ2024} assumes an
offline index sequence, so Appendix~\ref{app:dynamic} extends its additive
Woodbury error analysis to these candidate sets.
Given $\|\mathbf{Q}\|_F,\ \|\mathbf{Q}^{-1}\|_F\le K$
and $\|\mathbf{B}\|_F\le K_C$, the block identities give
\begin{align*}
 \|\bm{\mathcal A}(\mathbf{Q})\|_F^2
 =\|\mathbf{Q}\|_F^2+\|\mathbf{B}\|_F^2+s+k
 \le K^2+K_C^2+s+k,\quad
 \|\bm{\mathcal A}(\mathbf{Q})^{-1}\|_F^2
 \le K^2+K_C^2K^2+s+k.
\end{align*}
The near-isometry of $\mathbf{Z}$ gives $K_C=O(\sqrt s)$.
 
Each operation replaces a row or requests an inverse column, and it touches
only the reference slot $k$ or the next constraint slot $p_t$. Since $p_t$
increases by at most one per operation, the next
$q$ operations use
$\mathcal I(t,q)=\{k\}\cup(\{p_t,p_t+1,\ldots,p_t+q\}\cap[k])$,
which has at most $q+2$ indices. For $t\le t'$ and $t'+q'\le t+q$,
$p_t\le p_{t'}\le p_t+(t'-t)$ and $p_{t'}+q'\le p_t+q$, so
$\mathcal I(t',q')\subseteq\mathcal I(t,q)$.
Fixing several coordinates uses separate operations, meeting the condition of
\cite[Def 6.3 and Thm 6.8]{BNS2019}. \looseness=-1
 
\begin{theorem}[Augmented inverse maintenance]
\label{thm:dynamic-maintenance}
Fix $\mathbf{Z}\in\R^{s\times k}$ with $k\le s$, and assume update and query indices have nested candidate sets
of size $O(q)$, available at the start of each horizon $q$.
Suppose throughout the $T$ operations that
$\|\mathbf{Q}_t\|_F,\|\mathbf{Q}_t^{-1}\|_F\le K$ and
$\|\mathbf{B}\|_F\le K_C$.
For $\epsilon\in(0,1]$, a data structure returns the three unnormalized vectors in
\eqref{eq:sampler-augmentation} with entrywise error at most $\epsilon$ in\looseness=-1
\begin{equation}
 \tO\left(
   \bigl(s^\omega+Ts^{\omega-1}\bigr)
   \log\frac{s(1+K)(1+K_C)}{\epsilon}
 \right)
 \label{eq:dynamic-bit-cost}
\end{equation}
bit operations.
With exact arithmetic the data structure uses
$\tO(s^\omega+Ts^{\omega-1})$ arithmetic operations.
\end{theorem}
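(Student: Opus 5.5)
We prove \Cref{thm:dynamic-maintenance} by reducing it to the dynamic inverse framework of van den Brand, Nanongkai, and Saranurak \cite{BNS2019} with lookahead, applied to the transposed augmentation $\bm{\mathcal A}(\mathbf Q)^\top$. A row replacement of $\mathbf Q$ becomes a column replacement of the transpose, a rank-one update supported on a single index. An inverse-column request for $e_k$ becomes a row query of $(\bm{\mathcal A}^\top)^{-1}$. The block structure \eqref{eq:sampler-augmentation} shows that only the first $k$ columns of $\bm{\mathcal A}$ ever change. The lower-left block $-\mathbf B$ and the identity block are static, so the updated coordinates lie in $[k]$ while queries return a full length-$(s+2k)$ vector $(y,z,w)$. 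The nested candidate sets $\mathcal I(t,q)$ already verified before the statement are exactly the lookahead hypothesis of \cite[Def.~6.3, Thm.~6.8]{BNS2019}.

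For the exact-arithmetic bound, we run the standard hierarchical Woodbury scheme. At the start of a horizon of length $q$, we precompute the rows of the current inverse indexed by $\mathcal I(t,q)$ and the corresponding $O(q)\times O(q)$ Woodbury core. Each update inside the horizon is then absorbed by a low-rank correction within the candidate block, and each query multiplies an $O(q)$-sized correction against the $O(q)\times(s+2k)$ precomputed rows. Balancing horizons at geometric scales $q=2^\ell$ and using fast rectangular multiplication gives amortized $\tO(s^{\omega-1})$ per operation. A full rebuild costs $O((s+2k)^\omega)=O(s^\omega)$, since $k\le s$. This yields $\tO(s^\omega+Ts^{\omega-1})$ arithmetic operations. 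The initial construction of $\bm{\mathcal A}(\mathbf Q)^{-1}$ is explicit from the block formula once $\mathbf Q^{-1}$ and $\mathbf B\mathbf Q^{-1}$ are formed, which are two $O(s^\omega)$ products.

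For the bit bound, we track additive error through each Woodbury step, following \cite{ABGZ2024}. Every intermediate is a submatrix of $\bm{\mathcal A}_t^{-1}$ or the inverse of a Woodbury core $\mathbf I+\mathbf V^\top\bm{\mathcal A}^{-1}\mathbf U$. The core's inverse is a block of $\bm{\mathcal A}_{t'}\bm{\mathcal A}_t^{-1}$-type products, so its norm is bounded polynomially by the Frobenius bounds displayed just before the statement, $\|\bm{\mathcal A}\|_F^2\le K^2+K_C^2+s+k$ and $\|\bm{\mathcal A}^{-1}\|_F^2\le K^2+K_C^2K^2+s+k$. With all norms bounded by $\mathrm{poly}(s,K,K_C)$, each fixed-point operation perturbs outputs by at most $\mathrm{poly}(s,K,K_C)\cdot 2^{-p}$. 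Errors compound only over the $O(\log s)$ levels of the hierarchy and are reset at each rebuild, so precision $p=O(\log(s(1+K)(1+K_C)/\epsilon))$ suffices for entrywise error $\epsilon$. Multiplying the arithmetic count by $\tO(p)$ gives \eqref{eq:dynamic-bit-cost}.

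The main obstacle is this error analysis. The result of \cite{ABGZ2024} assumes the full index sequence is known offline, whereas here the indices are revealed only through the nested candidate sets. I would redo their argument level by level. Precomputation at a horizon uses only the candidate set, which is a superset of the indices actually used, so a stable product over a larger index block has the same error bound. The nesting property $\mathcal I(t',q')\subseteq\mathcal I(t,q)$ then guarantees that lower-level corrections read only entries computed, with controlled error, at the enclosing level. The remaining care is to bound the intermediate core inverses uniformly, using the hypothesis that every $\mathbf Q_t$ along the path satisfies the $K$ bound.
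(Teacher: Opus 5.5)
Your proposal is correct and follows essentially the same route as the paper: transpose so that row replacements become single-column updates and inverse-column requests become row queries, and run the hierarchy of \cite{BNS2019}, in which each level's candidate rows and changed columns lie in its parent's stored rows by nesting and lookahead. As in the paper, you then bound the Woodbury core inverse by endpoint-matrix norms and redo the \cite{ABGZ2024} error analysis level by level over candidate sets.

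The one difference is in the error bookkeeping. Forward error compounded over the $O(\log s)$ levels, as you describe it, grows multiplicatively. That needs $O(\log s\cdot\log(s(1+K)(1+K_C)/\epsilon))$ bits rather than your stated $p$, which is harmless inside $\tO$. The paper instead virtually completes the selected rows to a full perturbed inverse, so that backward errors add: $\|\mathbf E_\ell\|_F\le(L-\ell+1)\eta$.
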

 
Appendix~\ref{app:dynamic} proves the theorem using a hierarchy of candidate
inverse rows and additive backward-error control.
\Cref{thm:sampler-conditioning} proves the norm hypothesis even
between a reference refresh and the following insertion, with adaptive row values.
 
\paragraph{Running time.}
Each exposure or insertion uses $O(1)$ dynamic operations and $O(s)$
scalar work in \Cref{alg:batch}.
By \Cref{sec:foundations}, there are at most $s/q_*$ live rows, since
$q_i>q_*$ and $\sum_iq_i\le s$. Including both signs gives
$\tO(s)$ signed rows since $q_*^{-1}=\operatorname{polylog}n$. {We maintain incidence lists and permanent contributions, and update
each mass $q_i$ by subtracting the squared coefficient whenever an entry
is removed. At trial initialization, we build a segment tree on each row's
list of nonzero entries in its fixed deletion order. Each node stores
the largest surviving squared coefficient in its interval.
Descending through intervals whose maximum exceeds $q_i/64$ finds the
first eligible entry in $O(\log(2n))$ work, deletion updates the tree in
the same time. We remove fixed coordinates at batch boundaries, since their
constraints force zero increments until then. Each input incidence leaves
its list once, through fixing, coefficient deletion, or retirement.
The trees take linear incidence work to build once per trial, and
processing all removals therefore costs
$\tO(\operatorname{nnz}(\bar{\mathbf A}))$ arithmetic operations.}
 
\Cref{prop:walk-success} and final verification handle retired rows.
We reconstruct live values, gradients, and weights at the next anchor in
$O(s^2/q_*)=\tO(s^2)$ work.
The light-row contribution to $\mathbf{M}$ and fixed-row majorants are
weighted Gram products. In \eqref{eq:batch-curvature}, $\mathbf{U}$ is another
weighted Gram product, $\mathbf{C}$ is a mixed product, and $\mathbf{D}$
is a diagonal row sum. Each product has $O(s/q_*)$ summands, so
dividing it into blocks of $s$ costs $O(q_*^{-1}s^\omega)=\tO(s^\omega)$.
Forming $\mathbf{P}$, inverting the spectral matrix, extracting the curvature
subspace, and initializing the augmented sampler have the same cost. Polynomial norm
bounds add only logarithmic factors in finite precision.
Within a batch, weight changes can amount to a rank-$s$ update of
$\mathbf M$; \Cref{prop:batch-geometry} lets us defer it to the next anchor,
while the completion matrix changes only by rank-one row replacements.
 
For an epoch starting with $N$ active coordinates, queries have
$N/2\le s\le N$ and polynomial norm bounds on the sampler's good event.
Equation~\eqref{eq:walk-batches} gives
$B_N=\operatorname{polylog}n$ anchors. The exposure and slot counts of
Section~\ref{subsec:walk-progress} give $T_N=\tO(N)$ dynamic operations,
including reference refreshes, constraint insertions, and direction
requests. Hence
\begin{align}
 \text{epoch work}
 &\le\tO\left(B_NN^\omega+T_NN^{\omega-1}
                         +N T_N+B_NN^2\right)=\tO(N^\omega).
 \label{eq:dynamic-epoch-cost}
\end{align}
 
The epoch sizes satisfy $N_j\le n/2^j$, so the total is
$\sum_{j\ge0}\tO(N_j^\omega)
\le\tO\left(n^\omega\sum_{j\ge0}2^{-j\omega}\right)
=\tO(n^\omega)$.
Incidence updates and verification cost $\tO(L_{\mathrm{in}}+mn)$, with
$mn=O(L_{\mathrm{in}})$ for dense encoding. For sparse encoding, they
cost its nonzero encoding length, plus $O(n)$ to write the output.
 
\section{Finite-bit Implementation and Stability}
\label{sec:precision}
This section implements the algorithm with numbers of polylogarithmically
many bits. Similarly to \cite{ABC2026finiteprecision}, we focus on the analysis of a randomized discrepancy algorithm under approximate arithmetic; however, our techniques here are far simpler as we are not concerned with a distributional guarantee such as subgaussianity \cite{BDGL2019, harshaw2023balancingcovariatesrandomizedexperiments}, and can appeal to a coupling with an ideal run of the algorithm.

Two tools do most of the work for this. First, the anchor subspace is
computed from a penalty matrix by an approximate eigendecomposition that needs
no spectral gap (\Cref{lem:penalty-bound}). Second, every stored output is
rounded with an independent random dither. This couples the finite algorithm
to an idealized reference process that uses exact Gaussians: with high
probability the two store exactly the same outputs, so the success
probabilities proved for the reference carry over.
 
Fix exponents $C_T,C_V$ such that $T_*=nL^{C_T}$ bounds the number of
exposures, row replacements, and inverse-column requests in a trial, and
$V=nL^{C_V}$ bounds $\ones^\top\mathbf{R}\ones$.
Scalar Gaussian draws and output coordinates have separate polynomial bounds.
We first round each input entry toward zero on a dyadic grid of spacing
$\Delta_{\mathrm{in}}\le\delta_{\mathrm{in}}/n$, where
$\delta_{\mathrm{in}}=10^{-4}$, obtaining $\bar {\mathbf{A}}$.
Column normalization persists, since
$\sum_i\bar a_{ij}^2\le\sum_i a_{ij}^2\le1$, and every signing satisfies
\begin{align}
 \bigl|((\mathbf{A}-\bar {\mathbf{A}})\varepsilon)_i\bigr|
 &\le\sum_{j=1}^n|a_{ij}-\bar a_{ij}|
 \le n\Delta_{\mathrm{in}}
 \le\delta_{\mathrm{in}}.\label{eq:input-transfer}
\end{align}
This reduction costs $\tO(L_{\mathrm{in}})$ and leaves $O(\log(2n))$ bits
per entry; the row reduction of \Cref{sec:foundations} is applied afterward.
Appendix~\ref{app:precision-penalty} bounds the errors in assembling the
anchor matrices and in smoothing the basis.
 
\begin{proposition}[Finite implementation]
\label{prop:finite-implementation}
{Let the continuous-Gaussian reference make at most
$N_{\mathrm{out}}\le n^{C_o}$ scalar output roundings and stop only on
conditions measurable in its exposed history. Let $G$ be a reference event
on which all Gaussian magnitudes are polynomially bounded and
$\norm{\mathbf{Q}_t}_F+\norm{\mathbf{Q}_t^{-1}}_F\le n^{C_Q}$
at every query and intermediate state, with constant slack in any finite numerical guard.
In explicit batches, also require $\|g\|_2\ge n^{-C_g}$ for a fixed $C_g$.}
 
{\textup{(i)} For any fixed $C_f>0$, inverse-polynomial tolerances
couple the two processes with the same dyadic anchor bases and output rules.
On $G$, all stored constraint rows and directions agree outside an event
of unconditional probability at most $n^{-C_f}$.}
 
{\textup{(ii)} Local errors at most $\epsilon$ cause additional potential
drift and deterministic discrepancy error at most
$T_*\epsilon\operatorname{poly}(n,V,q_*^{-1},\rho^{-1})$.
The reference batch concentration bounds persist after changing absolute constants,
and \textup{(i)} transfers their success probabilities to the finite algorithm.
Conditional estimates concern the reference's original exposed history,
without conditioning on $G$. Each direction and clipping length is exposed before drawing
the independent fair endpoint sign, and each dither is independent of
its scalar conditional on the preceding transcript.}
\end{proposition}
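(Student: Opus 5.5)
The proof will couple the finite-bit run to the continuous-Gaussian reference step by step. Both processes use the same scalar Gaussian draws, the same fair endpoint signs, and the same dithers. The reference computes every stored quantity (anchor bases, outputs $h,c$, clipped lengths, dyadic points) exactly, and the finite algorithm computes the same quantities up to a local error $\epsilon$. Each stored scalar $\xi$ is rounded by a dithered rule $\mathcal Q(\xi)=\Delta\lfloor \xi/\Delta+U\rfloor$ with $U$ uniform and independent of the transcript. The two processes then store different values only if some grid threshold falls between $\xi_{\mathrm{ref}}$ and $\xi_{\mathrm{fin}}$. That happens with probability at most $|\xi_{\mathrm{ref}}-\xi_{\mathrm{fin}}|/\Delta$, conditionally on the preceding transcript, because the dither is fresh.

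\textbf{Part (i).} I would argue by induction over the at most $N_{\mathrm{out}}$ roundings. While all previous stored values agree, the two processes run the same algorithm on the same inputs. On $G$, the conditioning bound $\|\mathbf Q_t\|_F+\|\mathbf Q_t^{-1}\|_F\le n^{C_Q}$, the polynomial Gaussian magnitudes, and \Cref{thm:dynamic-maintenance} with accuracy $\epsilon$ together give $|\xi_{\mathrm{ref}}-\xi_{\mathrm{fin}}|\le \epsilon\,n^{O(1)}$. The normalization $z/\|z\|$ is handled by a lower bound on $\|z\|$, which follows from $\sigma_{\min}(\mathbf Z)\ge1/2$ and $\|y\|\ge\|\mathbf Q\|^{-1}$. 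In explicit batches the same role is played by $\|g\|\ge n^{-C_g}$. At anchors, the gap-free approximate diagonalization of \Cref{lem:penalty-bound} gives the basis error, and its dyadic rounding is dithered in the same way.

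A union bound over all roundings gives disagreement probability at most $N_{\mathrm{out}}\epsilon n^{O(1)}/\Delta$. Choosing $\Delta=n^{-C}$ and then $\epsilon=n^{-C'}$ with $C'$ large makes this at most $n^{-C_f}$. Numerical guards must make the same decisions in both processes, and the constant slack in the guards handles this. The event $G$ is used only pointwise in the error bound, so the failure probability is unconditional.

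\textbf{Part (ii).} The discrepancy of row $i$ and the potential are Lipschitz in the coloring and in the matrix entries. The Lipschitz constants are polynomial in $n,V,q_*^{-1},\rho^{-1}$: the resolvent derivative is bounded by $\|\mathbf R\|^2\le V^2$, and $\nu'$ is bounded on $u\le0$. Under the coupling, a local perturbation of size $\epsilon$ per operation therefore changes $F$, the barriers, and the tracked values by at most $\epsilon\operatorname{poly}(\cdot)$. These changes add up to at most $T_*\epsilon\operatorname{poly}(\cdot)$ over the trial.

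This drift is absorbed into the slack of the caps $NP$ and $\mathcal Ps$, of the window radius $a$ against $a/4$, of the margin $\rho$ against $\rho/2$, and of the verification threshold $98.99$ against $99$. The concentration arguments behind \Cref{prop:walk-success} then go through with adjusted absolute constants. The order of exposure is preserved because clipping uses only quantized $\widehat h$, and the sign is drawn afterward. The supermartingale and Doob arguments apply to the reference's exposed history. Statement (i) then transfers the success probability to the finite algorithm, losing only $n^{-C_f}$.

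\textbf{Main obstacle.} The hardest step is the per-rounding error bound for the normalized outputs. It needs the sampler's conditioning bound at every intermediate matrix, including immediately after a reference refresh. I would take this from \Cref{thm:sampler-conditioning}, applied to the reference process. Because that theorem concerns the reference process, the event $G$ must be a reference event, which is why the statement is phrased that way.
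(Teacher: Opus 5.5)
\textbf{Sampler outputs.} Your coupling of the sampler outputs is essentially the paper's argument. Agreement of all earlier stored values keeps the constraint rows identical, so only Gaussian-approximation and inverse-maintenance errors remain. Normalization is Lipschitz because $\norm{z}_2\ge(2\norm{\mathbf Q_t}_2)^{-1}$, and a fresh dither makes each disagreement unlikely given the preceding transcript. The paper's variant differs slightly: it uses a discrete dither with an abort within $e/2$ of the grid, and defines the bad event through the reference scalar alone, so it never conditions on $G$.

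\textbf{The gap is at the anchors.} You treat the anchor basis like any other stored scalar: computed exactly by the reference, computed to within $\epsilon n^{O(1)}$ by the finite algorithm, and coupled by dithering. No such reference basis exists. The admissible $E$ is any $\lceil s/32\rceil$-dimensional nonpositive subspace of $\mathbf P_0-(1-\rho)\mathbf D_0$ on $\ker\mathbf H$. Without a gap between the $k$th and $(k+1)$st eigenvalues, such a subspace is not a continuous function of the anchor data. Moreover, \Cref{lem:penalty-bound} certifies only properties of whatever basis it returns, not forward closeness to a fixed one. This is why statement (i) has the two processes share the same dyadic anchor bases instead of coupling them.

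Once anchors are shared, the reference itself runs on an approximate certificate. The span satisfies only $\mathbf P_0[h]\le(1-\rho)\mathbf D_0[h]+\xi\norm{h}^2$, and it lies only approximately in $\ker\mathbf H$. Part (ii) therefore cannot be a Lipschitz comparison with an exactly valid process, because there is no nearby exactly valid direction to compare against. Nor can the additive term be absorbed into the margin $\rho$ versus $\rho/2$, because $\mathbf D_0$ has zero diagonal entries on coordinates touched by no light row.

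The missing step is the additive-error version of the stability argument in \Cref{prop:batch-geometry}, run with $\widehat{\mathbf D}_x=\mathbf D_x+(2\xi/\rho)\mathbf I$. It bounds the Hessian on the anchor span by $2\xi/\rho$ throughout the window, giving drift $O(\tau^2\xi/\rho)$ per step. This is where the factor $\rho^{-1}$ in the statement comes from.

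The approximate $\ker\mathbf H$ membership likewise means protected barriers can rise during a batch. The paper bounds this rise by $Cb_{\max}\epsilon$ through the quadratic identity and buffers the frozen weights to $\nu(u_i(x_0)+\mu)$, so that they still majorize. It then uses the fact that protected episodes begin below $-3/4$ to keep barriers negative.

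Your Lipschitz summation covers endpoint rounding and direction errors relative to the shared span. It does not cover the certificate error itself, whose effect is second order and must go through the curvature argument.
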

 
Independent dithers make identical stored outputs likely even when queries
are approximate. Agreement keeps errors local: an error in one output is not
fed into later adaptive constraints. Appendix~\ref{app:precision-rounding}
proves this coupling.
 
\paragraph{The rounding rule.}
Choose polynomially small dyadic $\Delta,e$ and a power of
two $M$ with $e\le\Delta n^{-C_f}/(8N_{\mathrm{out}})$ and
$M\ge8N_{\mathrm{out}}n^{C_f}$. We then query and normalize using
\eqref{eq:sampler-output-query} to scalar error at most $e/2$.
For each approximate normalized scalar $\widehat a$, independently draw
uniform $U\in\{0,1/M,\ldots,(M-1)/M\}$ and set
$r=\widehat a+\Delta U$. Abort the trial if
$\operatorname{dist}(r,\Delta\mathbb Z)\le e/2$. Otherwise store
$\Delta\lfloor r/\Delta\rfloor$. From the rounded outputs $\bar h,\widehat c$,
set $\widehat h=3\bar h/4$, with norm in $[1/2,1]$, then zero its fixed
coordinates $J$ as in Algorithm~\ref{alg:batch}. Return $(\widehat h,\widehat c)$.
Store the coloring on a dyadic grid of spacing $\Delta_x$, with fixed
coordinates exactly $\pm1$. Clip using exact cross multiplication and
round the chosen endpoint to this grid inside the cube, preserving attained
endpoints. Let $\epsilon$ be the local error after amplification.\looseness=-1
 
Appendix~\ref{app:precision-robustness} bounds polynomial error amplification
and cumulative drift, preserving the negative barriers and deposit margin.
Fix a total implementation allowance $\zeta=10^{-3}$ and choose the
effective error to satisfy
\begin{equation}
\label{eq:precision-hierarchy}
 \epsilon\le
 \min\left\{
 \frac{\zeta}{CT_*(1+V)^3},\,
 \frac{\zeta\sqrt{q_*}}{CT_*\sqrt n},\,
 \frac{a}{CT_*},\,
 \frac{1}{CT_*n}
 \right\}.
\end{equation}
We choose the anchor, smoothing, evaluation, and quantization tolerances
polynomially smaller still. Each inverse tolerance has logarithm $O(\log n)$,
and stable linear algebra needs $O(\log^2 n)$ guard bits. These choices depend
only on $n$ and universal constants, and make no spectral-gap assumptions.

At an anchor, the construction of \Cref{sec:algorithm} is carried out with the
following changes. Estimate every load $t_i=z_i^\top v$ to error $\Theta/8$
and mark a row heavy when the estimate reaches $3\Theta/4$. Normalize the
gradients that form $\mathbf H$, fix the frozen rows' weights at
$\nu(u_i(x_0)+\mu)$, and recompute the anchor matrices and guards. Here $\mu=Cb_{\max}\epsilon$, where $b_{\max}$ bounds
batch exposures and $\epsilon$ is the amplified local error above. This
buffer makes the fixed weights dominate the true weights despite constraint
errors. The tolerance choices keep every shifted barrier argument negative.
Apply the verified construction of Lemma~\ref{lem:penalty-bound} to
$\mathbf{P}_0-(1-\rho)\mathbf{D}_0$ and $\mathbf H$, with at most
$C_A\log(2n)$ independent attempts. Restart the trial on exhaustion;
otherwise obtain $k=\lceil s/32\rceil$ approximate basis directions. Add the
Gaussian perturbation and dyadic rounding from Section~\ref{sec:sampler} to
obtain $\mathbf{Z}$.
 
At every anchor, interval evaluation checks a dyadic vector $\widehat v$ with
$\widehat v>0$, $(b\mathbf{I}-\mathbf{M})\widehat v\ge\ones/2$ entrywise,
and $\ones^\top\widehat v\le NP/2$. These conditions imply
$\mathbf{M}\prec b\mathbf{I}$ and $\ones^\top\mathbf{R}\ones\le NP$.
The guard costs $\widetilde O(s^2)$ and has constant slack under the
smaller potential cap enforced by increasing $C_P$.
Repeat it after buffering the fixed weights. Appendix~\ref{app:anchor-guards}
gives the comparisons, and completion queries use the sampler's conditioning
guarantee. On the good event, the direction in Algorithm~\ref{alg:batch} has
norm in $[1/2,1]$. We use the inverse-polynomial tolerances of this section
and restart after an undefined operation, overflow, or failed guard.\looseness=-1
 
\begin{proof}[Proof of Theorem~\ref{thm:main}]
Round the input as in \eqref{eq:input-transfer}. Initial row reductions give
$\mathbf{M}\preceq b_{\mathrm{dep}}\mathbf{I}$. At each anchor,
Proposition~\ref{prop:batch-geometry} and
Lemma~\ref{lem:penalty-bound} give
$k=\lceil s/32\rceil$ approximate basis directions, with smoothing and
rounding errors small enough for Proposition~\ref{prop:finite-implementation}.
Near-spherical directions and approximate orthogonality allow the progress,
window, resolvent, and retirement bounds. Choose constants making each of
four groups of bad events have probability at most $1/16$:
{potential and work caps, trust windows and retirement,
verified anchor construction and conditioning, and finite-query disagreement.
For guard passage, use the smaller potential cap from
Appendix~\ref{app:anchor-guards}, obtained by increasing $C_P$.} The polynomial operation
cap permits a union bound in the last two groups, with compatible
inverse-polynomial tolerances. On the joint good event, affine discrepancy
and numerical error give\looseness=-1
\begin{align}
 \norm{\mathbf{A}\eps}_\infty
 &\le\norm{\bar {\mathbf{A}}\eps}_\infty+10^{-4}<\frac{31673}{320}+10^{-2}+10^{-3}+10^{-4}
<98.99<99.
 \label{eq:assembly-discrepancy}
\end{align}
A trial passes exact verification with probability at least $3/4$.
Equation~\eqref{eq:algorithm-verification} guarantees discrepancy below $99$
for every accepted output, including outside this event.
Section~\ref{sec:dynamic} bounds all epochs by
$\widetilde O(n^\omega)$ bit operations after input preprocessing.
Input reduction, incidence updates, and verification add
$\widetilde O(L_{\mathrm{in}})$.
Independent restarts need at most $4/3$ expected trials, proving the bit complexity.\looseness=-1
\end{proof}

\section*{AI Disclosure and Acknowledgements}

We acknowledge an extended conversation with OpenAI's GPT-6 Astra model about the arguments in \cite{Li2026} and possible ways to
batch the walk to exploit recent results in dynamic matrix inversion. We also used Cognition's Devin agent with the SWE-2 model as well as Claude Opus 5.5 to improve the presentation, and in \url{https://github.com/emiletimothy/komlos-omega-lean}, we document our efforts in providing a formal Lean certificate. Finally, we thank Abhishek Shetty, Jan van den Brand, and Ijay Narang for helpful comments. EA's research is supported by NSF Grant CCF 2338816. \looseness=-1

\bibliographystyle{alpha}
\bibliography{main}

\appendix

\section{Row representations and discrepancy reduction}
\label{app:rows}

Write $K=K_{\rm rat}$ in the proofs.
Cauchy-Schwarz gives $1<\norm{a_i}_1^2\le n\norm{a_i}_2^2$ for each remaining row, so column normalization bounds their number by $\sum_{i\text{ remaining}}1<n\sum_{i\text{ remaining}}\norm{a_i}_2^2\le n\sum_{j=1}^n\sum_{i=1}^m a_{ij}^2\le n^2$. For the rational scale, let $a=\max_j|b_{ij}|>0$ and choose the least integer $\ell\ge0$ with $\gamma^{2\ell}a^2\ge q_i$, where $\gamma=65/64$. Set $r_i=\min\{\gamma^\ell a,R_{\max},r_i^{\rm previous}\}$, omitting the previous scale if none exists. Each candidate is at least $\sqrt{q_i}$, and $\gamma^\ell a<\gamma\sqrt{q_i}$ by minimality, including when $\ell=0$. Thus $r_i$ is non-increasing, $q_i\le r_i^2<\gamma^2q_i$, and $r_i\le R_{\max}$. Since $q_i/a^2\le s$, the choice takes $O(\log(2s))$ rational comparisons.
Differentiating \eqref{eq:rational-weight} gives $0<\nu(u)\le119$, $\nu'(u)/\nu(u)=K/(2K-u)$, and $\nu''(u)/\nu'(u)=(K+1)/(2K-u)\le(1+1/K)/2$ for $u\le0$. For every fixed nonnegative integer $r$,
\begin{equation}
\begin{aligned}
\frac{\nu'(u)^2}{\nu(u)}&=\frac K{K+1}\nu''(u),
&\frac{d}{du}\log\nu^{(r)}(u)&=\frac{K+r}{2K-u}.
\end{aligned}
\label{eq:weight-identities}
\end{equation}
Integrating the log derivative shows that changing a non-positive barrier by at most $a\le1$ changes $\nu$, $\nu'$, and $\nu''$ by factors at most $\exp(O(a))$.

\paragraph{Verification of row updates.}
\label{app:row-update-proof}
We prove the row-update guarantees stated in Section~\ref{sec:foundations}.
Initially, and when a large row becomes medium, $d_i=0$. Since $G_i\le q_i\le r_i^2$ and $r_i\le R_{\max}$, both barriers satisfy $u_{i\sigma}\le16-H_0/r_i\le-5864/291<-20$. We show that
\begin{equation}
\sum_{\sigma}\nu(u_{i\sigma})z_iz_i^\top
 \preceq\frac{1000}{1009}\,b_{\mathrm{dep}}
                  \operatorname{diag}(p_i).
\label{eq:birth-deposit}
\end{equation}
For nonnegative $p$ with $q=\sum_jp_j$, Cauchy-Schwarz gives $pp^\top\preceq q\operatorname{diag}(p)$. Together with $q_i\le r_i^2$, this reduces \eqref{eq:birth-deposit} to bounding $f(r)=2\nu(16-H_0/r)/r^2$. For $0<r\le R_{\max}$, $rf'(r)/f(r)=\bigl[(K-2)H_0/r-4K+32\bigr]/\bigl[2K-16+H_0/r\bigr]>0$, so $f(r)\le f(R_{\max})$. The exact comparison
\begin{align}
\left(1+\frac{2932}{291K}\right)^K
\ge\sum_{\ell=0}^{24}\binom{10000}{\ell}
                \left(\frac{2932}{2910000}\right)^\ell\notag>23634 >\frac{1009}{1000}\,
       \frac{5950000000}{254043}=\frac{1009}{1000}
       \frac{2\cdot119}{b_{\mathrm{dep}}R_{\max}^2}.
\label{eq:rational-birth-bound}
\end{align}
proves the margin: multiplying the 25 summands by $10^3$ and summing their floors gives $23634835$. Summing \eqref{eq:birth-deposit} and the large-row deposits, column normalization gives the initialization bound.

Transfers preserve $d_i$. At fixed scale, deletion decreases $G_i$ and $p_i$ coordinatewise, hence decreases the barrier, scalar weight, and rank-one contribution entrywise. For scale reduction, hold $G,d,p$ fixed with $u\le0$ and $G\le q\le r^2$. Differentiation gives
\begin{equation*}
r\frac{\partial u}{\partial r}=12-u-\frac{4G}{r^2}\ge8,
\qquad
r\frac{\partial}{\partial r}\log\frac{\nu(u)}{r^4}
 =\frac{4K(1-G/r^2)+(4-K)u}{2K-u}\ge0.
\end{equation*}
Thus decreasing $r$ keeps $u$ nonpositive and decreases both $u$ and $\nu(u)/r^4$, since $K\ge4$. Equation~\eqref{eq:birth-deposit} handles newly medium rows. Fixed-coordinate transfers preserve energy because $1-x_j^2=0$, and their removal takes a principal submatrix. Retirement removes two PSD, entrywise nonnegative terms.\looseness=-1

\subsection{Final discrepancy}

For a representation change $\varphi_i^+(y)=\varphi_i(y)+\Delta c_i+\Delta b_i^\top y$ at $x^*$, define the anchor jump $\upsilon_i=\varphi_i^+(x^*)-\varphi_i(x^*)$. Each change satisfies $\varphi_i^+(y)-\varphi_i(y)=\upsilon_i+\Delta b_i^\top(y-x^*)$. Telescoping from the initial representation $\varphi_i(y)=a_i^\top y$ over changes indexed by $e$ at anchors $x^e$ gives
\begin{equation}
a_i^\top y-\varphi_i^{\mathrm{final}}(y)
 =\sum_e\left[-\upsilon_{ie}
        -\Delta b_{ie}^\top(y-x^e)\right].
\label{eq:affine-defect}
\end{equation}

A fixed-entry transfer contributes $b_{ij}(y_j-x_j^*)$, which vanishes at later colorings. A deletion has the same form, bounded in magnitude by $2|b_{ij}|$ on the cube. Retirement contributes $(b_i^{\mathrm{ret}})^\top(y-x^{\mathrm{ret}})$, and an offset-only repair contributes $-\upsilon_{ie}$. Moving or rounding $x$ causes no defect at a fixed representation. Adjusting the offset to round the tracked sum creates an anchor jump.

\begin{proposition}[Deterministic discrepancy reduction]
\label{prop:final-defect-bound}
Suppose the walk completes Algorithm~\ref{alg:row-update} with $s<64$. Round remaining coordinates to their nearest signs, choosing $+1$ at zero. Assume every nonretired medium barrier stayed nonpositive. At this signing, suppose retirement defects have magnitude at most $\delta_{\mathrm{ret}}$, each row's total offset-only defect has magnitude at most $\delta_{\mathrm{off}}$, and every row that was never medium had tracked value of magnitude at most $\delta_{\mathrm{large}}$ at its last anchor. Then
\begin{align}
\norm{\mathbf{A}\varepsilon}_\infty
 <\max\left\{
 \frac{31673}{320}+\delta_{\mathrm{ret}}+\delta_{\mathrm{off}},\,
 63+\delta_{\mathrm{large}}+\delta_{\mathrm{off}},\,1
 \right\}.
\label{eq:final-defect-bound}
\end{align}
\end{proposition}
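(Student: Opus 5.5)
We will prove \eqref{eq:final-defect-bound} by classifying each original row $i$ according to its history and applying the telescoping identity \eqref{eq:affine-defect} at the final signing $y=\varepsilon$. The true value $a_i^\top\varepsilon$ equals $\varphi_i^{\mathrm{final}}(\varepsilon)$ plus a sum of four kinds of defects. Fixed-entry transfers contribute nothing at $\varepsilon$, because $\varepsilon_j=x_j^*$ once $j$ is fixed. Deletions contribute at most $2|b_{ij}|$ each. Retirement contributes at most $\delta_{\mathrm{ret}}$, and offset-only repairs contribute at most $\delta_{\mathrm{off}}$. Rows discarded initially have $\|a_i\|_1\le1$, which gives the third term of the maximum. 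It remains to bound $|\varphi_i^{\mathrm{final}}(\varepsilon)|$ and the deletion sum in the two remaining cases.

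\emph{Rows that ever became medium.} At every anchor the nonpositive barriers give $|d_i|<H_0-12r_i-4G_i/r_i$. At the final anchor, the remaining fewer than $64$ active coordinates are rounded, so $\varphi_i^{\mathrm{final}}(\varepsilon)-d_i=\sum_{j\in V}b_{ij}(\varepsilon_j-x_j)$. Each term satisfies $|\varepsilon_j-x_j|\le 1-|x_j|\le 1-x_j^2$. By Cauchy--Schwarz,
\[
\Bigl|\sum_{j\in V} b_{ij}(\varepsilon_j-x_j)\Bigr|\le\sqrt{q_i}\,\sqrt{G_i}\le r_i\sqrt{G_i}\le \frac{r_i^2+G_i}{2r_i}.
\]
This is absorbed by the reserve $12r_i+4G_i/r_i$. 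Retired rows are handled the same way, since their tracked value froze at retirement with a nonpositive barrier, and the later motion is exactly the retirement defect.

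For the deletion sum, deletions occur only while the row is medium and only for entries with $p_{ij}>q_i/64$, where $q_i$ is the current mass. I plan to order the deletions and use the fact that each deletion removes more than a $1/64$ fraction of the current mass, together with $q_i\le R_{\max}^2$ at the first deletion. If the deleted squares are $p_1,p_2,\dots$ with running masses $Q_k$, then $p_k>Q_k/64$ and $Q_{k+1}\le Q_k-p_k$. We then maximize $\sum_k\sqrt{p_k}$ subject to these constraints, with $Q_1\le R_{\max}^2$. Handling the first deletion separately, with $\sqrt{p_1}\le R_{\max}$, should leave a geometric-type tail of at most $\sqrt{63}\,R_{\max}$ beyond a leading $8R_{\max}$, yielding the total deletion defect $2R_{\max}(8+\sqrt{63})$. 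This optimization is the main obstacle. I would first try a continuous relaxation: each deletion with $\sqrt{p}\ge\sqrt Q/8$ reduces $\sqrt Q$ by at least a fixed fraction of $\sqrt p$, so the sum telescopes. The constants $8$ and $\sqrt{63}=\sqrt{64-1}$ suggest the bound $\sqrt{Q}-\sqrt{Q-p}\ge\sqrt p/(8+\sqrt{63})$ when $p\ge Q/64$. Summing then gives $\sum\sqrt{p_k}\le(8+\sqrt{63})R_{\max}$, and I would verify this inequality directly. Adding $H_0=263/5$ gives $263/5+2\cdot\frac{291}{200}(8+\sqrt{63})<31673/320$, after bounding $\sqrt{63}$ rationally, and the defects $\delta_{\mathrm{ret}}+\delta_{\mathrm{off}}$ are added on top.

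\emph{Rows that stayed large.} Such a row is never deleted from, so the only defects are offsets. Its tracked value at its last anchor is at most $\delta_{\mathrm{large}}$ in magnitude. The final rounding moves the value by at most $\sum_{j\in V}|b_{ij}|\le\sqrt{s}\,\|b_i\|_2<\sqrt{64}\cdot 1\cdot\ldots$. More carefully, $|b_{ij}|\le1$ and $s<64$ bound this by $63$. This gives the second term of the maximum.
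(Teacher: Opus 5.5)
Most of your skeleton matches the paper: the telescoping identity \eqref{eq:affine-defect} at $y=\varepsilon$, the deletion bound, the bound $63$ for never-medium rows, and $1$ for discarded rows. For the deletions, your inequality $\sqrt p/(\sqrt Q-\sqrt{Q-p})=\sqrt{Q/p}+\sqrt{Q/p-1}<8+\sqrt{63}$ is exactly the paper's. It telescopes directly against $\sqrt{Q_1}\le R_{\max}$ because $Q_{k+1}\le Q_k-p_k$, so you do not need to treat the first deletion separately.

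The medium-row case has a false step. Cauchy--Schwarz does not give $\bigl|\sum_{j\in V}b_{ij}(\varepsilon_j-x_j)\bigr|\le\sqrt{q_iG_i}$. The unweighted factor $\sum_j(1-|x_j|)^2$ is not controlled by $G_i=\sum_jp_{ij}(1-x_j^2)$. For example, take $x_j=0$ and $b_{ij}=\beta<1$ on all $s$ active coordinates. Then the left side is $s\beta$, while $\sqrt{q_iG_i}=s\beta^2$.

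What you are missing is the observation the paper relies on: this configuration cannot occur. Once Algorithm~\ref{alg:row-update} finishes with $s<64$, any row with $q_i>0$ has $\max_jp_{ij}\ge q_i/s>q_i/64$. So the while-loop keeps deleting until $q_i=0$, and every row that was ever medium ends completed or retired. Its final representation is then constant on the remaining coordinates, so the term you tried to bound is identically zero. The rounding of the last coordinates enters only through the final-update deletion defects, which already sit inside the $2R_{\max}(8+\sqrt{63})$ budget.

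You still need $|d_i|<H_0$ for a completed row, which has no scale of its own. The paper handles this with the preceding positive scale: the final deletions preserve $d_i$ and decrease $G_i$, so both barriers stay nonpositive. Your phrase ``at every anchor the nonpositive barriers give\ldots'' glosses over this point.

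If you prefer to keep your route, replace the false estimate with $\sum_{j\in V}|b_{ij}|\le\sqrt{sq_i}<8\sqrt{q_i}\le8r_i$, which the $12r_i$ reserve absorbs; the conclusion then holds, though for an empty set of rows. Finally, your ``stayed large'' case should also cover rows that go from large directly to completed. The same $\delta_{\mathrm{large}}$ argument handles them with a zero rounding term.
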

\begin{proof}
Deletion requires $b^2>q/64$, so rationalizing the decrease in row norm gives
\begin{equation*}
\frac{|b|}{\sqrt q-\sqrt{q-b^2}}
 =\frac{\sqrt q}{|b|}+\sqrt{\frac q{b^2}-1}<8+\sqrt{63}.
\end{equation*}
Deletions begin with row norm at most $R_{\max}$, which never increases. Telescoping its decreases bounds the total absolute value of deleted coefficients by $R_{\max}(8+\sqrt{63})$, hence their defect by twice this amount throughout the cube.
No medium row survives the final update: $q_i>0$ and $s<64$ implies $\max_jp_{ij}\ge q_i/s>q_i/64$, contradicting spread. {For the last deletion, use the preceding positive scale: the transfer preserves $d_i$ and decreases $G_i$, so both barriers stay nonpositive. The completed row therefore has a constant representation of magnitude below $H_0$, even though no new scale is needed at $q_i=0$.} Equation~\eqref{eq:affine-defect} gives $|a_i^\top\varepsilon|<H_0+2R_{\max}(8+\sqrt{63})+\delta_{\mathrm{ret}}+\delta_{\mathrm{off}}$. Since $\sqrt{63}<127/16$, $H_0+2R_{\max}(8+\sqrt{63})<263/5+(291/100)(255/16)=31673/320$, proving the first term in \eqref{eq:final-defect-bound}.
A still-large row has undergone no deletion or retirement. Rounding changes at most 63 coordinates by at most one each, giving the second term in \eqref{eq:final-defect-bound} because $|b_{ij}|\le1$. This also covers rows that passed directly from large to completed. Discarded input rows have discrepancy at most one.\looseness=-1
\end{proof}

\section{Curvature and stability}
\label{sec:appendix-geometry}

Use the notation of Section~\ref{sec:curvature} and set $\alpha=4/25$. The row invariants give $0\le z_{ij}\le1/64$, $\gamma^{-2}<\ones^\top z_i\le1$, and $\norm{z_i}^2\le(\ones^\top z_i)^2/64$ for every nonempty medium row.

\begin{lemma}[Positivity, monotonicity, and a ground-state identity]
\label{lem:curv-ground-state}
Suppose $\mathbf{M}\succeq0$ is symmetric and entrywise nonnegative, and
$\mathbf{M}\prec b\mathbf{I}$.  Then $\mathbf{R}$ is entrywise nonnegative, $v\ge b^{-1}\ones$
entrywise, and $F_b(\mathbf{M})\ge0$.  For every $h\in\R^s$,
\begin{equation}
\begin{aligned}
 h^\top\bm{\mathcal A} h
 &=\sum_j\frac{h_j^2}{v_j}
   +\sum_{j<\ell}\mathbf{M}_{j\ell}v_jv_\ell
        \left(\frac{h_j}{v_j}-\frac{h_\ell}{v_\ell}\right)^2,\quad \bm{\mathcal A}\succeq\diag(1/v),
 \quad \norm{\mathbf{R}}_{\op}\le\max_jv_j\le\ones^\top v.
\end{aligned}
 \label{eq:curv-ground-state}
\end{equation}
The potential is nonincreasing when $b$ increases, when $\mathbf{M}$ decreases
in Loewner order or entrywise while remaining nonnegative, or under
principal deletion with the corresponding reduction in $s$.
\end{lemma}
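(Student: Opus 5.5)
The plan is to treat the four claims in order. Positivity and monotonicity will come from the Neumann series of the resolvent. The ground-state identity will be an algebraic expansion using $\bm{\mathcal A}v=\ones$.

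\emph{Positivity.} Since $0\preceq\mathbf M\prec b\mathbf I$, we have $\|\mathbf M\|_{\op}<b$, so
\[
\mathbf R=(b\mathbf I-\mathbf M)^{-1}=\sum_{k\ge0}b^{-k-1}\mathbf M^k
\]
converges. Every power $\mathbf M^k$ is entrywise nonnegative, so $\mathbf R$ is too. Keeping only the $k=0$ term gives $\mathbf R\ge b^{-1}\mathbf I$ entrywise, hence $v=\mathbf R\ones\ge b^{-1}\ones$. For the potential, $F_b=\sum_{k\ge1}b^{-k-1}\ones^\top\mathbf M^k\ones\ge0$.

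\emph{Ground-state identity.} I would set $h=\diag(v)w$ and use $\bm{\mathcal A}v=\ones$, which gives $b v_j-\sum_\ell \mathbf M_{j\ell}v_\ell=1$ for each $j$. Then
\[
h^\top\bm{\mathcal A}h=b\sum_j v_j^2w_j^2-\sum_{j,\ell}\mathbf M_{j\ell}v_jv_\ell w_jw_\ell .
\]
Substitute $bv_j^2=v_j+\sum_\ell \mathbf M_{j\ell}v_jv_\ell$ into the diagonal term. The cross terms then combine into $\sum_{j<\ell}\mathbf M_{j\ell}v_jv_\ell(w_j-w_\ell)^2$, where the diagonal entries $\mathbf M_{jj}$ cancel. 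The remaining piece is $\sum_j v_jw_j^2=\sum_j h_j^2/v_j$, which is the identity.

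The off-diagonal sum is nonnegative because $\mathbf M$ is entrywise nonnegative, so $\bm{\mathcal A}\succeq\diag(1/v)$. Inverting this Loewner inequality (both sides are positive definite) gives $\mathbf R\preceq\diag(v)$. Hence $\|\mathbf R\|_{\op}\le\max_j v_j\le\ones^\top v$, the last step using $v>0$.

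\emph{Monotonicity in $b$.} Differentiate: $\partial_b\ones^\top\mathbf R\ones=-\ones^\top\mathbf R^2\ones$ and $\partial_b(-s/b)=s/b^2$. So the derivative of $F_b$ is $-(\|v\|^2-s/b^2)$. This is nonpositive because $v_j\ge 1/b$ for every $j$.

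\emph{Monotonicity in $\mathbf M$.} For a Loewner decrease $\mathbf M'\preceq\mathbf M$, we have $b\mathbf I-\mathbf M'\succeq b\mathbf I-\mathbf M\succ 0$, so $\mathbf R'\preceq\mathbf R$, and $s$ is unchanged. For an entrywise decrease $0\le\mathbf M'\le\mathbf M$, I first need $\mathbf M'\prec b\mathbf I$. By Perron–Frobenius, $\lambda_{\max}(\mathbf M')=\rho(\mathbf M')\le\rho(\mathbf M)<b$ (I assume $\mathbf M'$ stays PSD, as in the applications). Then the Neumann series compares term by term, since $\mathbf M'^k\le\mathbf M^k$ entrywise.

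\emph{Principal deletion.} This step needs some care. Write $\mathbf M$ in blocks with $\mathbf M_{11}$ the kept block, and compare $\ones^\top(b\mathbf I-\mathbf M_{11})^{-1}\ones-(s-1)/b$ with $F_b$. I would compare Neumann series: $\mathbf M$ dominates the zero-padded $\mathbf M_{11}\oplus 0$ entrywise. The resolvent of the padded matrix contributes an extra $1/b$ from the deleted diagonal, and this is exactly cancelled by the change from $s/b$ to $(s-1)/b$. So the entrywise monotonicity just shown gives the claim.

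The only real subtlety is the entrywise case, which needs the Perron–Frobenius step to keep the admissibility condition $\mathbf M'\prec b\mathbf I$; everything else is routine.
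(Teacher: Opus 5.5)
Your proof is correct and follows the paper's route: the Neumann series gives positivity and every monotonicity claim, and substituting $bv_j=1+\sum_\ell\mathbf M_{j\ell}v_\ell$ gives the ground-state identity. The departures are minor. You obtain $\norm{\mathbf R}_{\op}\le\max_jv_j$ by inverting $\bm{\mathcal A}\succeq\diag(1/v)$, where the paper uses $\norm{\mathbf R}_{\op}\le\norm{\mathbf R}_\infty=\max_j(\mathbf R\ones)_j$. You differentiate in $b$ rather than reading monotonicity off the series. You also check explicitly, via Perron--Frobenius, that an entrywise decrease keeps $\mathbf M'\prec b\mathbf I$, which the paper leaves implicit. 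Your PSD caveat there is unnecessary, because the largest eigenvalue of a symmetric entrywise nonnegative matrix already equals its spectral radius.
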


\begin{proof}
The convergent Neumann series gives
\begin{align}
 \mathbf{R}=\sum_{q\ge0}b^{-q-1}\mathbf{M}^q,\qquad
 F_b(\mathbf{M})=\sum_{q\ge1}b^{-q-1}\ones^\top \mathbf{M}^q\ones.
\end{align}
The series gives positivity and monotonicity in $b$ and the entries of $\mathbf{M}$. Principal deletion removes nonnegative terms. Loewner monotonicity follows from
$(b\mathbf{I}-\mathbf{M}_{\rm new})^{-1}\preceq(b\mathbf{I}-\mathbf{M})^{-1}$ whenever $\mathbf{M}_{\rm new}\preceq \mathbf{M}$.
Substituting $bv_j=1+\sum_\ell \mathbf{M}_{j\ell}v_\ell$ into $h^\top\bm{\mathcal A}h$ and pairing the terms indexed by $(j,\ell)$ and $(\ell,j)$ gives \eqref{eq:curv-ground-state}.
Finally, symmetry and nonnegativity give $\norm{\mathbf{R}}_{\op}\le\norm{\mathbf{R}}_\infty=\max_j(\mathbf{R}\ones)_j$.
\end{proof}

\begin{proof}[Proof of Proposition~\ref{prop:batch-geometry}]
For the batch matrix \eqref{eq:batch-fixed-matrix} and curvature matrices
\eqref{eq:batch-curvature}, differentiation along $x+th$ gives
$\mathbf R'=\mathbf R\mathbf M'\mathbf R$, $v'=\mathbf R\mathbf M'v$,
$\mathbf M'[h]v=\mathbf Ch$, and
$v^\top\mathbf M''[h,h]v=\mathbf U[h]-\mathbf D[h]$.
So,
\begin{equation}
 \nabla F_b=\sum_{i\in\mathcal L}\nu'(u_i)t_i^2g_i,
 \qquad \nabla^2F_b=\mathbf U+2\mathbf C^\top\mathbf R\mathbf C-\mathbf D
 =\mathbf P-\mathbf D,
 \label{eq:curv-hessian-potential}
\end{equation}
via $F'=v^\top\mathbf M'v$ and
$F''=v^\top\mathbf M''v+2(\mathbf M'v)^\top\mathbf R(\mathbf M'v)$,
with the barrier derivatives from \eqref{eq:barrier-quadratic}.

\emph{Anchor dimension.}
Following Li~\cite{Li2026}, we bound the dimension loss from the constraints
and the two positive terms of $\mathbf P$.
Large rows cost fewer than $s/R_{\max}^2$ dimensions because
$\sum_\ell q_\ell\le s$. Nearly tight rows cost fewer than
$5\gamma^4s/238$, since
\[
 \nu(-1)>119e^{-1/2}>357/5,
 \qquad
 \sum_i\nu(u_i)(\ones^\top z_i)^2
 =\ones^\top\mathbf W\ones<bs<3s/2,
\]
and each such row contributes more than $(357/5)\gamma^{-4}$.
Heavy rows cost at most $s/2048$ dimensions:
$\sum_i t_i\le\Lambda\ones^\top v\le\Lambda\mathcal P s$ and
$\Theta=4096\Lambda\mathcal P$.
For the existence argument, let $\mathbf J$ project onto the eigenspaces
of $\mathbf M_0$ with eigenvalues at least $\alpha b$.
Because $\mathbf B\preceq b_{\rm dep}\mathbf I$ and the rows are spread,
\begin{align*}
 \rank\mathbf J
 &\le\frac{\tr(\mathbf J\mathbf W_0\mathbf J)}{\alpha b-b_{\rm dep}}
 \le\frac{\tr\mathbf W_0}{\alpha b-b_{\rm dep}}
 \le\frac{\ones^\top\mathbf W_0\ones}{64(\alpha b-b_{\rm dep})}
 <\frac{bs}{64(\alpha b-b_{\rm dep})}
 \le\frac{s}{64(\alpha-b_{\rm dep})}.
\end{align*}
Impose $\mathbf J\mathbf C_0h=0$, which costs at most this many dimensions.
Let $\bm\Gamma=[\sqrt{\nu(u_i)}z_i]_{i\in\mathcal L}$ and
$a_i(h)=\nu'(u_i)t_i(g_i^\top h)/\sqrt{\nu(u_i)}$, at the anchor.
Then $\mathbf C_0h=\bm\Gamma a(h)$,
$\bm\Gamma\bm\Gamma^\top\preceq\mathbf M_0$, and
$\norm{a(h)}^2=K_{\rm rat}\mathbf U_0[h]/(K_{\rm rat}+1)\le\mathbf U_0[h]$.
Conjugating the matrix comparison by
$(\mathbf I-\mathbf J)\mathbf R_0^{1/2}$ gives
\[
 (\mathbf C_0h)^\top\mathbf R_0(\mathbf C_0h)
 \le\frac{\alpha}{1-\alpha}\norm{a(h)}^2,
 \qquad
 \mathbf P_0[h]\le\frac{1+\alpha}{1-\alpha}\mathbf U_0[h].
\]
Also, the weight identities and gradient bound imply
\[
 (\mathbf U_0)_{jj}
 \le4\sum_{i\in\mathcal L}\nu''(u_i)t_i^2z_{ij}
 \le\frac{1+1/K_{\rm rat}}4(\mathbf D_0)_{jj}.
\]
If $\mathbf Q\succeq0$ and diagonal $\mathbf D\succeq0$ satisfy
$\mathbf Q_{jj}\le\theta\mathbf D_{jj}$, the coordinates with
$\mathbf D_{jj}=0$ have zero rows in $\mathbf Q$.
On the other coordinates, congruence by $\mathbf D^{-1/2}$ shows that
$\mathbf Q-\mathbf D$ has at most
$\tr(\mathbf D^{-1/2}\mathbf Q\mathbf D^{-1/2})\le\theta s$
positive eigenvalues. Intersecting its nonpositive spectral subspace
with the previous constraints therefore loses at most $\theta s$
more dimensions. Apply this with
$\mathbf Q=(1+\alpha)\mathbf U_0/(1-\alpha)$ and
$(1-\rho)\mathbf D_0$ in place of $\mathbf D$.
The five losses give a subspace $E\subseteq\ker\mathbf H$ with
$\mathbf P_0[h]\le(1-\rho)\mathbf D_0[h]$ and $\dim E>\mu s$, where\looseness=-1
\begin{equation}
 \mu=1-\frac1{R_{\max}^2}-\frac{5\gamma^4}{238}
 -\frac1{64(\alpha-b_{\rm dep})}-\frac1{2048}
 -\frac{(1+\alpha)(1+1/K_{\rm rat})}{4(1-\alpha)(1-\rho)}
 >\frac1{32}.
 \label{eq:curv-dimension-constant}
\end{equation}
Thus at least $\lceil s/32\rceil$ nonpositive directions exist.
The stability argument below uses only the anchor inequality, so any
such subspace of $\mathbf P_0-(1-\rho)\mathbf D_0$ in $\ker\mathbf H$ works.

\emph{Stability.}
Write $y=x-x_0$ and
$\mathbf M_{\mathcal L,0}=\sum_{i\in\mathcal L}\nu(u_i(x_0))z_iz_i^\top$.
On the window \eqref{eq:batch-window}, fixed rows satisfy
$u_i(x)-u_i(x_0)=-4\sum_jz_{ij}y_j^2\le0$, so their fixed weights
majorize their true weights. Light rows start below $-1$ and change by
at most $a+4a^2\le2a$ when $a\le1/4$, so they stay negative.
By \eqref{eq:weight-identities}, $\nu,\nu',\nu''$ change by factors $1+O(a)$.

Since light loads are at most $\Theta$ and their weights at most $119$,
$\mathbf M_{\mathcal L,0}v_0\le119\Theta\Lambda\ones\le\kappa\ones$
entrywise. Weighted Cauchy--Schwarz gives
$zz^\top\preceq(z^\top v_0)\diag(z/v_0)$; hence
Lemma~\ref{lem:curv-ground-state} implies
\[
 \mathbf M_{\mathcal L,0}
 \preceq\diag\left(\frac{\mathbf M_{\mathcal L,0}v_0}{v_0}\right)
 \preceq\kappa\diag(1/v_0)\preceq\kappa\bm{\mathcal A}_0.
\]
For $\bm\Delta=\mathbf M_x-\mathbf M_0$,
relative light-weight changes give both
$-Ca\mathbf M_{\mathcal L,0}\preceq\bm\Delta\preceq Ca\mathbf M_{\mathcal L,0}$
and $|\bm\Delta|\le Ca\mathbf M_{\mathcal L,0}$ entrywise.
The first comparison keeps $\bm{\mathcal A}_x$ positive definite when
$Ca\kappa<1$ and bounds its inverse in Loewner order.
The second, with $\mathbf R_x\ge0$ entrywise, gives
\[
 |v_x-v_0|=|\mathbf R_x\bm\Delta v_0|
 \le Ca\kappa\mathbf R_x\ones=Ca\kappa v_x.
\]
Thus, with universal constants,
\begin{equation}
 \mathbf R_x\preceq(1+Ca\kappa)\mathbf R_0,
 \quad t_{ix}=(1+O(a\kappa))t_{i0},
 \quad(1-Ca\kappa)\mathbf D_0\preceq\mathbf D_x
       \preceq(1+Ca\kappa)\mathbf D_0.
 \label{eq:curv-D-stability}
\end{equation}
These estimates also justify continuation from $x_0$ along every segment
in the convex window: the slack stays uniformly positive before any
possible first exit.
Fix $h\in E$. Since $\ones^\top z_i\le1$ and $\norm y_\infty\le a$,
\begin{align}
 ((g_i(x)-g_i(x_0))^\top h)^2
 \le64a^2\sum_jz_{ij}h_j^2,\quad
 \sum_{i\in\mathcal L}\nu''(u_{i0})t_{i0}^2
       ((g_i(x)-g_i(x_0))^\top h)^2
 &\le5a^2\mathbf D_0[h].
 \label{eq:curv-gradient-motion}
\end{align}
The triangle inequality for the vector
$(\sqrt{\nu''(u_i)}t_i g_i^\top h)_{i\in\mathcal L}$ gives
\begin{equation}
 \sqrt{\mathbf U_x[h]}
 \le(1+Ca\kappa)\bigl(\sqrt{\mathbf U_0[h]}
                         +3a\sqrt{\mathbf D_0[h]}\bigr).
 \label{eq:curv-U-stability}
\end{equation}
For $\bm\Gamma_0=[\sqrt{\nu(u_{i0})}z_i]_{i\in\mathcal L}$, slack domination
also gives $\norm{\mathbf R_0^{1/2}\bm\Gamma_0}_{\op}^2\le\kappa$.
Let $(\mathbf C_x-\mathbf C_0)h=\bm\Gamma_0 e$, where\looseness=-1
\[
 e_i=\frac{\nu'(u_{ix})t_{ix}-\nu'(u_{i0})t_{i0}}{\sqrt{\nu(u_{i0})}}
                  g_i(x_0)^\top h
     +\frac{\nu'(u_{ix})t_{ix}}{\sqrt{\nu(u_{i0})}}
                  (g_i(x)-g_i(x_0))^\top h.
\]
The weight identities and \eqref{eq:curv-gradient-motion} bound
$\norm e\le Ca\kappa\sqrt{\mathbf U_0[h]}+Ca\sqrt{\mathbf D_0[h]}$,
so
\begin{equation}
 \norm{\mathbf R_0^{1/2}(\mathbf C_x-\mathbf C_0)h}
 \le Ca\sqrt\kappa\bigl(\kappa\sqrt{\mathbf U_0[h]}
                              +\sqrt{\mathbf D_0[h]}\bigr).
 \label{eq:curv-C-stability}
\end{equation}
The triangle inequality for
$(\sqrt{\mathbf U_x[h]},\sqrt2\norm{\mathbf R_x^{1/2}\mathbf C_xh})$,
together with $\mathbf U_0[h]\le\mathbf P_0[h]\le\mathbf D_0[h]$, gives
\begin{align}
 \sqrt{\mathbf P_x[h]}
 &\le\sqrt{\mathbf P_0[h]}
       +Ca(1+\kappa)^{3/2}\sqrt{\mathbf D_0[h]},
 \label{eq:curv-P-root-stability}\\
 \mathbf P_x[h]
 &\le\mathbf P_0[h]+Ca(1+\kappa)^2\mathbf D_0[h].
 \label{eq:curv-P-stability}
\end{align}
Here $C$ absorbs the multiplicative inverse error, and squaring uses
$a(1+\kappa)^2$ sufficiently small.
Together with \eqref{eq:curv-D-stability} and the anchor margin,
\[
 \mathbf P_x[h]-(1-\rho/2)\mathbf D_x[h]
 \le[-\rho/2+Ca(1+\kappa)^2]\mathbf D_0[h]\le0
\]
when $a\le c\rho/(1+\kappa)^2$ for a sufficiently small universal $c$.
The same estimates cover $\mathbf D_0[h]=0$, in which case all these
quadratic forms vanish.
The Hessian identity proves concavity. Midpoint concavity gives
$\E_\sigma F_b(\mathbf M(x+\sigma th))\le F_b(\mathbf M(x))$ whenever
both endpoints are in the window and the fair sign is chosen after $t$.
\end{proof}

\emph{Buffered weights.}
The proof also allows any fixed weight $\beta_i\ge\nu(u_i(x_0))$,
including $\beta_i=\nu(u_i(x_0)+\mu)$ when $u_i(x_0)+\mu\le0$.
The nearly tight count needs only this lower bound, the trace comparison
uses nonnegative coefficients, and the light-row Gram matrix stays below
the anchor matrix. The finite-bit analysis bounds the buffer's potential cost.

\emph{Additive anchor error.}
Suppose a fixed subspace $\widetilde E$ instead satisfies
$\mathbf P_0[h]\le(1-\rho)\mathbf D_0[h]+\xi\norm h^2$, with $\xi\ge0$.
Set $\widehat{\mathbf D}_x=\mathbf D_x+(2\xi/\rho)\mathbf I$.
Then $\mathbf P_0\preceq(1-\rho/2)\widehat{\mathbf D}_0$ on
$\widetilde E$, while $\widehat{\mathbf D}_x\succeq\mathbf D_x$ and
$\widehat{\mathbf D}_x=(1+O(a\kappa))\widehat{\mathbf D}_0$ in Loewner order.
Repeating the stability argument with margin $\rho/2$ gives for all $h\in \tilde{E}$,\looseness=-1
\begin{equation}
 \begin{aligned}
 h^\top\nabla^2F_b(\mathbf M(x))h
 \le(1-\rho/4)\widehat{\mathbf D}_x[h]-\mathbf D_x[h]\le-\frac\rho4\mathbf D_x[h]+\frac{2\xi}{\rho}\norm h^2
 \end{aligned}
 \label{eq:curv-additive-stability}
\end{equation}

\noindent\emph{Anchor guards.}\label{app:anchor-guards}
The guards in Section~\ref{sec:precision} are checked by interval products
in $\widetilde O(s^2)$ work. Since $\widehat v>0$ and
$\mathbf M\widehat v<b\widehat v$, diagonal similarity by
$\diag(\widehat v)$ gives row sums below $b$, hence
$\lambda_{\max}(\mathbf M)<b$. Multiplying
$(b\mathbf I-\mathbf M)\widehat v\ge\ones/2$ by the nonnegative inverse gives
$\mathbf R\ones\le2\widehat v$ and $\ones^\top\mathbf R\ones\le NP$.
If $F\le NP/4$ and $P\ge8$, an accurate dyadic approximation to
$\mathbf R\ones$ passes with constant slack. Increasing $C_P$ enforces
this smaller cap with the same constant success probability.
Repeat the guard after buffering the fixed weights.
\section{Probability bounds for the capped walk}
\label{app:walk}

\begin{proof}[Proof of Proposition~\ref{prop:walk-success}]
Use the stopped process from Section~\ref{sec:walk}. Directions and clipping
lengths are exposed before drawing the independent fair endpoint signs.
For each batch, the sampler and rounding estimates in
Sections~\ref{sec:sampler} and~\ref{sec:precision} give
\begin{equation}
 \left\|[h_1\ \cdots\ h_b]^{\mathsf T}
              [h_1\ \cdots\ h_b]\right\|_{\mathrm{op}}\le2.
 \label{eq:walk-gram}
\end{equation}

\noindent\emph{A common concentration estimate.}
Let $a_t$ be exposed before the fair sign $\sigma_t$, and suppose
$\sum_ta_t^2\le Q$ on the stopped process. Since
$\mathbb E[e^{\lambda\sigma_ta_t}\mid\mathcal F_t,a_t]
=\cosh(\lambda a_t)\le e^{\lambda^2a_t^2/2}$,
the exponential supermartingale and its version with $-\lambda$ give
\begin{equation}
 \Pr\left[\max_\ell\left|\sum_{t\le\ell}\sigma_ta_t\right|\ge u\right]
 \le2\inf_{\lambda>0}e^{-\lambda u+\lambda^2Q/2}
 =2e^{-u^2/(2Q)}.
 \label{eq:walk-maximal}
\end{equation}
The same estimate holds conditional on any starting history.

\noindent\emph{Trust windows.}
Fix a vector $f$ at the batch anchor and take $a_t=t_tf^{\mathsf T}h_t$.
Equation~\eqref{eq:walk-gram} gives
$\sum_ta_t^2\le2\tau^2\|f\|^2$.
Apply \eqref{eq:walk-maximal} to the $s$ coordinate normals and at most
$2s/q_*$ signed barrier gradients, whose norms are at most two.
With $u=a/4$ and $\tau\le a/(2^{20}L)$, each failure probability is at most
$\exp(-c2^{40}L^2)$. There are polynomially many forms in a capped trial,
so all batches stay in their inner windows with probability at least
$1-n^{-8}$, after choosing the absolute constants.
The quadratic identity \eqref{eq:barrier-quadratic} bounds barrier motion by
$a/4+a^2/4$. Fixed rows never exceed their anchor barriers, and light rows
start below $-1$ and stay below $-3/4$. Both proposed endpoints remain in
the outer window, where Proposition~\ref{prop:batch-geometry} applies.

\noindent\emph{The potential and resolvent.}
Within a valid window, concavity gives
$\mathbb E[F(x_{t+1})\mid\mathcal F_t,h_t]\le F(x_t)$.
Lemma~\ref{lem:curv-ground-state} and the
\hyperref[app:row-update-proof]{row-update comparisons} in
Appendix~\ref{app:rows} extend this inequality across batch boundaries
and epoch barrier increases. Stop at the first window or cap failure.
At the initial anchor, $\mathbf M\preceq b_{\rm dep}\mathbf I$ gives
$F_1(\mathbf M)\le Nb_{\rm dep}/(1-b_{\rm dep})\le N$.
At epoch $j\ge1$, $\mathbf M\prec b_{j-1}\mathbf I$ and
$b_j-b_{j-1}=1/(4j^2)$ give
$(b_j\mathbf I-\mathbf M)^{-1}\preceq4j^2\mathbf I$ and
$F_{b_j}(\mathbf M)\le4Nj^2$ at its start. Doob's inequality yields
\begin{align*}
 \Pr[\text{an epoch violates its cap}]
 &\!\le\!\sum_{j=0}^{\lceil\log_2n\rceil}
       \Pr\left[\sup_{t\in \text{ epoch }j}F_{b_j}(\mathbf M_t)\!>\!NP\right]\!\le\!\frac4P\sum_{j=0}^{\lceil\log_2n\rceil}(j+1)^2
 \!\le\!\frac{CL^3}{P}\!\le\!\frac1{100}.
\end{align*}
For the batch's ambient dimension $s_{\rm amb}$, a query state has
$N/2\le s_t\le s_{\rm amb}\le N$. Hence
\begin{equation}
 F_{b_j}(\mathbf M_t)\le NP,\qquad
 \ones^{\mathsf T}(b_j\mathbf I-\mathbf M_t)^{-1}\ones
 =F_{b_j}(\mathbf M_t)+s_{\rm amb}/b_j
 \le N(P+1)\le\mathcal P s_t.
 \label{eq:walk-resolvent-cap}
\end{equation}
The potential bound includes the terminal endpoint before row updates.
Also $b_j\le1+\pi^2/24<3/2$.

\noindent\emph{Retirement.}
Condition on a row's retirement history. Its remaining coefficients $r$
satisfy $\|r\|^2\le q_*$, and retirement occurs between batches.
Within each later batch $\beta$, restrict $r$ to its ambient coordinates,
giving $r_\beta$ of norm at most $\sqrt{q_*}$. Directions vanish on newly
fixed coordinates, so the later active restrictions $r_t$ satisfy
$r_t^{\mathsf T}h_t=r_\beta^{\mathsf T}h_t$.
The Gram bound and \eqref{eq:walk-batches} therefore imply
\[
 Q:=\sum_{t\text{ after retirement}}t_t^2(r_t^{\mathsf T}h_t)^2
 \le2\tau^2q_*B_{\rm tot}\le C_0q_*L^3.
\]
Apply \eqref{eq:walk-maximal} to
$Y=\sum_{t\text{ after retire}}\sigma_tt_tr_t^{\mathsf T}h_t$,
stopping at a proposal violates the Gram or cap bounds:\looseness=-1
\begin{equation}
 \Pr[|Y|\ge u\mid\mathcal F_{\rm retirement}]
 \le2\exp\left(-\frac{u^2}{2C_0q_*L^3}\right)
 =2\exp\left(-\frac{u^2L}{2C_0c_*}\right).
 \label{eq:walk-retirement-tail}
\end{equation}
Nearest-sign rounding of the last fewer than $64$ coordinates adds at most
$\sqrt{63q_*}<8\sqrt{c_*}L^{-2}$.
Take $u=1/200$ and choose $c_*$ so this rounding contributes at most $1/400$
and the displayed probability is at most $n^{-9}$.
Over fewer than $n^2$ rows, the failure probability is at most $n^{-7}$.
The stochastic error and final rounding use at most $3/400$, leaving
$1/400$ for the drift in Section~\ref{sec:precision}.
Combining the three failure probabilities proves the proposition.
\end{proof}
\section{Adaptive Gaussian sampling proofs}
\label{app:sampler}

Use the completion protocol and exposed history from Section~\ref{sec:sampler},
stopping before the first dependent real insertion.
A nonnegative $\phi$ on PSD matrices is \emph{spectral increasing} if it
depends only on eigenvalues and $\phi(\mathbf{A})\le\phi(\mathbf{B})$
whenever $\mathbf{A}\preceq\mathbf{B}$. Normalizing constants below are finite.

\begin{lemma}[Adaptive radial law and regularity]\label{lem:sampler-elliptical}
At initialization or immediately after a real insertion, let $d=\dim F$.
In orthonormal coordinates, the projected hidden rows
$\mathbf{X}\in\R^{(d-1)\times d}$ have conditional density
\begin{equation}\label{eq:sampler-elliptical-density}
 p(\mathbf{X})=\frac1{\mathcal Z}
 \exp\left(-\frac12\operatorname{tr}(\mathbf{X}\bm{\Sigma}^{-1}\mathbf{X}^\top)\right)
 \phi(\mathbf{X}\bm{\Sigma}^{-1}\mathbf{X}^\top),
\end{equation}
where $\bm{\Sigma}$ is PD and $\phi$ is spectral increasing.
Initially $\bm{\Sigma}=\mathbf{I}$. Predetermined hidden-row deletion and
predictable coordinate restriction preserve this class and its extreme
scale bounds. Exposing the null line and inserting a self-normal whose
projection into $F$ has angle $\theta<\pi/2$ to that line preserve it, with\looseness=-1
\begin{equation}\label{eq:sampler-covariance-recursion}
 \lambda_{\min}(\bm{\Sigma}_{\mathrm{new}})
 \ge\lambda_{\min}(\bm{\Sigma})\cos^2\theta,
 \qquad
 \lambda_{\max}(\bm{\Sigma}_{\mathrm{new}})
 \le\lambda_{\max}(\bm{\Sigma}).
\end{equation}
After at most $b$ self-normal insertions with error at most $2\epsilon_s$,
$Cb\epsilon_s^2\le1/4$ ensures
$e^{-Cb\epsilon_s^2}\mathbf{I}\preceq\bm{\Sigma}\preceq\mathbf{I}$ and
$\kappa(\bm{\Sigma})\le2$.
The unit null vector has representation
\begin{equation}\label{eq:sampler-angular-representation}
 h=\frac{\bm{\Sigma}^{-1/2}u}{\lVert\bm{\Sigma}^{-1/2}u\rVert_2},
 \qquad u\sim\mathrm{Unif}(\mathbb S^{d-1}),
\end{equation}
with a fair orientation, and satisfies \eqref{eq:sampler-subgaussian-square}.
For $m=d-1\ge1$ and $\lambda_{\min}(\bm{\Sigma})\ge a_0>0$,
$\Pr[\sigma_{\min}(\mathbf{X})\le t\mid\mathcal F]\le mt^2/(2a_0)$
for every $t\ge0$.
\end{lemma}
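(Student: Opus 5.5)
The proof is an induction over the exposure steps of the stopped process. The invariant is the density \eqref{eq:sampler-elliptical-density}: it holds initially and is preserved by each of the three operations, namely hidden-row deletion, coordinate restriction, and null-line exposure followed by a self-normal insertion.

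Initially the projected hidden rows are $\mathbf X=\mathbf G\mathbf Z\mathbf U$ for an orthonormal basis $\mathbf U$ of $F$. Since the hidden Gaussian rows are independent of $\mathbf Z$ and of everything exposed so far, with $\bm\Sigma$ absorbing the metric of $\mathbf Z$ on $F$ in the chosen coordinates, the density is Gaussian, so $\phi\equiv1$.

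\emph{Deletion and restriction.} Deleting a predetermined row marginalizes one row of $\mathbf X$. I would write the density as a function of the Gram matrix $\mathbf X\bm\Sigma^{-1}\mathbf X^\top$ and use the identity $\mathbf X\bm\Sigma^{-1}\mathbf X^\top\succeq$ (Schur complement) to show that the marginal still has Gaussian kernel times a spectral increasing factor. This uses the fact that the integral of an increasing spectral function over one row is increasing in the remaining Gram matrix.

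Predictable coordinate restriction is a linear projection onto a subspace $F'\subset F$ that is determined by the history. Restricting the quadratic form gives a new $\bm\Sigma'$, the compression of $\bm\Sigma$, whose eigenvalues lie between the extreme eigenvalues of $\bm\Sigma$ by Cauchy interlacing. The factor $\phi$ stays spectral increasing after integrating out the orthogonal component; here we use monotonicity in the Loewner order together with Fubini.

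\emph{Exposure and self-normal insertion.} The exposed null line $\ell=\ker\mathbf X$ is a function of $\mathbf X$. I would use the change of variables $\mathbf X=\mathbf Y\mathbf O$, a polar/Stiefel decomposition in $\bm\Sigma^{-1/2}$-whitened coordinates. This shows that, under the density, the whitened null direction $u$ is uniform on the sphere and independent of the row-space part, because the density depends on $\mathbf X$ only through $\mathbf X\bm\Sigma^{-1}\mathbf X^\top$, which is invariant under whitened rotations. This gives \eqref{eq:sampler-angular-representation}, with a fair orientation supplied by the reference row.

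Conditioning on $h$ then leaves the rows restricted to $h^\perp$. Inserting a self-normal $n$ at angle $\theta$ restricts $F$ to $n^\perp$, and the identification of $n^\perp$ with $h^\perp$ is an oblique projection. That projection distorts the metric by at most $\cos^{-2}\theta$ in one direction and contracts nowhere, which yields \eqref{eq:sampler-covariance-recursion}.

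With $\cos^2\theta\ge1-4\epsilon_s^2$, iterating over $b$ insertions gives $\lambda_{\min}\ge(1-4\epsilon_s^2)^b\ge e^{-Cb\epsilon_s^2}$, while $\lambda_{\max}\le1$. Hence $\kappa(\bm\Sigma)\le2$ once $Cb\epsilon_s^2\le1/4$.

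\emph{Subgaussian bounds.} These follow from \eqref{eq:sampler-angular-representation}. We have $\langle v,h\rangle^2\le\kappa\,\langle\bm\Sigma^{-1/2}v,u\rangle^2/\|\bm\Sigma^{-1/2}v\|^2\cdot\|v\|^2$, and the quantity $\langle w,u\rangle^2$ for a unit $w$ and uniform $u$ is Beta$(1/2,(d-1)/2)$. Its moment generating function at $d/16\cdot\kappa$ with $\kappa\le2$ is at most $2$ by the standard bound $\E e^{\lambda d\langle w,u\rangle^2}\le(1-2\lambda)^{-1/2}$. The covariance bound is $\E hh^\top\preceq\kappa\,\E[\bm\Sigma^{-1/2}uu^\top\bm\Sigma^{-1/2}]/\lambda_{\min}(\bm\Sigma^{-1})$, which in turn is $\preceq\kappa^2/d\,\mathbf P_F$. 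This needs the sharper bound $2/d$, so I would expand $\bm\Sigma$ near $\mathbf I$ to use $e^{-Cb\epsilon_s^2}\approx1$.

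\emph{Small-ball bound.} I would write $\sigma_{\min}(\mathbf X)^2=\min$ over unit $w\in\R^{m}$ of $\|\mathbf X^\top w\|^2$. Because $\phi$ is increasing, the density tilts mass away from small Gram eigenvalues. By a Gaussian-correlation/monotone-coupling argument, $\mathbf X\bm\Sigma^{-1}\mathbf X^\top$ is stochastically larger in Loewner order than in the Gaussian case, so it suffices to treat $\phi\equiv1$. For an $m\times(m+1)$ Gaussian matrix with covariance $\succeq a_0$, the known bound gives $\Pr[\sigma_{\min}\le t]\le mt^2/(2a_0)$: the distance of each row from the span of the others is a chi variable with $2$ degrees of freedom, with $\Pr\le t^2/(2a_0)$, and a union bound over the $m$ rows, using $\sigma_{\min}\ge m^{-1/2}\min_i\mathrm{dist}$, combines these. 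The union bound loses a factor $m$ in $t^2$, so I would instead use the negative-second-moment identity $\sum_i\sigma_i^{-2}=\sum_i\mathrm{dist}_i^{-2}$ with Markov's inequality.

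\emph{Main obstacle.} I expect the hardest step to be the stochastic domination in this last bound and, related to it, showing that marginalization and exposure preserve the spectral increasing class. I would try the rotation-invariance decomposition first, and then check monotonicity through the Schur-complement form of the Gram matrix.
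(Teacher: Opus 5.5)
The way you handle deletion, restriction, null-line exposure, the covariance recursion and the sphere-coordinate moment bound follows the paper's proof closely. The final claim, $\Pr[\sigma_{\min}(\mathbf X)\le t\mid\mathcal F]\le mt^2/(2a_0)$, is not established by your route, and both of its steps fail as stated.

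\emph{The reduction to $\phi\equiv1$.} You assert Loewner stochastic domination from an increasing likelihood ratio. That implication needs the base law to be positively associated for the order. The Loewner order on the PD cone is not a lattice, so no Harris/FKG inequality comes for free. The Gaussian correlation inequality concerns symmetric convex sets and does not apply here, and you give no coupling. Nor can this step be skipped: the row-distance argument needs independent rows, which is exactly what the tilt by $\phi$ destroys.

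\emph{The Gaussian case.} Even for $\phi\equiv1$, the distance argument misses the stated constant. The union bound gives $m^2t^2/(2a_0)$. The negative-second-moment repair fails because, in whitened coordinates, each $\mathrm{dist}_i^2$ is a $\chi^2_2$ variable with density $\tfrac12e^{-s/2}$. Hence $\E[\mathrm{dist}_i^{-2}]=\infty$ and Markov's inequality yields nothing; truncation gives at best an extra logarithmic factor.

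The missing idea is to compute the law of the whitened Gram matrix directly. Take $\mathbf Y=\mathbf X\bm\Sigma^{-1/2}\in\R^{m\times(m+1)}$ with density proportional to $e^{-\operatorname{tr}(\mathbf Y\mathbf Y^\top)/2}\phi(\mathbf Y\mathbf Y^\top)$. The Jacobian of $\mathbf Y\mapsto\mathbf S=\mathbf Y\mathbf Y^\top$ contributes $\det(\mathbf S)^{(d-m-1)/2}=1$ because $d=m+1$. So $\mathbf S$ has density $(\mathcal Z')^{-1}e^{-\operatorname{tr}(\mathbf S)/2}\phi(\mathbf S)$ on the PD cone, and the substitution $\mathbf S=\mathbf T+q\mathbf I$ gives
\[
 \Pr[\mathbf S\succeq q\mathbf I\mid\mathcal F]
 =\frac{e^{-mq/2}}{\mathcal Z'}\int_{\mathbf T\succ0}e^{-\operatorname{tr}(\mathbf T)/2}\phi(\mathbf T+q\mathbf I)\,d\mathbf T
 \ge e^{-mq/2},
\]
using only $\phi(\mathbf T+q\mathbf I)\ge\phi(\mathbf T)$. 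Then $\mathbf X\mathbf X^\top=\mathbf Y\bm\Sigma\mathbf Y^\top\succeq a_0\mathbf S$ and $q=t^2/a_0$ give the claim. This single translation handles the tilt and produces the sharp rate $m/2$, which is exact when $\phi\equiv1$.

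Two smaller points. First, conditioning on the null line is not restriction of the density to the slice $\{\mathbf Xh=0\}$. The coarea formula contributes a factor $\det(\mathbf A\mathbf A^\top)^{1/2}$ for the row-space block $\mathbf A$, and you must note that this factor is spectral increasing so it can be absorbed into $\phi$. Second, you should check that $\bm\Sigma=\mathbf I$ initially: in orthonormal coordinates on $E$, each hidden row acts as $\mathbf P_Eg$. Your later bounds $\bm\Sigma\preceq\mathbf I$ and $\kappa(\bm\Sigma)\le2$ depend on this, whereas $\bm\Sigma=\mathbf Z^\top\mathbf Z$ in coefficient coordinates could have condition number up to $16$.
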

\begin{proof}
\emph{Preserving the radial law.}
Whiten by $\mathbf{Y}=\mathbf{X}\bm{\Sigma}^{-1/2}$, whose density is
proportional to $e^{-\operatorname{tr}(\mathbf{Y}\mathbf{Y}^\top)/2}\phi(\mathbf{Y}\mathbf{Y}^\top)$.
For a matrix with $m$ rows, extend $\phi$ to its column Gram by evaluating
it on the $m$ largest eigenvalues, including zeros. This agrees with the
row Gram. Row deletion adds $zz^\top$ to the surviving column Gram,
and column restriction adds $zz^\top$ to the surviving row Gram.
For either Gram $\mathbf{S}\in\R^{\ell\times\ell}$, the radial factor becomes
\begin{equation}\label{eq:sampler-column-integration}
 \phi_1(\mathbf{S})=\int_{\R^\ell}
       e^{-\lVert z\rVert_2^2/2}\phi(\mathbf{S}+zz^\top)\,dz.
\end{equation}
Rotational invariance and pointwise comparison make $\phi_1$ spectral
increasing. The deletion and restriction are predictable.
For $\mathbf{L}$ with orthonormal columns,
$\bm{\Sigma}^{1/2}\mathbf{L}(\mathbf{L}^\top\bm{\Sigma}\mathbf{L})^{-1/2}$
has orthonormal columns. The same calculation for $\mathbf{X}\mathbf{L}$
gives scale $\mathbf{L}^\top\bm{\Sigma}\mathbf{L}$ and preserves the extreme
eigenvalue bounds. For $m=d-1$, thin QR gives volume factor
$\prod_i\mathbf{T}_{ii}^{d-i}$ and the Gram map has Jacobian
$2^m\prod_i\mathbf{T}_{ii}^{m+1-i}$. Thus $\mathbf{S}=\mathbf{Y}\mathbf{Y}^\top$
has density $(\mathcal Z')^{-1}e^{-\operatorname{tr}(\mathbf{S})/2}\phi(\mathbf{S})$
on the positive definite cone: its determinant exponent is $(d-m-1)/2=0$.
Right orthogonal invariance makes the null line uniform and independent of
$\mathbf{S}$.
With an orthonormal basis $\mathbf{U}$ of its orthogonal complement, write
$\mathbf{Y}=\mathbf{A}\mathbf{U}^\top$, where $\mathbf{A}\in\R^{m\times m}$. Then
\begin{equation}\label{eq:sampler-coarea}
 p(\mathbf{A}\mid\ker \mathbf{Y})
 \propto e^{-\operatorname{tr}(\mathbf{A}\mathbf{A}^\top)/2}
         \phi(\mathbf{A}\mathbf{A}^\top)\det(\mathbf{A}\mathbf{A}^\top)^{1/2}.
\end{equation}
The square Gram map has determinant exponent $-1/2$, explaining the
increasing factor $\det(\mathbf{S})^{1/2}$ in \eqref{eq:sampler-coarea}.
For the physical unit null vector $h$, the whitened null vector is
proportional to $\bm{\Sigma}^{1/2}h$. The scale $\mathbf{K}_h$
has support $h^\perp$ and the Schur complement gives  inverse:\looseness=-1
\begin{equation}
\label{eq:sampler-conditioned-covariance}
 \begin{aligned}
 \mathbf{K}_h=\bm{\Sigma}-
 \frac{\bm{\Sigma} hh^\top\bm{\Sigma}}{h^\top\bm{\Sigma} h},\quad
 (\mathbf{K}_h|_{h^\perp})^{-1}
=\mathbf{P}_{h^\perp}\bm{\Sigma}^{-1}\mathbf{P}_{h^\perp}|_{h^\perp}.
 \end{aligned}
\end{equation}
Its nonzero eigenvalues lie between the extreme eigenvalues of $\bm{\Sigma}$.
Normalize the self-normal's projection into $F$ to $\widetilde h$.
Projection from $h^\perp$ to $\widetilde h^\perp$ has singular values
$1,\ldots,1,|\cos\theta|$ and preserves the whitened radial
factor, transforming the scale by congruence, giving
\eqref{eq:sampler-covariance-recursion}. Deleting the next hidden row restores
the rectangular dimensions. Initially $\bm{\Sigma}=\mathbf{I}$ and $\phi=1$.
The self-normal precedes coordinate constraints, so this comparison applies
in $h$'s sampling space and completes the induction.

\emph{Angular bounds.}
Projection of the self-normal into $F$, followed by normalization,
gives $\theta\le C\epsilon_s$. Thus the recursion yields
$\lambda_{\min}(\bm{\Sigma})\ge(1-C\epsilon_s^2)^b
\ge e^{-C'b\epsilon_s^2}$ and $\lambda_{\max}(\bm{\Sigma})\le1$;
coordinate restrictions preserve these bounds.
Right invariance gives \eqref{eq:sampler-angular-representation}, and the
independent symmetric reference row chooses a fair orientation.
For $\kappa(\bm{\Sigma})\le2$, the denominator bound gives
$|\langle v,h\rangle|^2\le2\lVert v\rVert_2^2U_1^2$ for a sphere
coordinate $U_1$. Its even moments satisfy
$d^q\E U_1^{2q}\le(2q-1)!!$, so
\begin{align*}
 \E e^{dU_1^2/8}
 &\le\sum_{q\ge0}\frac{(2q-1)!!}{8^q q!}
 =\left(1-\frac14\right)^{-1/2}<2.
\end{align*}
The same denominator bound and $\E uu^\top=\mathbf{I}/d$ give the covariance
estimate in \eqref{eq:sampler-subgaussian-square}.

\emph{The hidden-matrix lower tail.}
Translation by $q\mathbf{I}$ in the Gram density above and monotonicity give
\begin{align*}
 \Pr[\mathbf{S}\succeq q\mathbf{I}\mid\mathcal F]
 &=\frac{e^{-mq/2}}{\mathcal Z'}
   \int_{\mathbf{T}\succ0}e^{-\operatorname{tr}(\mathbf{T})/2}
        \phi(\mathbf{T}+q\mathbf{I})\,d\mathbf{T}
 \ge e^{-mq/2}.
\end{align*}
Since $\mathbf{X}\mathbf{X}^\top\succeq a_0\mathbf{S}$, take $q=t^2/a_0$
to obtain the claimed bound from
$1-e^{-mt^2/(2a_0)}\le mt^2/(2a_0)$.
\end{proof}

\begin{proof}[Proof of Theorem~\ref{thm:sampler-conditioning}]
Lemma~\ref{lem:sampler-elliptical} gives the conditional direction bounds.
For the high-probability bounds, choose $A'>A$ so that $n^{-A'}$ per batch
sums to at most $n^{-A}/4$. Choose $C_b,C_k$ sufficiently large depending on
$A',C_\eta,c_0$ and $b\epsilon_s$ sufficiently small.
Nonexplicit batches have $k\ge C_k\log(2n)$.

\emph{Real constraints: coordinate rows.}
For fixed $I$ of size $q\le b\le k/2$, let $\mathbf{U}$ have orthonormal
columns spanning $\ker\mathbf{Z}_{0,I,:}$, of dimension $d\ge k-q$.
Then $(\mathbf{Z}_0+\eta\mathbf{G})_{I,:}\mathbf{U}=\eta\bm{\Gamma}$,
where $\bm{\Gamma}\in\R^{q\times d}$ has independent standard Gaussian entries.
$\chi^2$ bounds on a $1/4$-net of at most $9^q$ points put every
$\|a^\top\bm{\Gamma}\|_2$ in $[3\sqrt d/4,3\sqrt d/2]$,
except with probability $2\cdot9^q e^{-c_1d}$.
Net extension gives $\sigma_{\min}(\bm{\Gamma})\ge\sqrt d/4$.
Right projection decreases the row Gram matrix, so rounding gives
$\sigma_{\min}(\mathbf{Z}_{I,:})
\ge\eta\sigma_{\min}(\bm{\Gamma})-\eta\sqrt k/32\ge\eta\sqrt k/8$.
Gaussian norm bounds ensure \eqref{eq:sampler-basis-conditioning}
except with probability $n^{-A'}/3$.

\emph{Separation from self-normals.}
Condition only on the smoothed basis.
Let $\mathbf{H}$ collect the ideal directions whose self-normals were
inserted, and $\mathbf{N}$ those physical normals.
Later directions are orthogonal to earlier self-normals, so
$|\langle h_i,h_j\rangle|\le2\epsilon_s$ for $i\ne j$ and
\begin{equation}\label{eq:sampler-self-gram}
 \lVert\mathbf{H}\mathbf{H}^\top-\mathbf{I}\rVert_{\mathrm{op}}\le2b\epsilon_s,
 \qquad
 \lVert\mathbf{N}-\mathbf{H}\rVert_{\mathrm{op}}\le2\sqrt b\epsilon_s.
\end{equation}
Let $h_t$ be the $t$th query direction, or zero if it never occurs.
Occurrence is predictable, and a query without insertion ends the batch,
so the first $b$ queries include every inserted self-normal.
At every actual query, $\dim F\ge k-b\ge k/2$.
For fixed unit $v\in E$, \eqref{eq:sampler-subgaussian-square} and the tower
property give, inductively,
$\E\exp((k/64)\sum_{t\le b}\langle v,h_t\rangle^2)\le2^b$.
Markov's inequality then gives
\begin{equation}\label{eq:sampler-angle-tail}
 \Pr\left[\sum_{t\le b}\langle v,h_t\rangle^2>\frac1{64}\right]
 \le2^b e^{-k/4096}.
\end{equation}
This also bounds every prefix sum.
For $V_I=\operatorname{span}\{\mathbf{P}_Ee_j:j\in I\}$, take a $1/4$-net
of at most $9^{|I|}$ unit vectors.
Summing over coordinate sets bounds the coordinate and angular failure
probabilities, respectively, by $2b(9es)^b e^{-c_1k/2}\le n^{-A'}/3$ and
$b(9es)^b2^b e^{-k/4096}\le n^{-A'}/3$.
Net extension gives $\lVert\mathbf{H}\mathbf{P}_{V_I}\rVert_{\mathrm{op}}\le1/6$
for every $I$ and prefix. With $S_H=\operatorname{rowspan}(\mathbf{N})$,
\begin{equation}\label{eq:sampler-angle-bound}
 \lVert\mathbf{P}_{S_H}\mathbf{P}_{V_I}\rVert_{\mathrm{op}}
 \le\frac{1/6+2\sqrt b\epsilon_s}
 {\sqrt{1-2b\epsilon_s}-2\sqrt b\epsilon_s}\le\frac12.
\end{equation}

\emph{Combining the constraints.}
In orthonormal coordinates on $E$, let $\mathbf{D}_I$ have rows
$\mathbf{P}_Ee_j$ for $j\in I$. Coordinate conditioning gives
$\sigma_{\min}(\mathbf{D}_I)
\ge\sigma_{\min}(\mathbf{Z}_{I,:})/\sigma_{\max}(\mathbf{Z})
\ge c'\eta\sqrt{k}$, while \eqref{eq:sampler-self-gram} gives a constant
lower bound for $\sigma_{\min}(\mathbf{N})$.
Separation implies
\begin{align*}
 \lVert\mathbf{N}^\top a+\mathbf{D}_I^\top b\rVert_2^2
 &\ge\tfrac12\bigl(\lVert\mathbf{N}^\top a\rVert_2^2+
                       \lVert\mathbf{D}_I^\top b\rVert_2^2\bigr).
\end{align*}
The combined smallest singular value is at least
$2^{-1/2}\min\{\sigma_{\min}(\mathbf{N}),\sigma_{\min}(\mathbf{D}_I)\}$,
omitting empty blocks. Thus no coordinate insertion is dependent.
Returning to coefficient coordinates changes the bound by at most two,
so every nonempty real constraint matrix has smallest singular value at
least $\beta=n^{-C_\beta}$, for a fixed constant. The three failure bounds
sum to $n^{-A'}$ per batch.

\emph{Completion matrices.}
In orthonormal coordinates on $E$,
collect the real rows in $\mathbf{C}$
and hidden rows in $\mathbf{G}$. Decomposing $E$ into $\mathbf{C}$'s row span
and kernel gives
\begin{align*}
 \mathbf{P}=\begin{pmatrix}\mathbf{C}\\\mathbf{G}\end{pmatrix}
  =\begin{pmatrix}\mathbf{C}_R&0\\\mathbf{G}_R&\mathbf{X}\end{pmatrix}.
\end{align*}
$\mathbf{X}$ has one fewer row than column, and
the preceding bound gives invertible $\mathbf{C}_R$ with
$\lVert\mathbf{C}_R^{-1}\rVert\le\beta^{-1}$.
In the original exposed filtration, Lemma~\ref{lem:sampler-elliptical} gives
$\sigma_{\min}(\mathbf{X})\ge\alpha=n^{-C_\alpha}$ across polynomially many
states with probability at least $1-n^{-A-2}$ for large $C_\alpha$,
before intersecting with the real-constraint event.
Each hidden row's norm is bounded by its original ambient Gaussian,
so Gaussian norm bounds give $\lVert\mathbf{G}\rVert_F\le L=n^{C_L}$ across all batches.
For right-hand side $(a,b)$, solve
$u=\mathbf{C}_R^{-1}a$ and $v=\mathbf{X}^\dagger(b-\mathbf{G}_Ru)$.
This right inverse has norm at most
$D=\beta^{-1}+\alpha^{-1}(1+L\beta^{-1})$, so
$\lVert\mathbf{P}^\dagger\rVert\le D$ and $\dim\ker\mathbf{P}=1$.
Between exposure and self-normal insertion the prefix's previous bound
applies. Self-normal insertion restores the rectangular
density, and coordinate insertions preserve it. Coefficient coordinates
change bounds by constant factors.
For unit $w\in\ker\mathbf{P}$ and independent Gaussian $r$,
$r^\top w\sim N(0,1)$ conditionally on $\mathbf{P}$, and
\begin{align*}
 \mathbf{Q}^{-1}\binom{a}{t}
 =\mathbf{P}^\dagger a+
 w\frac{t-r^\top \mathbf{P}^\dagger a}{r^\top w},\quad
 \lVert \mathbf{Q}^{-1}\rVert_{\mathrm{op}}
 \le D+\frac{1+\lVert r\rVert_2D}{|r^\top w|}.
\end{align*}
The refreshed reference row is independent of both the pre-insertion and
post-insertion prefixes. Gaussian norm bounds and
$\Pr[|r^\top w|\le u\mid\mathbf{P}]\le\sqrt{2/\pi}\,u$
give $\lVert r\rVert_2\le n^{C_r}$ and $|r^\top w|\ge n^{-C_r}$
across polynomially many operations with the required probability.
The row norms bound $\lVert\mathbf{Q}\rVert_F$; the inverse bound converts
to Frobenius norm.\looseness=-1
\end{proof}
\section{Matrix formula maintenance}
\label{app:dynamic}

\begin{proof}[Proof of Theorem~\ref{thm:dynamic-maintenance}]
{Apply the row-query structure to
$\mathbf{A}=\bm{\mathcal A}(\mathbf{Q})^\top$, of order $N=O(s)$,
with $L=\lceil\log_2N\rceil$ levels. Level $\ell$ rebuilds every
$d=2^\ell$ operations, and the top level recomputes the full inverse
every $\Theta(N)$ operations. Perform simultaneous rebuilds from the
top down. At a level-$\ell$ rebuild at time $t$, its parent snapshot
has time $u=t$ or $u=t-d$. Store the inverse rows indexed by
$\mathcal I(t,2d)$ together with the columns changed between $u$ and $t$.
The parent stores $\mathcal I(u,4d)$, which contains the child candidates
by nesting and every changed column by the lookahead property.
Thus every required parent row is available, and each level stores
only $O(d)$ rows. At the top level all rows are available.}
{For parent snapshot $\mathbf{B}$ and child
$\mathbf{B}'=\mathbf{B}+\mathbf{U}\mathbf{V}^\top$, combine repeated updates
into at most $2^\ell$ changed columns $J$. Their coordinate vectors form
$\mathbf{V}$, and their net differences form $\mathbf{U}$.
The small matrix $\mathbf{D}=\mathbf{I}+\mathbf{V}^\top\mathbf{B}^{-1}\mathbf{U}$
governs the inverse correction. For $I$ containing $J$ and the child
candidate set, Woodbury gives}\looseness=-1
\begin{equation}\label{eq:dynamic-partial-woodbury}
 (\mathbf{B}'^{-1})_{I,:}
 = (\mathbf{B}^{-1})_{I,:}-(\mathbf{B}^{-1})_{I,:}\mathbf{U}\mathbf{D}^{-1}(\mathbf{B}^{-1})_{J,:}.
\end{equation}
{Let $\mathsf M(a,b,c)$ be the cost of multiplying $a\times b$ and
$b\times c$ matrices. Splitting the long dimensions into square blocks,
with $d=c2^\ell$ for constant $c$, gives the recomputation and amortized costs}
\begin{equation}
\label{eq:dynamic-level-cost}
\begin{aligned}
 O\left(\mathsf M(d,N,d)+\mathsf M(d,d,N)+d^\omega\right)
 &=O\left(N(2^\ell)^{\omega-1}\right),\\
 \sum_{\ell=0}^L\frac{O(N(2^\ell)^{\omega-1})}{2^\ell}
 &=O\left(N\sum_{\ell=0}^L(2^\ell)^{\omega-2}\right)
 =\tO(N^{\omega-1}).
\end{aligned}
\end{equation}
Initialization costs $O(N^\omega)$.
Let $\mathcal K$ be a sufficiently large fixed power of
$N(1+K)(1+K_C)$ bounding endpoint matrices and update factors. It also
bounds the small inverse, since
$\mathbf{D}^{-1}
 =\mathbf{I}-\mathbf{V}^\top \mathbf{B}'^{-1}\mathbf{U}$ and
$\|\mathbf{D}^{-1}\|_F\le\sqrt{|J|}+\|\mathbf{V}\|_F
\|\mathbf{B}'^{-1}\|_F\|\mathbf{U}\|_F$.

{Although \cite[Theorem A.1(4)]{ABGZ2024} assumes an offline
index sequence, its Woodbury error analysis depends only on the norm
bounds, not on advance knowledge of the indices. We apply this analysis
to the selected rows made available by the preceding hierarchy.
Woodbury maps $(\mathbf{B}+\mathbf{E})^{-1}$ to
$(\mathbf{B}'+\mathbf{E})^{-1}$. An inverse-rounding error $\mathbf{F}$ with
$\|\mathbf{F}\|_F\le\delta$ produces
$(\mathbf{B}'+\mathbf{E}+\mathbf{E}')^{-1}$, where
$\|\mathbf{E}'\|_F\le2\|\mathbf{B}'+\mathbf{E}\|_2^2\delta
\le\mathcal K^{O(1)}\delta$,
provided $\delta\|\mathbf{B}'+\mathbf{E}\|_2\le1/2$.
Polynomial factor bounds permit local tolerances with
$\|\mathbf{E}'\|_F\le\eta$. Complete the selected rows virtually with the
exact uncomputed rows of the same Woodbury expression. This full inverse
has the same Frobenius error bound, and every child uses rows of this
common parent. Induction through at most $L+1$ levels from rebuild to
answer gives}
\begin{equation}\label{eq:dynamic-backward-hierarchy}
\begin{aligned}
 \|\mathbf{E}_\ell\|_F\le(L-\ell+1)\eta,\quad
 \|(\mathbf{A}+\mathbf{E}_0)^{-1}-\mathbf{A}^{-1}\|_F
 \le 2\|\mathbf{A}^{-1}\|_2^2\|\mathbf{E}_0\|_F
 \le2\mathcal K^2(L+1)\eta.
\end{aligned}
\end{equation}
Choose $\eta\le\epsilon/[8\mathcal K^2(L+1)]$. Then, stable products and inversion need
only polylogarithmic guard bits beyond $\log(\mathcal K/\epsilon)$. \cite{ABGZ2024} uses
$\tO(\log(\mathcal K/\epsilon))$ bit operations per arithmetic operation.
\looseness=-1
\end{proof}

\section{Finite-precision proofs}
\label{app:precision}

Use the global caps $T_*,V$ and local error $\epsilon$ from
Section~\ref{sec:precision}.

\subsection{Computing the anchor subspace}
\label{app:precision-penalty}

\begin{lemma}
\label{lem:penalty-bound}
Let $\mathbf{S}=\mathbf{S}^\top\in\R^{s\times s}$, let $\mathbf{H}$ have at most $s$ normalized rows,
and suppose a $k$-dimensional subspace $E\subseteq\ker \mathbf{H}$ satisfies
$h^\top \mathbf{S}h\le0$ for every $h\in E$.
Assume $\norm {\mathbf{S}}_2\le W$. For $0<\epsilon_a,\delta_a<1/10$, one can compute
a dyadic $s\times k$ matrix $\mathbf{Z}$ satisfying $\norm{\mathbf{Z}^\top \mathbf{Z}-\mathbf{I}}_2\le\epsilon_a$,
$c^\top \mathbf{Z}^\top \mathbf{S}\mathbf{Z}c\le\epsilon_a\norm{\mathbf{Z}c}_2^2$, and
$\norm{\mathbf{H}\mathbf{Z}c}_2\le2\delta_a\norm{\mathbf{Z}c}_2$ in {expected} $\tO(s^\omega)$ bit operations whenever
$W,\epsilon_a^{-1},\delta_a^{-1}\le n^{O(1)}$.
\end{lemma}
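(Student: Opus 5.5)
We plan to reduce the constrained problem to an unconstrained symmetric eigenvalue problem through a penalty matrix, and then apply gap-free approximate diagonalization. Set $\mathbf T=\mathbf S+\Lambda_p\mathbf H^\top\mathbf H$ with a penalty $\Lambda_p=\operatorname{poly}(n)$ chosen below. For $h\in E$ we have $\mathbf Hh=0$, so $h^\top\mathbf Th=h^\top\mathbf Sh\le0$. The nonpositive subspace $E$ therefore witnesses, by Courant--Fischer, that $\lambda_k(\mathbf T)\le0$, where eigenvalues are listed in increasing order. We take $\mathbf Z$ to be the approximate eigenvectors for the $k$ smallest eigenvalues of $\mathbf T$. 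Every vector in their exact span $U$ satisfies $h^\top\mathbf Th\le0$, so
\[
\Lambda_p\norm{\mathbf Hh}^2\le -h^\top\mathbf Sh\le W\norm h^2.
\]
Choosing $\Lambda_p\ge W/\delta_a^2$ then gives $\norm{\mathbf Hh}\le\delta_a\norm h$, together with $h^\top\mathbf Sh\le0$. Both target inequalities therefore hold exactly on $U$, with slack: a factor $2$ in the $\mathbf H$ bound and the whole $\epsilon_a$ in the $\mathbf S$ bound. The norm bound $\norm{\mathbf T}_2\le W+\Lambda_p s$ stays polynomial, since the rows of $\mathbf H$ are normalized.

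The computational step calls the randomized gap-free approximate diagonalization of \cite{Shah2024} (building on pseudospectral shattering). On $\mathbf T$ with a dyadic representation of polynomial bit length, it returns, in expected $\tO(s^\omega\log^{O(1)}(n))$ bit operations, an approximate eigendecomposition $\mathbf T\approx\mathbf V\boldsymbol\Lambda\mathbf V^\top$. Here $\mathbf V$ is nearly orthogonal, and the backward error is $\norm{\mathbf T-\mathbf V\boldsymbol\Lambda\mathbf V^{-1}}\le\epsilon'$ for an inverse-polynomial $\epsilon'$. We sort the approximate eigenvalues and take the $k$ columns with the smallest values. We then re-orthonormalize them by a stable QR or Cholesky step in $O(s^\omega)$ time and round to a dyadic grid, which yields $\norm{\mathbf Z^\top\mathbf Z-\mathbf I}\le\epsilon_a$.

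To certify the output without any spectral gap, we argue through Ritz values rather than subspace angles. Because $\mathbf Z$ nearly diagonalizes $\mathbf T$ on its span, $\mathbf Z^\top\mathbf T\mathbf Z$ is within $\epsilon''=\operatorname{poly}(n)\epsilon'$ of $\operatorname{diag}(\lambda_1,\dots,\lambda_k)$ of $\mathbf T$. Weyl's inequality applied to the backward error gives $\lambda_i\le\lambda_k(\mathbf T)+\epsilon'\le\epsilon'$, so $c^\top\mathbf Z^\top\mathbf T\mathbf Zc\le\epsilon''\norm{\mathbf Zc}^2$. The penalty argument above, now with slack, then gives
\[
\Lambda_p\norm{\mathbf H\mathbf Zc}^2\le(W+\epsilon'')\norm{\mathbf Zc}^2
\qquad\text{and}\qquad
c^\top\mathbf Z^\top\mathbf S\mathbf Zc\le\epsilon''\norm{\mathbf Zc}^2\le\epsilon_a\norm{\mathbf Zc}^2 .
\]
The first yields $\norm{\mathbf H\mathbf Zc}\le2\delta_a\norm{\mathbf Zc}$. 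Nowhere do we need closeness to the true eigenspace, which is ill-defined when eigenvalues cluster. Both conclusions can be verified by interval matrix products in $\tO(s^\omega)$ time. A failed check triggers a fresh attempt with new randomness, and a constant success probability per attempt gives the expected bound.

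We expect the main obstacle to be this certification step: obtaining a Ritz-value bound from the backward-error guarantee of \cite{Shah2024} while the eigenvector matrix is only approximately orthogonal. We will first try to control $\kappa(\mathbf V)$ polynomially and then work entirely through $\mathbf V^{-1}\mathbf T\mathbf V$. If that fails, the fallback is a Cholesky-based Rayleigh--Ritz projection followed by a second small diagonalization.
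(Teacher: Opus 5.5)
Your proposal matches the paper's proof. Both use the penalty matrix $\mathbf S+\lambda\mathbf H^\top\mathbf H$, the min-max bound $\lambda_k\le0$ witnessed by $E$, Shah's gap-free diagonalization, Weyl's inequality on the $k$-th sorted approximate eigenvalue followed by the penalty slack argument, and verified retries for the expected cost. The paper takes $\lambda=16(W+1)/\delta_a^2$, so the penalty absorbs the numerical slack even when $W$ is small. It also resolves the obstacle you anticipate by passing to the polar factor $\mathbf Q=\mathbf U(\mathbf U^\top\mathbf U)^{-1/2}$ of the nearly orthogonal output, so that Weyl applies to the symmetric matrix $\mathbf Q\mathbf D\mathbf Q^\top$, and it uses a real symmetric variant of Shah's algorithm with real Gaussian deflation so that $\mathbf Z$ is real.
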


\begin{proof}
Set $\lambda=16(W+1)/\delta_a^2$ and $\mathbf{K}=\mathbf{S}+\lambda \mathbf{H}^\top \mathbf{H}$.
Since $h^\top \mathbf{K}h\le0$ on $E$, min-max gives $\lambda_k(\mathbf{K})\le0$.
Compute $\mathbf{K}=\mathbf{U}\mathbf{D}\mathbf{U}^\top+\mathbf{E}_K$ with $\mathbf{D}$ increasing,
$\|\mathbf{E}_K\|_2\le\epsilon_a/100$, and
$\|\mathbf{U}^\top \mathbf{U}-\mathbf{I}\|_2\le\epsilon_a/[100(1+\|\mathbf{K}\|_2)]$.
Round the first $k$ columns of $\mathbf{U}$ with operator error at most
$\epsilon_a/[1000(1+\|\mathbf{K}\|_2)]$ to obtain $\mathbf{Z}$.
For analysis only, set $\eta_U=\epsilon_a/[100(1+\|\mathbf{K}\|_2)]$
and take the orthogonal polar factor
$\mathbf{Q}=\mathbf{U}(\mathbf{U}^\top\mathbf{U})^{-1/2}$.
Since $\|\mathbf{Q}-\mathbf{U}\|_2\le\eta_U$ and
$\|\mathbf{D}\|_2\le2(1+\|\mathbf{K}\|_2)$,
$\|\mathbf{K}-\mathbf{Q}\mathbf{D}\mathbf{Q}^\top\|_2\le\epsilon_a/10$.
Weyl's inequality gives $D_{kk}\le\epsilon_a/10$, so
$\mathbf{Q}_k^\top\mathbf{K}\mathbf{Q}_k\preceq(\epsilon_a/5)\mathbf{I}$.
The rounding bound gives $\|\mathbf{Z}-\mathbf{Q}_k\|_2\le2\eta_U$,
hence $\|\mathbf{Z}^\top\mathbf{Z}-\mathbf{I}\|_2\le\epsilon_a$ and
$c^\top\mathbf{Z}^\top\mathbf{K}\mathbf{Z}c
\le\tfrac12\epsilon_a\|\mathbf{Z}c\|_2^2$.
For $h=\mathbf{Z}c$, the remaining conclusions follow from\looseness=-1
\begin{align*}
 h^\top\mathbf{S}h&\le h^\top\mathbf{K}h\le\epsilon_a\|h\|_2^2,\qquad
 \lambda\|\mathbf{H}h\|_2^2
 \le(W+\epsilon_a)\|h\|_2^2\le\lambda\delta_a^2\|h\|_2^2.
\end{align*}
Assembly errors $e_S=\|\widehat{\mathbf{S}}-\mathbf{S}\|_2$ and
$e_H=\|\widehat{\mathbf{H}}-\mathbf{H}\|_2$ perturb $\mathbf{K}$ by at most
$e_S+\lambda(2\|\mathbf{H}\|_2e_H+e_H^2)$.
Choose $e_S\le\epsilon_a/1000$ and
$e_H\le\epsilon_a/[1000\lambda(1+\|\mathbf{H}\|_2)]$, and include these
errors in the decomposition residual. Since $\|\mathbf{K}\|_2\le W+\lambda s$,
all norms and inverse tolerances need $O(\log n)$ bits.

{We use a real symmetric version of Shah's gap-free diagonalization
\cite{Shah2024}, with the following modification to its Gaussian deflation.
For an $r\times r$ matrix $\mathbf{B}$ of independent real standard Gaussians,
$\sigma_{\min}(\mathbf{B})\le x$ implies that some column is within
$\sqrt r\,x$ of the span of the others: choose a unit minimizing vector and
a coordinate of magnitude at least $r^{-1/2}$. Conditional on the other
columns, this distance has the law $|N(0,1)|$, hence
$\Pr[\sigma_{\min}(\mathbf{B})\le x]\le Cr^{3/2}x$.
For a real matrix $\mathbf A$ of rank $r\ge1$ with right singular basis $\mathbf V$,
orthogonal invariance makes $\mathbf{V}^\top\mathbf{G}_{:,1:r}$ such a matrix.
If $\mathbf{G}$ has size $d\times d$, set
$x=\rho_f/(4Cr^{3/2})$ and
$L_G=\sqrt{2\log(8d^2/\rho_f)}$.
With failure probability at most $\rho_f/2$,
$\sigma_{\min}(\mathbf{V}^\top\mathbf{G}_{:,1:r})>x$
and $\|\mathbf{G}\|_2\le dL_G$.
Approximate the Gaussian matrix to operator error at most $c\beta x$;
on this event $\|\mathbf{G}\|_2\ge x$, so this is sufficiently small for
the deflation estimate, for a universal $c>0$.
The deterministic deflation estimate in \cite[Section~3]{Shah2024} then gives
subspace error at most $\eta_f$ when its input error satisfies}
\[ \beta\le
 \frac{\eta_f\sigma_r(\mathbf{A})x}
 {12(\sigma_1(\mathbf{A})+2)dL_G}.\]
{For spectral projectors, both singular values in this expression
equal one. Allocate $\rho_f=1/(16s)$ to each of the at most $2(s-1)$
deflations, and choose the real spectral shifts with total failure probability
at most $1/8$. The recursive error allowances are inverse polynomial,
so the smaller $\beta$ requires only $O(\log n)$ additional bits and
polylogarithmic factors in the iteration count. Use real matrix products
and real QR \cite{DDH2007}, and symmetrize every intended symmetric
intermediate matrix. Symmetrization preserves operator error before
rounding; its rounding error has polynomial amplification. Real shifts,
polynomial matrix-sign iterations, transposes, and recursive products
therefore give real $\mathbf{U}$ and $\mathbf{D}$.}

{Accept only when Frobenius upper bounds for the stored residual
and orthogonality defect, including assembly error, imply the required bounds.
Requesting $\sqrt s$ smaller operator errors gives verification slack;
stable products \cite{DDH2007} and dyadic summation cost $\tO(s^\omega)$.
Cap each attempt's arithmetic at its successful-run bound and Gaussian digit
generation at a constant multiple of its mean cost. Markov's inequality bounds
the latter failure by $1/8$, so independent attempts succeed with probability
at least $1/2$. Repetition gives expected $\tO(s^\omega)$ work.
With at most $C_A\log(2n)$ attempts, sufficiently large $C_A$ makes exhaustion
probability at most $n^{-A}$ for any fixed $A>0$. Abort the trial on exhaustion;
verification prevents an incorrect basis from being used, including for small $s$.}
\end{proof}

Apply the lemma to $\mathbf{S}=\mathbf{P}-(1-\rho)\mathbf{D}$ with normalized
nonzero large-row and protected or heavy-row normals. The grid and spread condition
give $|x_j|\le1-\Delta_x$ and $|b_{ij}|/r_i\le1/8$, so
\begin{align*}
 |g_{ij}|=\frac{|b_{ij}|}{r_i}
 \left|\sigma-8\frac{b_{ij}}{r_i}x_j\right|
 \ge\Delta_x\frac{|b_{ij}|}{r_i},
 \quad \norm{g_i}_2\ge\Delta_x\frac{\sqrt{q_i}}{r_i}>\Delta_x/\gamma.
\end{align*}
Large-row normals have norm greater than $R_{\max}$, so normalization
needs $O(\log(\Delta_x^{-1})+\log n)$ bits. Using $t_i,\|\mathbf{R}\|_2\le V$,
$\|g_i\|_2\le2$, $\|z_i\|_1\le1$, and $\sum_i z_{ij}\le\Lambda$ in
\eqref{eq:batch-curvature} gives
$\|\mathbf{S}\|_2\le C(1+\Lambda)^2n^2(1+V)^3=:W$.
For $\mathbf{A}_b=b\mathbf{I}-\mathbf{M}$ and a perturbation $\mathbf{E}_b$ with
$V\norm{\mathbf{E}_b}_2\le1/2$, the inverse identity gives
$\norm{(\mathbf{A}_b+\mathbf{E}_b)^{-1}-\mathbf{R}}_2\le 2V^2\norm{\mathbf{E}_b}_2$ and $\norm{(\mathbf{A}_b+\mathbf{E}_b)^{-1}\ones-v}_2
 \le2\sqrt{s}\,V^2\norm{\mathbf{E}_b}_2$.
Gradient normalization amplifies errors by at most $C\gamma/\Delta_x$.
The at most $2n^2$ row terms, $r_i^{-1}\le q_*^{-1/2}$, and the bounds on
$\lambda$ and inverses give polynomial amplification in
$n,V,q_*^{-1},\Delta_x^{-1},\delta_a^{-1}$. Divide scalar tolerances by
this factor before assembling the anchor matrices. The sampler then
perturbs $\mathbf{Z}$ by independent Gaussians and rounds it.
For $\widehat {\mathbf{Z}}=\mathbf{Z}+\mathbf{E}_Z$ with $\|\mathbf{E}_Z\|_2\le\eta$,\looseness=-1
\begin{align*}
 \norm{\widehat {\mathbf{Z}}^\top \mathbf{S}\widehat {\mathbf{Z}}-\mathbf{Z}^\top \mathbf{S}\mathbf{Z}}_2
 &\le 2\norm {\mathbf{S}}_2\norm {\mathbf{Z}}_2\eta+\norm {\mathbf{S}}_2\eta^2,
 \\
 \norm{\mathbf{H}\widehat {\mathbf{Z}}-\mathbf{H}\mathbf{Z}}_2&\le\norm {\mathbf{H}}_2\eta.
\end{align*}
Choose $\eta$ polynomially smaller than
$\min\{\epsilon_a/(W+1),\delta_a/(\|\mathbf{H}\|_2+1)\}$.
The smoothed basis satisfies twice the original tolerances.
Fixed dyadic grids keep summation growth to $O(\log n)$ bits and prevent
denominator growth. The scale rule in Appendix~\ref{app:rows} and the
fixed-degree weights also need $O(\log n)$ bits. Exact clipping needs
$O(p)$ temporary bits for $p$-bit coordinates and preserves attained endpoints.

\subsection{Transcript coupling}
\label{app:precision-rounding}

\begin{proof}[Proof of Proposition~\ref{prop:finite-implementation}\textup{(i)}]
For the quantizer $q_{\Delta,U}(a)=\Delta\lfloor a/\Delta+U\rfloor$ of
Section~\ref{sec:precision}, $|q_{\Delta,U}(a)-a|\le\Delta$ and, for every
fixed $a$ and $e>0$,
\begin{equation}
\label{eq:dither-boundary}
 \Pr_U\bigl[\operatorname{dist}(a+\Delta U,\Delta\mathbb Z)
                     \le e\bigr]
 \le\frac{2e}{\Delta}+\frac2M.
\end{equation}
{Use a reference with continuous Gaussians, exact normalized outputs,
and the same dithers, endpoint signs, dyadic coloring rules, and shared dyadic
anchor bases. Include smoothing approximation and rounding errors in
$\mathbf E_{\mathrm{round}}$ in \eqref{eq:sampler-smoothed-basis}.
Public geometric, dependence, work-cap, or dither-boundary conditions may stop it.
Its history excludes hidden conditioning and magnitude-guard outcomes.
Apply the conditional sampler estimates in this history. For success, intersect
with the unconditional event $G$ of successful reference execution and polynomial
Gaussian and matrix bounds with guard slack. No estimate conditions on $G$.}

{Approximate the reference Gaussian coefficients within $\delta_G$
using the same lazy random-bit streams. On $G$, until the first stored-output
disagreement, constraint rows agree, so}
\begin{align*}
 \norm{\widehat {\mathbf{Q}}_t-\mathbf{Q}_t}_2\le n^{C_1}\delta_G,\quad
 \norm{\widehat {\mathbf{Q}}_t^{-1}-\mathbf{Q}_t^{-1}}_2
 \le
 \frac{\norm{\mathbf{Q}_t^{-1}}_2^2\norm{\widehat {\mathbf{Q}}_t-\mathbf{Q}_t}_2}
 {1-\norm{\mathbf{Q}_t^{-1}}_2\norm{\widehat {\mathbf{Q}}_t-\mathbf{Q}_t}_2}
 \le 2n^{2C_Q+C_1}\delta_G,
\end{align*}
provided $n^{C_Q+C_1}\delta_G\le1/2$.
{Unconditional Gaussian tails bound magnitude-guard failures;
they do not truncate the reference law. On $G$, near-isometry gives
$(2\|\mathbf{Q}_t\|_2)^{-1}\le\|z\|_2\le2\|\mathbf{Q}_t^{-1}\|_2$,
making normalization polynomially Lipschitz. Taking $\delta_G$ and the dynamic-inverse error polynomially
smaller than $e$ bounds every output error by $e/2$.
For an explicit batch, compute the orthonormal basis by stable Gram-Schmidt
or QR on the near-isometric $\mathbf Z$. The reference uses the exact
orthonormalization of the shared dyadic basis. Since
$\Pr[\|g\|_2\le t]\le\Pr[|g_1|\le t]\le\sqrt{2/\pi}\,t$,
increasing $C_g$ makes all explicit normalization denominators inverse polynomial
with the required probability. The same polynomial error control then applies.
The perturbation estimate, with
the bounded output blocks $\mathbf{Z},\mathbf{Z}^\top\mathbf{Z}$, gives
polynomial norms for the rounded augmented matrix and its inverse,
so the finite guards pass by their slack.}

{Let $D$ be the event that some executed reference scalar $a$ satisfies
$\operatorname{dist}(a+\Delta U,\Delta\mathbb Z)\le e$.
Each new dither is independent of this exact scalar conditional on the preceding
transcript, so \eqref{eq:dither-boundary} and the choices in Section~\ref{sec:precision} give}
\begin{align*}
 \Pr[D]&\le N_{\mathrm{out}}\left(\frac{2e}{\Delta}+\frac2M\right)
 \le\frac12 n^{-C_f}.
\end{align*}
{On $G\cap D^c$, the finite scalar differs from the exact scalar by at most
$e/2$, so its boundary check passes and both processes store the same dyadic value.
Induction gives identical colorings, newly fixed coordinates, constraint rows,
insertion choices, and public stopping decisions. Thus the finite trial succeeds
whenever the reference succeeds on this event, and}
\begin{align*}
 \Pr[\text{finite trial succeeds}]&\ge\Pr[G\cap D^c]
 \ge\Pr[G]-\Pr[D].\qedhere
\end{align*}
\end{proof}

The physical normal associated with $\widehat c$ obeys
$\widetilde g=\mathbf{Z}(\mathbf{Z}^\top \mathbf{Z})^{-1}\widehat c$ and
$\|\widetilde g-h\|_2\le\|\mathbf{Z}(\mathbf{Z}^\top \mathbf{Z})^{-1}\|_2\|\widehat c-c\|_2
 \le2\sqrt k\,\Delta$.
Projection into the feasible space gives angular error
$O(\sqrt k\Delta)$. Masking fixed coordinates cannot increase the error,
since the exact direction vanishes there.
{The comparison above transfers the reference success bounds to the finite
algorithm with fixed precision and an operation cap.} Karney's sampler
\cite{Karney2016} generates Gaussian digits in expected time linear in their number.
Choose tolerances giving Euclidean error at most $\epsilon$ per
contribution after amplification.\looseness=-1
 
\subsection{Propagation of numerical errors}
\label{app:precision-robustness}

\begin{proof}[Proof of Proposition~\ref{prop:finite-implementation}\textup{(ii)}]
For a protected signed row, the hard constraint and local errors give
$|g_i(x_0)^\top(x_{t+1}-x_t)|\le C\epsilon$. Use the quadratic identity
\eqref{eq:barrier-quadratic} with $y=x-x_0$ and sum at most $b_{\max}$ exposures:
\begin{align}
 g_i(x_0)^\top y&=\sum_tg_i(x_0)^\top(x_{t+1}-x_t)
 \le Cb_{\max}\epsilon.
 \label{eq:finite-fixed-identity}
\end{align}
The fixed weight $\nu(u_i(x_0)+\mu_{\mathrm{batch}})$, where
$\mu_{\mathrm{batch}}=Cb_{\max}\epsilon$, therefore dominates the actual
weight throughout the batch without row scans.
For $u\le0$, $\nu'(u)/\nu(u)\le1/2$. Small error allowances make buffering
change weights by a relative factor at most
$e^{\mu_{\mathrm{batch}}/2}-1\le\mu_{\mathrm{batch}}$. The resulting
nonnegative matrix increment $\Delta\mathbf{M}$ satisfies
$0\preceq\Delta \mathbf{M}\preceq\mu_{\mathrm{batch}}\mathbf{M}$ and
$\norm{\Delta \mathbf{M}}_2\le2\mu_{\mathrm{batch}}$.
For $\|\mathbf{R}\|_2\|\Delta \mathbf{M}\|_2\le1/2$,
\begin{align*}
 \bigl\|(\mathbf{I}-\mathbf{R}^{1/2}\Delta \mathbf{M} \mathbf{R}^{1/2})^{-1}\bigr\|_2\le2,\quad
 F(\mathbf{M}+\Delta \mathbf{M})-F(\mathbf{M})
 =v^\top\Delta \mathbf{M}(\mathbf{I}-\mathbf{R}\Delta \mathbf{M})^{-1}v\le C\mu_{\mathrm{batch}}V^2.
\end{align*}
Before buffering, approximate $t_i$ to error at most $\Theta/8$ and fix
rows with $\widehat t_i\ge3\Theta/4$. Selected rows have
$t_i\ge5\Theta/8\ge\Theta/2$, preserving the count, while unselected
rows have $t_i<7\Theta/8$. The entrywise resolvent identity and
$\Delta \mathbf{M}v\le\mu_{\mathrm{batch}}\mathbf{M}v\le2\mu_{\mathrm{batch}}V\ones$
bound the resolvent vector's growth by $(1-2\mu_{\mathrm{batch}}V)^{-1}$.
Choose the error allowance so this factor is at most $8/7$. The remaining
light rows then satisfy $t_i\le\Theta$, as curvature transport requires. For potential drift, \eqref{eq:curv-additive-stability} bounds the Hessian on the anchor span by
$2\epsilon_a/\rho$ times the identity throughout the enlarged window.
Directions within $\epsilon$ of this span add at most
$C\norm{\nabla^2F}_2\epsilon$. The resolvent formulas bound
$\norm{\nabla F}_2\le C(1+V)^2$ and
$\norm{\nabla^2F}_2\le C(1+V)^3$.
Since $\nu^{(r)}(u)\le C_r\nu(u)$ and $\|g_i\|_2\le2$, the first two
matrix derivatives have constant bounds. Each inverse derivative adds at
most one factor $\|\mathbf{R}\|_2\le V$.

Condition on the direction before drawing the fair sign. Symmetric Taylor
expansion then gives
$\E_\sigma F(x+\sigma th)-F(x)\le C\tau^2(1+V)^3\epsilon$,
where $\epsilon$ includes anchor error. Endpoint rounding adds at most
$C(1+V)^2\epsilon$. Summing exposure and buffer costs over total batch
capacity $O(T_*+n)$ gives drift at most
$CT_*(1+V)^3\epsilon$. Adding the remaining error allowance restores a
nonnegative supermartingale and the potential bound's constant success
probability.
The condition $u_i\le0$ has slack. Newly medium rows begin far below $-1$.
Unprotected rows start each batch below $-1$, and the enlarged
window raises their barriers by at most $1/4$. Across consecutive protected batches,
\eqref{eq:finite-fixed-identity} bounds the cumulative increase. Rows
leaving protection are again below $-1$, so $CT_*\epsilon<1/100$ keeps
each protected episode below $-1/2$. Downward row reductions preserve
this margin. Buffered barriers stay negative, avoiding decisions at $u=0$.
For large rows, normalized hard constraints limit drift to
$C\|b_i\|_2\epsilon\le C\sqrt n\epsilon$ per exposure, so
$|d_i|\le C\sqrt n\,T_*\epsilon$ before the row becomes medium.
Choose $\delta_d=C\sqrt n\,T_*\epsilon\le10^{-3}$.
The scalar comparison proving \eqref{eq:birth-deposit}, with $H_0$
replaced by $H_0-\delta_d$, remains increasing in $r\le R_{\max}$.
Its largest deposit ratio is therefore at most
$e^{\delta_d/(2R_{\max})}(1000/1009)<1$.
Deposit replacement still decreases the majorant in Loewner order,
and negative barriers permit the same subsequent scale reductions.
Let $\mathbf{G}$ contain all unit query directions in a batch, including
the final query if it has no insertion, and let $\mathbf{H}$ contain
their scaled, rounded proposals. The pairwise argument of
\eqref{eq:sampler-self-gram} gives
$\|\mathbf{G}\mathbf{G}^\top-\mathbf{I}\|_2\le Cb_{\max}\epsilon$.

Each exact direction vanishes on the coordinates fixed before its query,
so masking its rounded proposal changes only its error. Summing these
row errors gives $\|\mathbf{H}-3\mathbf{G}/4\|_F\le C\sqrt{b_{\max}}\epsilon$,
even though the masks vary. Thus $\|\mathbf{H}\|_2^2\le2$, preserving the
window proof's squared-projection bound.
Coordinate rounding adds at most $T_*\epsilon$
to any unit linear functional.
Expose each direction and clipping length before drawing its independent
fair endpoint sign, and stop before a proposal violates the Gram or cap
bounds. The sign-only argument of Proposition~\ref{prop:walk-success} then
gives \eqref{eq:walk-retirement-tail}, with changed absolute constants.
Coordinate rounding adds drift at most $T_*\epsilon\sqrt{q_*}$.
For sufficiently small $c_*$, $q_*=c_*L^{-4}$ preserves the retirement
failure allowance over fewer than $n^2$ rows.
Endpoint rounding changes energy by at most $C\sqrt n\epsilon$ per
exposure. The norm and clipping bounds persist, so
Lemma~\ref{lem:walk-query-count} applies with energy allowance
$N+C\sqrt nT_*\epsilon$.
\end{proof}

\end{document}